\documentclass[
aps,prx,onecolumn,10pt,longbibliography,
  nofootinbib,superscriptaddress
]{revtex4-2}

\usepackage[T1]{fontenc}
\usepackage{newtxtext}
\usepackage{amsmath,mathtools}
\usepackage{amsthm}
\usepackage{newtxmath}
\usepackage{microtype}
\usepackage{enumitem}
\usepackage{graphicx}
\usepackage{xcolor}
\usepackage[colorlinks=true,linkcolor=teal,citecolor=teal,urlcolor=teal]{hyperref}

\hypersetup{pdftitle={Nonadditivity in fermionic quantum Shannon theory},
 pdfauthor={Farzin Salek, Amirreza Negari, Graeme Smith, Jens Eisert, Debbie Leung}}

\numberwithin{equation}{section}

\newtheorem{theorem}{Theorem}[section]
\newtheorem{proposition}[theorem]{Proposition}
\newtheorem{lemma}[theorem]{Lemma}
\newtheorem{corollary}[theorem]{Corollary}
\theoremstyle{definition}
\newtheorem{definition}[theorem]{Definition}

\theoremstyle{remark}
\newtheorem{remark}[theorem]{Remark}

\newcommand{\R}{\mathbb R}
\newcommand{\C}{\mathbb C}
\newcommand{\id}{\mathbf 1}
\newcommand{\ii}{\mathrm i}
\newcommand{\Tr}{\operatorname{Tr}}
\newcommand{\rank}{\operatorname{rank}}
\newcommand{\Pf}{\operatorname{Pf}}
\newcommand{\conv}{\operatorname{conv}}
\newcommand{\supp}{\operatorname{supp}}
\newcommand{\Span}{\operatorname{span}}
\newcommand{\diag}{\operatorname{diag}}
\newcommand{\fG}{\mathrm{fG}}

\newcommand{\Jman}{\mathcal J}
\newcommand{\eps}{\varepsilon}
\newcommand{\ket}[1]{|#1\rangle}
\newcommand{\bra}[1]{\langle #1|}
\newcommand{\braket}[2]{\langle #1|#2\rangle}
\newcommand{\ketbra}[1]{|#1\rangle\!\langle #1|}
\newcommand{\abs}[1]{\left|#1\right|}
\newcommand{\norm}[1]{\left\|#1\right\|}
\newcommand{\set}[1]{\left\{#1\right\}}

\newcommand{\transpose}{\mathsf T}
\newcommand{\ev}{\mathrm{even}}

\newcommand{\EF}{E_{\mathrm F}}
\newcommand{\EFalpha}{E_{\mathrm F,\alpha}}

\newcommand{\gA}{\gamma^{A}}
\newcommand{\gB}{\gamma^{B}}

\begin{document}

\title{Nonadditivity in fermionic quantum Shannon theory}
\affiliation{Perimeter Institute for Theoretical Physics,
Waterloo, Ontario N2L 2Y5, Canada}
\affiliation{Institute for Quantum Computing, University of Waterloo,
Waterloo, Ontario N2L 3G1, Canada}
\affiliation{Department of Applied Mathematics, University of Waterloo,
Waterloo, Ontario N2L 3G1, Canada}
\affiliation{Department of Combinatorics and Optimization, University of Waterloo,
Waterloo, Ontario N2L 3G1, Canada}

\affiliation{Dahlem Center for Complex Quantum Systems, Freie Universit\"at Berlin,
14195 Berlin, Germany}

\author{Farzin Salek}
\affiliation{Perimeter Institute for Theoretical Physics,
Waterloo, Ontario N2L 2Y5, Canada}
\affiliation{Institute for Quantum Computing, University of Waterloo,
Waterloo, Ontario N2L 3G1, Canada}
\affiliation{Dahlem Center for Complex Quantum Systems, Freie Universit\"at Berlin,
14195 Berlin, Germany}
\author{Amir-Reza Negari}
\affiliation{Perimeter Institute for Theoretical Physics,
Waterloo, Ontario N2L 2Y5, Canada}
\affiliation{Institute for Quantum Computing, University of Waterloo,
Waterloo, Ontario N2L 3G1, Canada}
\author{Graeme Smith}
\affiliation{Institute for Quantum Computing, University of Waterloo,
Waterloo, Ontario N2L 3G1, Canada}
\affiliation{Department of Applied Mathematics, University of Waterloo,
Waterloo, Ontario N2L 3G1, Canada}

\author{Jens Eisert}
\affiliation{Dahlem Center for Complex Quantum Systems, Freie Universit\"at Berlin,
14195 Berlin, Germany}

\author{Debbie Leung}
\affiliation{Perimeter Institute for Theoretical Physics,
Waterloo, Ontario N2L 2Y5, Canada}
\affiliation{Institute for Quantum Computing, University of Waterloo,
Waterloo, Ontario N2L 3G1, Canada}
\affiliation{Department of Combinatorics and Optimization, University of Waterloo,
Waterloo, Ontario N2L 3G1, Canada}

\begin{abstract}
Collective state preparation and channel encoding can reduce
entanglement requirements and increase communication rates. We study
these effects in fermionic systems under parity superselection, with
composition governed by the canonical anticommutation relations.
For a two-mode Gaussian state $\tau$, we prove nonadditivity of
entanglement of formation, obtaining
$\EF^{\mathrm{phys}}(\tau^{\otimes m})=\lceil m/2\rceil\log2$
for every $m\geq1$. Two copies therefore have the same formation
value as one, violating both additivity and strong superadditivity,
while regularization halves the single-copy value. For an associated
Gaussian channel, we determine the minimum output entropy and Holevo
information at every blocklength and prove nonadditivity of both.
Together, these results give exact answers to the fermionic
counterparts of the four additivity questions associated with
Shor and Hastings.
One physical use carries no classical information, whereas two uses
transmit one bit perfectly. The unassisted classical and zero-error
classical capacities both equal one half bit per mode. A sharp
finite-block bound gives a strong converse: the decoding success
probability vanishes at every rate above capacity.
These optimizations cover all physical pure-state decompositions
and channel inputs, with Gaussian ensembles and classical codes
attaining the optima.
For this state and channel, the formation and minimum output entropy
formulas hold for every R\'enyi order $\alpha\in(0,\infty]$.
We also determine the unassisted quantum and zero-error quantum
capacities, both equal to one half logical qubit per mode, and prove
a strong converse: entanglement fidelity vanishes at every rate
above capacity.
For a noisy Gaussian family, we determine the one- and two-copy
formation values and prove strict nonadditivity at full rank.
For the original state, Bell violation under local parity-preserving
measurements first appears at three copies.
\end{abstract}

\maketitle

\section{Introduction}
\label{sec:introduction}

Entanglement is a central resource in quantum information science that
renders quantum protocols more powerful than their classical counterparts.
Since the early days of the field, it has been an important question how
to meaningfully quantify entanglement \cite{PlenioVirmani2007} and thereby determine how much of
this useful resource 
\cite{ChitambarGour2019} is present in a given quantum state. Among these entanglement measures, the entanglement of formation takes
a particularly prominent role. It is defined as 
the minimum average entanglement in a
decomposition into pure quantum states of a mixed quantum state
\cite{Bennett1996MixedState,PlenioVirmani2007,Horodecki2009}.
Jointly decomposing several copies can reduce this average per copy.
In ordinary finite-dimensional quantum theory, the limiting value is
the entanglement cost~\cite{HaydenHorodeckiTerhal2001}. Additivity would
make that limit equal to the single-copy value.

The corresponding communication problem asks whether joint encoding
across channel uses can increase the Holevo information per use.
In ordinary finite-dimensional quantum theory, Shor has proved that the
universal additivity conjectures for entanglement of formation,
minimum output entropy, and Holevo information are equivalent to
strong superadditivity of formation \cite{Shor2004}.
Hastings has subsequently disproved these conjectures using random quantum
channels~\cite{Hastings2009}.
Here we have solved the analogous nonadditivity questions for a fermionic
example under parity superselection and composition governed by the
\emph{canonical anticommutation relations}  (CAR).
Starting from the entanglement of formation of a two-mode Gaussian quantum state,
we have determined the formation value, minimum output entropy, and Holevo
information at every blocklength. The optimizations cover all physical
decompositions into pure quantum states and channel inputs.
A common spectral constraint gives the converse
bounds, which explicit Gaussian ensembles and codes attain.
Related additivity questions for entanglement of purification and its
R\'enyi variants have led to explicit nonadditivity
examples~\cite{KhanianNegari2026Purification,NegariKhanian2026RenyiPurification}
and sufficient conditions for additivity at R\'enyi order
two~\cite{FarajiKhanian2026Renyi2Purification}.

Superselection rules restrict local state preparation and change how
entanglement and other nonlocal resources can be used
\cite{VerstraeteCirac2003,SchuchVerstraeteCirac2004Resources}.
In fermionic systems, parity superselection forbids a coherent
superposition of vacuum and occupation in an isolated mode. It does
allow correlations between odd operators on different modes, since their
product is even~\cite{PhysRevA.76.022311,Dariano2014,Szalay2021,kalarde2026fermionsfundamentallynonlocalbosons}.
Joint encoding can use these correlations even when each individual input
mode is uninformative. The canonical anticommutation relations 
specify how the correlations transform when uses are combined.
Products of quantum states and channel uses refer to this fermionic composition.
We denote it by $\otimes$ and explain its occupation-basis representation in
Sec.~\ref{sec:preliminaries}.

The distinction is operational: an isolated-mode description does not
specify the action on globally even products of odd operators from
different modes.  Our channel erases all physical single-mode signals
but retains such correlations across uses.  The resulting nonadditivity
is governed by the algebra of the retained Majoranas, which supports
both classical codes and noiseless logical quantum registers.

Our constructions use Gaussian states and channels.  A Gaussian state is entirely determined
by its Majorana two-point functions, and its dynamics under a
quadratic Hamiltonian can be described at the covariance-matrix level
\cite{Araki1970,Bravyi2005FLO,TerhalDiVincenzo2002}.
This description also gives modewise formulas for pure-state entanglement
\cite{BoteroReznik2004} and a characterization of Gaussian channels
\cite{GreplovaGiedke2018}.
Gaussian states and their superpositions are also used in approximation
algorithms for interacting fermions
\cite{BravyiGosset2017,BravyiGossetKoenigTemme2019}, and Gaussian
descriptions can emerge locally under suitable quadratic dynamics
\cite{GluzaKrumnowFriesdorfGogolinEisert2016}.
More recently, a covariance-matrix criterion for extendibility and finite
de Finetti bounds were obtained for fermionic Gaussian
states~\cite{NegariSalek2025}.
Covariance constraints also underlie recent product-state, de Finetti,
and recoverability bounds for fermionic Gaussian
systems~\cite{NegariEtAl2026Approximation}.

Mixed-state entanglement poses further difficulties.  The usual partial
transpose of a Gaussian density matrix, for example, need not be Gaussian,
which complicates calculations and bounds for negativity
\cite{EislerZimboras2015,EisertEislerZimboras2018}.
For formation, averaging covariance matrices is insufficient: an
ensemble must reproduce the full state, including its higher
correlations.  We construct such an ensemble explicitly and prove a
lower bound that does not assume Gaussian decomposition states.

To be specific, let
$\EF^{\mathrm{phys}}$ denote the formation roof over pure states of
definite total parity, and let $\EF^{\fG}$ restrict the decomposition to
Gaussian states.  For a state $\tau$ with one mode at 
a party referred to as
Alice and one at 
a party called
Bob,
we prove
\begin{equation}\label{eq:main-result-introduction}
 \EF^{\mathrm{phys}}(\tau^{\otimes m})
 =\EF^{\fG}(\tau^{\otimes m})
 =\left\lceil\frac m2\right\rceil\log2
 \qquad(m\geq1).
\end{equation}
Here $m$ is the number of copies jointly decomposed.  The single-copy
value is $\log2$, and the regularized value is $(\log2)/2$.
The construction fixes the correlated Majorana pairs and matches the
remaining Majoranas locally as far as possible.  Averaging over the signs
on the free pairs reproduces the full density operator.  A covariance-rank
bound first proves optimality among Gaussian decompositions.  For an
arbitrary physical pure state in the same support, the marginal commutes
with $m+1$ anticommuting
Hermitian unitaries.  These symmetries force each marginal eigenvalue to
occur at least $2^{\lceil m/2\rceil}$ times, giving the entropy lower bound
in Eq.~\eqref{eq:main-result-introduction} without a Gaussian assumption.

A Gaussian isometry maps input vectors onto the support of $\tau$.
Tracing out one output defines a channel $\mathcal N$, for which
entanglement entropy becomes output entropy~\cite{MatsumotoShimonoWinter2004}.
The same spectral argument
gives
\begin{equation}\label{eq:channel-results-introduction}
 S_{\min}^{\mathrm{phys}}(\mathcal N^{\otimes m})
   =\left\lceil\frac m2\right\rceil\log2,
 \qquad
 \chi^{\mathrm{phys}}(\mathcal N^{\otimes m})
   =\left\lfloor\frac m2\right\rfloor\log2.
\end{equation}
Here $m$ is the number of channel uses, or coding blocklength.
An explicit Gaussian code has orthogonal outputs, attains the Holevo
quantity, and achieves one half bit per mode with zero error.  The
spectral bound gives a sharp finite-block limit on the decoding
success of every physical code and a strong converse above this rate.
Thus the unassisted classical and zero-error capacities both equal
one half bit per mode.  The proof works directly with the fermionic
channel and does not invoke a fermionic extension of the general
equivalence theorem~\cite{Shor2004}.
The retained even Clifford algebra also yields exact quantum
transmission: a block of $m$ modes carries
$\max\{0,\lfloor(m-1)/2\rfloor\}$ logical qubits without error.
Section~\ref{sec:quantum-capacity} proves the asymptotic quantum
capacity of one half qubit per mode and a strong converse.

The same stabilizers give an exact threshold for Bell nonlocality.
With local parity-preserving measurements, one or two copies of $\tau$
cannot violate any Bell inequality, whereas three copies attain the
maximal quantum CHSH value $2\sqrt2$.  This corrects the multiple-copy
claim in Ref.~\cite[Sec.~V.1]{Debarba2020}, whose argument uses ordinary
qubit tensor powers in place of the fermionic product.
Proposition~\ref{prop:bell-activation-threshold} proves both the three-copy
violation and the impossibility with fewer copies.  Formation
nonadditivity and Bell nonlocality therefore appear at different copy
numbers for the same state.

The formation result also survives noise and is hence robust.  For
\begin{equation}
\tau_t=(\id-\ii t\gamma_2\gamma_3)/4,
\end{equation}
$0\leq t\leq1$, we find
both the one-copy and two-copy formation values.  They are strictly
nonadditive for every $t>0$, including the full-rank range $0<t<1$.
At full rank, the pure-state support argument is no longer available.  The local
Gaussian transformation instead reduces the problem to an active
one-mode-per-party state and uncorrelated local modes, whose contribution
to the convex roof can be removed exactly.

Several ingredients are established in earlier work.  The state $\tau$
and its maximal one-copy physical formation value were studied in
Refs.~\cite{Dariano2014,Dariano2014Feynman}.
The family $\tau_t$ 
has been
considered in
Ref.~\cite[Observation~4]{SpeeSchwaigerGiedkeKraus2018}, which 
has also given
the
two-copy Gaussian Bell-pair conversion.
That conversion gives an alternative description of the matching ensemble
and is used for the noisy family and the operational rates.
Our physical converse uses the Clifford-algebra structure that also
underlies bipartite stabilizer normal forms~\cite{AudenaertPlenio2005}.
Earlier minimum-output and channel-covariance calculations give
the Gaussian-restricted antecedents of
Eq.~\eqref{eq:channel-results-introduction}
\cite[Secs.~II.B and IV.A]{Bravyi2005Capacity}.
The direct matching and covariance arguments give the Gaussian formation
values.  The physical converse establishes their optimality over all
allowed decompositions and inputs, and the noisy calculation gives an
exact two-copy formation formula at full rank.
The further structure of this work is as follows. 
Section~\ref{sec:preliminaries} fixes the conventions and definitions.
We prove the formation theorem in Sec.~\ref{sec:nonadditivity} and derive
the communication results in Sec.~\ref{sec:channel}.  Supporting
convex-roof results and the full-rank calculation are given in the
appendices.

\section{Fermionic setting and entanglement measures}
\label{sec:preliminaries}

The formal statements made in this work require some preparation.
To this end, we collect the conventions and definitions used throughout
this work. General background on fermionic mode systems, parity
superselection, and fermionic Gaussian quantum states can be found in
Refs.~\cite{Araki1970,Bravyi2005FLO,PhysRevA.76.022311,
FriisLeeBruschi2013,Szalay2021}.

\subsection{Modes, parity, and bipartitions}

The notions of fermionic modes, parity, and bipartitions will feature
prominently in our analysis. We begin by fixing the notation for the modes
and their associated Majorana operators.
Let \(A\cup B\) be a bipartite system of \(n=n_A+n_B\) fermionic modes.
The \(2n\) Hermitian Majorana operators satisfy
\begin{equation}
  \{\gamma_j,\gamma_k\}=2\delta_{j, k}\id,
  \qquad 1\leq j,k\leq 2n .
\end{equation}
Our (commonly used) phase convention relating Majorana operators to annihilation and creation
operators is
\begin{equation}
\label{eq:majorana-from-c}
  \gamma_{2j-1}=c_j+c_j^\dagger,
  \qquad
  \gamma_{2j}=-\ii(c_j-c_j^\dagger),
\end{equation}
or, equivalently,
\begin{equation}
\label{eq:c-from-majorana}
  c_j=\frac12\left(\gamma_{2j-1}+\ii\gamma_{2j}\right),
  \qquad
  c_j^\dagger=\frac12\left(\gamma_{2j-1}-\ii\gamma_{2j}\right).
\end{equation}
Whenever an occupation-basis representation is required, we use the standard
Jordan--Wigner ordering
\begin{equation}
\label{eq:standard-jw-majoranas}
  c_j=Z_1\cdots Z_{j-1}\frac{X_j+\ii Y_j}{2},
\end{equation}
so that
\begin{equation}
\label{eq:standard-jw-majoranas-xy}
  \gamma_{2j-1}=Z_1\cdots Z_{j-1}X_j,
  \qquad
  \gamma_{2j}=Z_1\cdots Z_{j-1}Y_j .
\end{equation}
All sign-sensitive formulas below use this convention or explicitly state the
relevant mode ordering.
The total fermion-parity operator is
\begin{equation}
\label{eq:total-parity}
  P:=(-\ii)^n\gamma_1\gamma_2\cdots\gamma_{2n}.
\end{equation}
We impose fermion-parity superselection throughout.  Thus a physical density
operator satisfies
\begin{equation}
  [\rho,P]=0,
\end{equation}
while a physical pure state satisfies
\begin{equation}
  P\ket{\psi}=\pm\ket{\psi}.
\end{equation}
A physical mixed state may have support in both parity sectors, but it has no
coherence between them.  In particular, all odd Majorana moments vanish.

Alice and Bob control disjoint sets of modes.  A reduced state is the
restriction to the corresponding mode algebra.  To compute it by an
ordinary matrix partial trace, we use a Jordan--Wigner ordering adapted to
the bipartition, with all retained modes first.  Regrouping modes requires
fermionic swaps, including their occupation-dependent signs
\cite{SpeeSchwaigerGiedkeKraus2018,Szalay2021}.
Thus restriction is intrinsic, while a matrix partial trace in an arbitrary
interleaved ordering need not represent it correctly.
The entanglement of a physical bipartite pure state is
\begin{equation}
  E(\psi):=S(\psi_A),
  \qquad
  \psi_A:=\Tr_B\ketbra{\psi},
\end{equation}
where \(S(\omega):=-\Tr(\omega\log\omega)\).  All logarithms are natural, so a
maximally entangled two-level pair has entanglement \(\log 2\).

\subsection{Composition and local mode algebras}
\label{sec:fermionic-composition}

We next clarify how local mode algebras and physical quantum states
are combined to form a composite fermionic system.
We write $\rho\otimes\omega$ for the fermionic product of physical states
on disjoint mode sets.  Their embedded density operators are even and
commute, so this is a well-defined physical state; in a copy-contiguous
occupation ordering its matrix is the usual Kronecker product.
Regrouping the factors by party is a fermionic change of coordinates.
For two one-mode-per-party copies it reads
\begin{equation}\label{eq:fermionic-regrouping-sign}
 \ket{n_{A_1},n_{B_1},n_{A_2},n_{B_2}}
 \longmapsto
 (-1)^{n_{B_1}n_{A_2}}
 \ket{n_{A_1},n_{A_2},n_{B_1},n_{B_2}}.
\end{equation}
This sign is part of the representation of the same fermionic state;
it does not denote an additional physical interaction.
Tensor powers below always use this CAR composition, followed by the
Alice-versus-Bob bipartition.  In displayed Pauli matrices, $\otimes$
denotes the ordinary matrix tensor product in the explicitly stated
Jordan--Wigner ordering.

Global evenness does not require local evenness in every term of a
state's operator expansion.  If $a$ and $b$ are Majoranas in disjoint
mode sets, each is odd, but $\ii ab$ is even and can have nonzero
expectation in a physical state.  The full physical operator algebra on
several channel uses therefore contains correlations that cannot be
specified from the locally even operators alone.  A channel's action on
odd operators must also be fixed before forming its fermionic powers.
Section~\ref{sec:channel} gives an explicit example where this differs
from tensoring the channel's one-qubit matrix representation.  An explicit
calculation of the state regrouping is given in
Appendix~\ref{app:copy-composition}.

\subsection{Fermionic Gaussian states}

Fermionic Gaussian quantum states are the central objects of this work,
and we introduce them through their covariance matrices.
The covariance matrix of a parity-preserving quantum state \(\rho\) is the real
antisymmetric matrix
\begin{equation}
\label{eq:cm-definition}
  M_{j, k}:=\frac{\ii}{2}\Tr\rho[\gamma_j,\gamma_k],
  \qquad
  M^\transpose=-M .
\end{equation}
For \(j\neq k\), this convention gives
\begin{equation}
  M_{j, k}=\ii\Tr\rho\,\gamma_j\gamma_k,
  \qquad
  M_{j, j}=0 .
\end{equation}
A real antisymmetric matrix is the covariance matrix of a physical fermionic
state if and only if
\begin{equation}
\label{eq:bona-fide}
  \id+\ii M\geq0.
\end{equation}
Equivalently, all singular values of \(M\) lie in \([0,1]\), or
\(MM^\transpose\leq\id\).

\begin{definition}[Fermionic Gaussian state]
\label{def:fermionic-gaussian-state}
A parity-preserving state is fermionic Gaussian, or quasi-free, if its even
Majorana moments satisfy Wick's rule.  Explicitly, for every ordered even
index set \(I=\{i_1<\cdots<i_{2r}\}\),
\begin{equation}
\label{eq:wick-pfaffian}
  \Tr\rho(M)\,\ii^r\gamma_{i_1}\cdots\gamma_{i_{2r}}
  =\Pf(M_I),
\end{equation}
where \(M_I\) is the principal submatrix indexed by \(I\) and
\(\Pf(M_\varnothing)=1\).
\end{definition}

A fermionic Gaussian state is uniquely determined by its covariance matrix;
we denote it by \(\rho(M)\), or occasionally by \(\rho_{\rm G}(M)\).
Parity-preserving Gaussian unitaries act linearly on the Majorana operators:
an orthogonal transformation \(O\in SO(2n)\) sends
\begin{equation}
  M\longmapsto OMO^\transpose .
\end{equation}
Conversely, every \(O\in SO(2n)\) is implemented, up to phase, by a
parity-preserving Gaussian unitary.

A fermionic Gaussian state is pure if and only if
\begin{equation}
\label{eq:pure-cm-condition}
  M^2=-\id.
\end{equation}
We denote the compact manifold of pure fermionic Gaussian covariance matrices
by
\begin{equation}
\label{eq:pure-manifold}
  \Jman_{2n}
  :=
  \set{K\in\R^{2n\times2n}:
        K^\transpose=-K,\ K^2=-\id}.
\end{equation}

Every pure Gaussian state has definite total parity.  More strongly,
maximal full-system covariance forces an arbitrary state to be pure and
Gaussian; Appendix~\ref{app:gaussian-algebra} gives the elementary proof.

If \(K\in\Jman_{2n}\) is the covariance matrix of a bipartite pure Gaussian
state, its reduced state on \(A\) is Gaussian with covariance matrix \(K_A\),
the principal submatrix on Alice's Majoranas.  Let the eigenvalues of
\(\ii K_A\) be
\begin{equation}
  \pm s_1,\ldots,\pm s_{n_A},
  \qquad
  0\leq s_j\leq1.
\end{equation}
The entropy of entanglement is then
\begin{equation}
\label{eq:pure-entropy}
  E(K)
  =
  \sum_{j=1}^{n_A}
  h\!\left(\frac{1+s_j}{2}\right),
  \qquad
  h(x):=-x\log x-(1-x)\log(1-x).
\end{equation}
This is the fermionic modewise entropy formula
~\cite{BoteroReznik2003,BoteroReznik2004}.  Products of even Gaussian states
on disjoint mode sets are Gaussian, their covariance matrices are direct
sums, and partial trace corresponds to taking the appropriate principal
submatrix.

\subsection{Physical and Gaussian entanglement of formation}

To quantify the entanglement of fermionic quantum states, we introduce
the physical and Gaussian notions of entanglement of formation.
We use a physical convex roof compatible with parity superselection
~\cite{SchuchVerstraeteCirac2004,PhysRevA.76.022311,Dariano2014}.

\begin{definition}[Physical fermionic entanglement of formation]
\label{def:physical-eof}
Let \(\rho\) be a bipartite parity-preserving fermionic state.  Its physical
entanglement of formation is
given by
\begin{equation}
\label{eq:physical-fermionic-eof}
  \EF^{\mathrm{phys}}(\rho)
  :=
  \inf_{\substack{
    \rho=\sum_\alpha p_\alpha\ketbra{\psi_\alpha}\\
    p_\alpha\geq0,\ \sum_\alpha p_\alpha=1\\
    P\ket{\psi_\alpha}=\pm\ket{\psi_\alpha}}}
  \sum_\alpha p_\alpha
  S\!\left(\Tr_B\ketbra{\psi_\alpha}\right).
\end{equation}
The branches are arbitrary physical pure states, their parity eigenvalues may vary between branches.
\end{definition}

\begin{definition}[Fermionic Gaussian entanglement of formation]
\label{def:fgeof}
Let \(\rho(M)\) be a bipartite fermionic Gaussian state.  Its fermionic
Gaussian entanglement of formation is
\begin{equation}
\label{eq:fgeof-def}
  \EF^{\fG}(M)
  :=
  \inf\left\{
  \sum_\alpha p_\alpha E(K_\alpha)
  \;\middle|\;
  \begin{array}{l}
    \rho(M)=\sum_\alpha p_\alpha\rho(K_\alpha),\\
    p_\alpha\geq0,\quad \sum_\alpha p_\alpha=1,\\
    K_\alpha\in\Jman_{2n}
  \end{array}
  \right\}.
\end{equation}
The infimum is over finite pure-Gaussian ensembles.  We use
\(\EF^{\fG}(\rho(M))\) and \(\EF^{\fG}(M)\) interchangeably.  The feasible set
is nonempty by Theorem~\ref{thm:canonical-decomposition}.
The same pure-Gaussian roof defines \(\EF^{\fG}(\rho)\) for any state in
the convex hull of pure Gaussian states, including non-Gaussian targets.
We use this extension in the complete two-mode formula in
Appendix~\ref{sec:two-mode-formula}.
\end{definition}

We do not use the relaxed convention in which the target state commutes with
parity but individual pure branches may coherently mix the two parity
sectors.  Such alternatives are useful in other operational settings, but
they define a different convex roof
~\cite{SchuchVerstraeteCirac2004,PhysRevA.76.022311,NegariSalek2025}.
By Lemma~\ref{lem:pure-gaussian-parity}, every Gaussian branch admitted in
Definition~\ref{def:fgeof} is also physical.  Therefore, for every fermionic
Gaussian state,
\begin{equation}
\label{eq:physical-vs-gaussian-inequality}
  \EF^{\mathrm{phys}}(\rho(M))
  \leq
  \EF^{\fG}(\rho(M)).
\end{equation}

Both roofs are subadditive.  Indeed, products of admissible single-copy
ensembles are admissible for the product state, and hence
\begin{equation}
  \EF^\diamond(\rho\otimes\omega)
  \leq
  \EF^\diamond(\rho)+\EF^\diamond(\omega),
  \qquad
  \diamond\in\{\mathrm{phys},\fG\}.
\end{equation}
Subadditivity also ensures that their regularizations are well defined:
\begin{equation}
  \EF^{\diamond,\infty}(\rho)
  :=
  \lim_{m\to\infty}
  \frac{1}{m}\EF^\diamond(\rho^{\otimes m})
  =
  \inf_{m\geq1}
  \frac{1}{m}\EF^\diamond(\rho^{\otimes m}).
\end{equation}

\begin{lemma}[Support of ensemble branches]
\label{lem:support}
Let \(\rho\geq0\) be supported on a subspace \(S\).  If
\begin{equation*}
  \rho=\sum_\alpha p_\alpha\ketbra{\psi_\alpha},
  \qquad p_\alpha>0,
\end{equation*}
then every \(\ket{\psi_\alpha}\) belongs to \(S\).
\end{lemma}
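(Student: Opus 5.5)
The argument goes through the orthogonal complement of the support. Let \(\Pi\) be the orthogonal projector onto \(S^\perp\). Saying that \(\rho\) is supported on \(S\) means \(\supp\rho=\operatorname{ran}\rho\subseteq S\). Since \(\rho\) is Hermitian, \(\ker\rho=(\operatorname{ran}\rho)^\perp\supseteq S^\perp\), and therefore \(\rho\Pi=0\) and \(\Tr(\Pi\rho)=0\).

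Next we expand \(\Tr(\Pi\rho)\) along the given ensemble. This gives
\begin{equation*}
  0=\Tr(\Pi\rho)=\sum_\alpha p_\alpha\,\bra{\psi_\alpha}\Pi\ket{\psi_\alpha}
   =\sum_\alpha p_\alpha\,\norm{\Pi\ket{\psi_\alpha}}^2 .
\end{equation*}
Each summand is nonnegative and each \(p_\alpha>0\). Hence \(\Pi\ket{\psi_\alpha}=0\) for every \(\alpha\), which means \(\ket{\psi_\alpha}\in S\).

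No step is a genuine obstacle. The only points to state explicitly are the identification \(\ker\rho=(\operatorname{ran}\rho)^\perp\) for Hermitian \(\rho\), and the fact that the hypothesis \(p_\alpha>0\) is what rules out branches with zero weight lying outside \(S\). The argument works for finite ensembles and, with the same computation, for countable ones, since the series of nonnegative terms still converges to zero. Nothing here uses the fermionic structure, so the lemma applies verbatim to physical and Gaussian decompositions alike.
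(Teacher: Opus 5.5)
Your proof is correct and is essentially the paper's argument. The paper tests each vector: $\bra{\phi}\rho\ket{\phi}=\sum_\alpha p_\alpha\abs{\braket{\phi}{\psi_\alpha}}^2=0$ for every $\ket{\phi}\in S^\perp$. You handle all of $S^\perp$ at once through $\Tr(\Pi\rho)=\sum_\alpha p_\alpha\norm{\Pi\ket{\psi_\alpha}}^2=0$, which is only a cosmetic difference.
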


\begin{proof}
For any \(\ket{\phi}\in S^\perp\),
\begin{equation*}
  0
  =
  \bra{\phi}\rho\ket{\phi}
  =
  \sum_\alpha
  p_\alpha
  \abs{\braket{\phi}{\psi_\alpha}}^2.
\end{equation*}
Every summand is nonnegative, so
\(\braket{\phi}{\psi_\alpha}=0\) for every \(\alpha\).  Hence
\(\ket{\psi_\alpha}\in S\).
\end{proof}

Finally, a convex average of covariance matrices need not reproduce the
corresponding Gaussian density operator.  If
\begin{equation}
  \rho(M)=\sum_\alpha p_\alpha\rho(K_\alpha),
\end{equation}
then necessarily
\begin{equation}
  M=\sum_\alpha p_\alpha K_\alpha,
\end{equation}
but the converse fails because covariance matrices contain only two-point
data.  Equality of the density operators additionally requires all higher
even moments to agree.  By Wick's rule, these are precisely the principal
Pfaffian constraints
\begin{equation}
  \Pf(M_I)
  =
  \sum_\alpha p_\alpha\Pf((K_\alpha)_I)
\end{equation}
for every even index set \(I\).  Appendix~\ref{sec:pfaffian-roof} uses these constraints to
derive the exact finite Pfaffian convex roof.

\section{Exact entanglement of formation}
\label{sec:nonadditivity}

With these notions in place, we turn to the exact entanglement of
formation of a two-mode fermionic quantum state that will serve as the
starting point for our analysis of nonadditivity.
Consider one fermionic mode at Alice and one at Bob, with the
Alice-first occupation convention fixed in Sec.~\ref{sec:preliminaries}.
We introduce the state vectors
\begin{equation}\label{eq:bell-even-odd}
  \ket{\Phi_+}:=\frac{\ket{0,0}+\ket{1,1}}{\sqrt2},
  \qquad
  \ket{\Psi_+}:=\frac{\ket{1,0}+\ket{0,1}}{\sqrt2},
\end{equation}
and define the quantum state
\begin{equation}\label{eq:tau-definition}
  \tau:=\frac12\ketbra{\Phi_+}+\frac12\ketbra{\Psi_+}.
\end{equation}
This state and its physical one-copy formation value were identified in
Ref.~\cite{Dariano2014}.  To see the one-copy obstruction, let
$\tau=\sum_\ell p_\ell\ketbra{\psi_\ell}$ be a physical decomposition.
By Lemma~\ref{lem:support}, every branch lies in
$\operatorname{span}\{\ket{\Phi_+},\ket{\Psi_+}\}$.
The two spanning vectors have opposite parity, so a vector in their span
has definite parity only if it is proportional to one of them.  Every physical
decomposition therefore uses the two Bell projectors with total weights
$1/2$.  Both have reduced state $\id_2/2$, giving
\begin{equation}\label{eq:tau-one-copy-value}
  \EF^{\mathrm{phys}}(\tau)=\EF^{\fG}(\tau)=\log2.
\end{equation}
Both Bell vectors are Gaussian.  The mixture is Gaussian as well:
with the Majoranas ordered as $(\gamma_1,\gamma_2)$ at Alice and
$(\gamma_3,\gamma_4)$ at Bob,
\begin{equation}
  Q:=-\ii\gamma_2\gamma_3=X\otimes X,
  \qquad
  \tau=\frac14(\id+Q).
  \label{eq:tau-projector-form}
\end{equation}
Its only nonzero independent covariance entry is
\begin{equation}\label{eq:tau-cm}
  (M_\tau)_{2,3}=-1.
\end{equation}
All other nonidentity Majorana moments vanish, as do the
corresponding Pfaffians, so Wick's rule holds.  One canonical
Majorana pair is pure and the other is maximally mixed.  We now ask
how the formation value changes when the decomposition may correlate
several copies.

For $m$ copies, let $\gA_1,\ldots,\gA_{2m}$ and
$\gB_1,\ldots,\gB_{2m}$ denote the global Majoranas at Alice and Bob.
Define
\begin{equation}\label{eq:Qr-definition}
  Q_j:=-\ii\gA_{2j}\gB_{2j-1},\qquad 1\leq j\leq m.
\end{equation}
Each $Q_j$ is a Hermitian unitary.  Distinct $Q_j$ use disjoint
pairs of Majoranas and commute: exchanging the two pairs requires
four anticommutations.  Applying Eq.~\eqref{eq:tau-projector-form}
to each copy therefore gives
\begin{equation}\label{eq:tau-m-product-projector}
  \tau_m:=\tau^{\otimes m}=2^{-2m}\prod_{j=1}^m(\id+Q_j).
\end{equation}
The operator
$\Pi:=\prod_{j=1}^m\frac{\id+Q_j}{2}$ is the orthogonal 
projection
onto
their simultaneous $+1$ eigenspace, and
$\tau^{\otimes m}=2^{-m}\Pi$.  In particular, this eigenspace is
exactly the support of the state.

\begin{theorem}[Exact formation value for every number of copies]
\label{thm:many-copy-nonadditivity}
\label{thm:physical-eof-nonadditivity}
For every integer $m\geq1$, with Alice holding the $m$ Alice modes
and Bob holding the $m$ Bob modes,
\begin{equation}\label{eq:tau-many-copy-value}
  \EF^{\mathrm{phys}}(\tau^{\otimes m})
  =\EF^{\fG}(\tau^{\otimes m})
  =\left\lceil\frac m2\right\rceil\log2.
\end{equation}
An optimal ensemble consists of pure Gaussian states whose reduced
states have a flat nonzero spectrum of rank
$2^{\lceil m/2\rceil}$.  In particular, for
$\mathsf X\in\{\mathrm{phys},\fG\}$,
\begin{equation}\label{eq:tau-two-copy-nonadditive}
  \EF^{\mathsf X}(\tau^{\otimes2})
  =\log2<2\log2=2\EF^{\mathsf X}(\tau).
\end{equation}
\end{theorem}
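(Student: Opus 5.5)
The plan is to prove matching upper and lower bounds. The upper bound comes from an explicit pure-Gaussian ensemble, which controls $\EF^{\fG}$ and hence $\EF^{\mathrm{phys}}$ via Eq.~\eqref{eq:physical-vs-gaussian-inequality}. The lower bound on $\EF^{\mathrm{phys}}$ comes from a spectral-degeneracy argument that applies to every physical branch. Since $\EF^{\mathrm{phys}}\leq\EF^{\fG}$, the two bounds together give Eq.~\eqref{eq:tau-many-copy-value}. Eq.~\eqref{eq:tau-two-copy-nonadditive} is then the case $m=2$ combined with Eq.~\eqref{eq:tau-one-copy-value}.

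\emph{Achievability.} In every branch I keep the $m$ correlated pairs $(\gA_{2j},\gB_{2j-1})$ fixed with covariance entry $-1$, as dictated by $Q_j$. The remaining free Majoranas are $\gA_{2j-1}$ at Alice and $\gB_{2j}$ at Bob. I pair them locally as far as possible: $(\gA_{4i-3},\gA_{4i-1})$ and $(\gB_{4i-2},\gB_{4i})$ for $1\leq i\leq\lfloor m/2\rfloor$, each with an independent sign $\pm1$. If $m$ is odd, the leftover $\gA_{2m-1}$ is paired across the cut with $\gB_{2m}$, also with a random sign. Each sign choice gives a pure covariance matrix $K\in\Jman_{4m}$, since the pairs are disjoint.

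The Gaussian state of $K$ is $\prod_{\text{pairs}}(\id\pm\ii ab)/2$, a product of commuting projectors. Averaging each free-pair factor over its sign yields $\id/2$, so the uniform mixture equals $2^{-m}\Pi=\tau^{\otimes m}$ as operators. This identity is exact, so no separate Pfaffian check is needed.

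For each branch, $K_A$ is block diagonal. Locally paired Alice Majoranas contribute $s=1$, and Alice Majoranas paired with Bob contribute $s=0$. Eq.~\eqref{eq:pure-entropy} then gives entropy equal to half the number of Alice Majoranas paired across the cut, times $\log 2$. That number is $m+(m\bmod 2)$, so the entropy is $\lceil m/2\rceil\log2$ with a flat spectrum of rank $2^{\lceil m/2\rceil}$.

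\emph{Converse.} Let $\tau^{\otimes m}=\sum p_\alpha\ketbra{\psi_\alpha}$ be any physical decomposition. By Lemma~\ref{lem:support}, every branch satisfies $Q_j\ket{\psi}=\ket{\psi}$ for all $j$, and also $P\ket\psi=\pm\ket\psi$.

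I work in the Alice-first Jordan--Wigner ordering, where each Bob Majorana has the form $\gB_k=P_A\otimes\tilde\gamma^B_k$. Then $Q_j=(-\ii\gA_{2j}P_A)\otimes\tilde\gamma^B_{2j-1}$ is a product of local unitaries. The relation $(W_A\otimes V_B)\ket\psi=\ket\psi$ implies $W_A\rho_AW_A^\dagger=\rho_A$ for $\rho_A=\Tr_B\ketbra\psi$. Hence $\rho_A$ commutes with $G_j:=\gA_{2j}P_A$ (made Hermitian by a phase) for each $j$. Because $\psi$ has definite parity and $P=P_AP_B$, $\rho_A$ also commutes with $P_A$.

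A short computation shows that $G_jG_k=-G_kG_j$ for $j\neq k$ and that $P_AG_j=-G_jP_A$. These $m+1$ pairwise anticommuting Hermitian unitaries therefore act on every eigenspace of $\rho_A$. Each eigenspace thus carries a representation of a Clifford algebra with $m+1$ generators, whose irreducible representations have dimension $2^{\lfloor (m+1)/2\rfloor}=2^{\lceil m/2\rceil}$. Every nonzero eigenvalue of $\rho_A$ therefore has multiplicity at least $2^{\lceil m/2\rceil}$.

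A spectrum with eigenvalue multiplicities at least $d$ has Shannon entropy at least $\log d$. To see this, group the probability into blocks of equal eigenvalues; each eigenvalue is at most $1/d$, so $-\log\lambda\geq\log d$. Averaging over branches gives $\EF^{\mathrm{phys}}(\tau^{\otimes m})\geq\lceil m/2\rceil\log 2$.

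The main obstacle is the sign bookkeeping in the converse. One must confirm that the fermionic products $\gA\gB$ genuinely factor into local unitaries in the Alice-first ordering, which is where the string $P_A$ enters. One must also check that the $G_j$ together with $P_A$ anticommute exactly as claimed. If the $G_j$ turned out to commute with $P_A$ instead, the count of generators would drop and the bound would fail, so I would verify this identity carefully first.
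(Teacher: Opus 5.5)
Your proposal is correct and follows the paper's proof. Achievability uses the same signed Majorana matching, and the converse uses the same $m+1$ anticommuting symmetries $P_A$ and $-\ii a_{2j}P_A$ of Alice's marginal. The differences are minor: you cite the irreducible-representation dimension $2^{\lfloor k/2\rfloor}$ of the complex Clifford algebra where the paper constructs the $2^{\lfloor k/2\rfloor}$ orthonormal eigenvectors explicitly, and you omit the paper's separate covariance-rank bound for Gaussian branches, which is logically redundant once $\EF^{\mathrm{phys}}\leq\EF^{\fG}$ is combined with the physical converse.
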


We first give an explicit Gaussian ensemble and prove its optimality
among Gaussian decompositions.  We then show that arbitrary physical
branches obey the same lower bound.

\subsection{The direct Majorana matching construction}
\label{sec:gaussian-nonadditivity}

We now give an explicit construction of the pure Gaussian quantum
states needed for achievability, based on a matching of Majorana operators.

\begin{proof}[Achievability in Theorem~\ref{thm:many-copy-nonadditivity}]
We start from the operators $Q_r=-\ii\gA_{2r}\gB_{2r-1}$
as in Eq.~\eqref{eq:Qr-definition}.
Equation~\eqref{eq:tau-m-product-projector} fixes these $m$ disjoint
pairs of Majorana operators across the bipartition in every branch.
We complete this partial matching to obtain the covariance matrix
of a pure Gaussian quantum state.
The remaining Majorana operators, which we refer to as free, are
\begin{equation}
  \mathcal F_A:=\{\gA_1,\gA_3,\ldots,\gA_{2m-1}\},
  \qquad
  \mathcal F_B:=\{\gB_2,\gB_4,\ldots,\gB_{2m}\}.
\end{equation}
Choose a perfect matching $\mathcal M$ that pairs as many free Majoranas
locally as possible.  It contains $\lfloor m/2\rfloor$ pairs inside $\mathcal F_A$, $\lfloor m/2\rfloor$ pairs inside $\mathcal F_B$, and, if $m$ is odd, one remaining cross pair between the last free Alice Majorana and the last free Bob Majorana.

For each pair $e\in\mathcal M$, choose an arbitrary orientation $e=(u_e,v_e)$ and set
\begin{equation}
  Q_e:=\ii u_ev_e.
\end{equation}
For each choice of signs $\sigma=(\sigma_e)_{e\in\mathcal M}\in\{\pm1\}^{\mathcal M}$, define the pure Gaussian state
\begin{equation}\label{eq:upper-branch-density}
  \rho_\sigma
  :=2^{-2m}
  \prod_{r=1}^m(\id+Q_r)
  \prod_{e\in\mathcal M}(\id+\sigma_e Q_e).
\end{equation}
The $m$ fixed pairs and the $m$ free pairs partition all $4m$ Majoranas
into $2m$ disjoint pairs.  Their bilinears are commuting Hermitian
unitaries, so Eq.~\eqref{eq:upper-branch-density} is the projector onto
a joint eigenspace.  In the occupation basis of these $2m$ normal modes,
each factor fixes one occupation.  The joint eigenspace is therefore
one-dimensional, and $\rho_\sigma$ is a normalized pure Gaussian state.
It has definite total parity, since all $2m$ occupations are fixed.
Its covariance matrix is a signed matching matrix and squares to $-\id$.

Averaging uniformly over the signs gives
\begin{align}\label{eq:upper-ensemble-average}
  2^{-m}\sum_{\sigma\in\{\pm1\}^{\mathcal M}}\rho_\sigma
  &=2^{-2m}\prod_{r=1}^m(\id+Q_r)
    \prod_{e\in\mathcal M}
    \left(\frac12\sum_{\sigma_e=\pm1}(\id+\sigma_e Q_e)\right)\notag\\
  &=2^{-2m}\prod_{r=1}^m(\id+Q_r)
   =\tau^{\otimes m}.
\end{align}
Thus \eqref{eq:upper-branch-density} gives a valid pure-Gaussian decomposition of $\tau^{\otimes m}$.

To compute the branch entanglement, observe that Alice's covariance block
contains exactly $\lfloor m/2\rfloor$ local Majorana pairs: those of
$\mathcal M$ that lie inside $\mathcal F_A$.  These contribute singular
values equal to $1$.  All remaining Alice Majoranas are paired across the
Alice--Bob cut and contribute zero singular values to her covariance block.
The canonical singular values of Alice's block, each counted once rather
than twice, are therefore
\begin{equation}
  \underbrace{1,\ldots,1}_{\lfloor m/2\rfloor\text{ times}},
  \underbrace{0,\ldots,0}_{\lceil m/2\rceil\text{ times}}.
\end{equation}
The entropy formula \eqref{eq:pure-entropy} gives
\begin{equation}\label{eq:upper-branch-entanglement}
  E(\rho_\sigma)
  =\left\lceil\frac m2\right\rceil\log 2
\end{equation}
for every sign choice $\sigma$.  The same covariance spectrum shows that Alice's reduced state is a
product of $\lfloor m/2\rfloor$ pure modes and $\lceil m/2\rceil$
maximally mixed modes, up to a local Gaussian unitary.  Its nonzero
spectrum is therefore flat, of rank $2^{\lceil m/2\rceil}$.  We obtain
\begin{equation}\label{eq:many-copy-upper-bound}
  \EF^{\mathrm{phys}}(\tau^{\otimes m})
  \leq\EF^{\fG}(\tau^{\otimes m})
  \leq\left\lceil\frac m2\right\rceil\log2.
\end{equation}
\end{proof}

\subsection{The Gaussian lower bound}

Here we establish a lower bound on the Gaussian entanglement of
formation of $\tau_m$ by exploiting the constraints imposed by its support.
In any decomposition of $\tau_m$, Lemma~\ref{lem:support} places every
branch in the support described above. For a Gaussian branch, the fixed
cross pairs force enough zero singular values in Alice's covariance
matrix to give the desired lower bound.

\begin{lemma}[A fixed Majorana bilinear fixes two covariance rows]
\label{lem:fixed-bilinear-row}
Let $K$ be the covariance matrix of a state supported on the $+1$
eigenspace of $Q=\epsilon\,\ii\gamma_u\gamma_v$, where $u\ne v$
and $\epsilon\in\{\pm1\}$.  Then
\begin{equation}\label{eq:fixed-bilinear-entry}
 K_{u,v}=\epsilon,\qquad
 K_{u,w}=K_{vw}=0\quad(w\notin\{u,v\}).
\end{equation}
\end{lemma}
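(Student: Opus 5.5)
Because $\rho$ is supported on the $+1$ eigenspace of $Q$ and $Q$ is a Hermitian unitary, we have $Q\rho=\rho=\rho Q$. Both identities in \eqref{eq:fixed-bilinear-entry} then follow from the covariance definition \eqref{eq:cm-definition}, $K_{j,k}=\ii\Tr\rho\,\gamma_j\gamma_k$ for $j\neq k$, together with the Majorana anticommutation relations. No Gaussianity is needed, so the lemma holds for an arbitrary state.

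\emph{The fixed entry.} Since $\rho Q=\rho$, we get $\Tr\rho\,Q=\Tr\rho=1$. This reads $\epsilon\,\ii\Tr\rho\,\gamma_u\gamma_v=1$, that is, $\epsilon K_{u,v}=1$. Because $\epsilon^2=1$, this gives $K_{u,v}=\epsilon$.

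\emph{The vanishing entries.} Fix $w\notin\{u,v\}$. We write $\Tr\rho\,\gamma_u\gamma_w=\Tr(Q\rho Q)\gamma_u\gamma_w=\Tr\rho\,(Q\gamma_u\gamma_w Q)$ and compute the conjugation. The Majorana $\gamma_u$ anticommutes with $\gamma_u\gamma_v$, since it commutes with $\gamma_u$ and anticommutes with $\gamma_v$. The Majorana $\gamma_w$ also anticommutes with $\gamma_u\gamma_v$, since it anticommutes with both factors. Hence $Q\gamma_u Q=-\gamma_u$ and $Q\gamma_w Q=-\gamma_w$. It follows that $Q\gamma_u\gamma_w Q=(Q\gamma_u Q)(Q\gamma_w Q)=\gamma_u\gamma_w$, which yields nothing. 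So conjugation is the wrong tool, and we use a one-sided identity instead.

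\emph{One-sided identity.} Since $\rho=\rho Q$, we have $\Tr\rho\,\gamma_u\gamma_w=\Tr\rho\,Q\gamma_u\gamma_w$. Write $Q\gamma_u=\epsilon\,\ii\gamma_u\gamma_v\gamma_u=-\epsilon\,\ii\gamma_v$, using $\gamma_u\gamma_v\gamma_u=-\gamma_v$. Then
\[
\Tr\rho\,\gamma_u\gamma_w=-\epsilon\,\ii\,\Tr\rho\,\gamma_v\gamma_w .
\]
Applying the same argument with $u$ and $v$ swapped, and using $\gamma_v\gamma_u\gamma_v=-\gamma_u$ with the sign change $Q=-\epsilon\,\ii\gamma_v\gamma_u$, gives $\Tr\rho\,\gamma_v\gamma_w=\epsilon\,\ii\,\Tr\rho\,\gamma_u\gamma_w$. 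Substituting the second relation into the first gives $x=-\epsilon\,\ii\,(\epsilon\,\ii\,x)=x$ for $x=\Tr\rho\,\gamma_u\gamma_w$, which is only consistent and not a contradiction. Reality must therefore supply the conclusion.

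\emph{Reality.} Since $\gamma_u\gamma_w$ is anti-Hermitian, $\Tr\rho\,\gamma_u\gamma_w$ is purely imaginary; equivalently $K_{u,w}$ and $K_{v,w}$ are real. The relation $K_{u,w}=-\epsilon\,\ii\,K_{v,w}$ sets a real number equal to $\ii$ times a real number. Hence both vanish: $K_{u,w}=K_{v,w}=0$. The same argument, or antisymmetry of $K$, handles the $(w,u)$ and $(w,v)$ entries.

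The main obstacle is precisely this point. Invariance under $Q$-conjugation gives nothing, because $\gamma_u\gamma_w$ commutes with $Q$. The content lies in the one-sided absorption $\rho=\rho Q$ combined with the Hermiticity of $\rho$, which makes the covariance entries real. Care with signs is needed when reordering $\gamma_u\gamma_v\gamma_u$, but the conclusion does not depend on those signs.
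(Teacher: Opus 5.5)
Your final argument, one-sided absorption $\rho=\rho Q$ combined with reality of $K$, is correct. It follows a different route from the paper's proof. However, the computation you discard as ``the wrong tool'' contains a sign error. For $w\notin\{u,v\}$, $\gamma_w$ anticommutes with both $\gamma_u$ and $\gamma_v$, and therefore \emph{commutes} with $\gamma_u\gamma_v$. Hence $Q\gamma_wQ=\gamma_w$ and $Q\gamma_u\gamma_wQ=-\gamma_u\gamma_w$. The invariance $Q\rho Q=\rho$ then gives $\Tr\rho\,\gamma_u\gamma_w=-\Tr\rho\,\gamma_u\gamma_w=0$ immediately, so your closing ``main obstacle'' paragraph misdiagnoses the situation. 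Your detour is still sound. Reality of $K$ encodes $\rho=\rho^\dagger$, which turns your right absorption $\rho=\rho Q$ into two-sided invariance. Your argument is thus the conjugation argument rearranged.

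The paper instead stays at the covariance level. Once $K_{u,v}=\epsilon$, the admissibility condition $KK^\transpose\leq\id$ bounds every row norm by one. Since $|K_{u,v}|=1$ saturates rows $u$ and $v$, all their other entries must vanish. Your operator-level route gives information beyond two-point data: every Majorana monomial containing exactly one of $\gamma_u,\gamma_v$ anticommutes with $Q$ and has zero expectation, which is the remark the paper makes right after the lemma. The paper's route uses only the positivity constraint on covariance matrices and is quantitative, $\sum_{w\notin\{u,v\}}K_{u,w}^2\leq1-K_{u,v}^2$. It therefore also controls these entries when the support condition holds only approximately.
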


\begin{proof}
The support condition gives
$\epsilon\ii\Tr\rho\gamma_u\gamma_v=1$, hence $K_{u,v}=\epsilon$.
Covariance admissibility implies $KK^\transpose\leq\id$, so each
row has squared norm at most one.  The entry $K_{u,v}$ already
saturates this bound for row $u$, as does $K_{v,u}=-\epsilon$ for
row $v$.  All other entries in those rows must vanish.
\end{proof}

The lemma does not require Gaussianity.  More generally, every Majorana
monomial containing exactly one of $\gamma_u,\gamma_v$ anticommutes
with $Q$ and has zero expectation in a state supported on its $+1$
eigenspace.  The next step uses Gaussianity to turn the covariance
spectrum into an entanglement bound.

\begin{lemma}[Rank lower bound for the many-copy support]\label{lem:many-copy-rank-lower-bound}
Let $K\in\Jman_{4m}$ be the covariance matrix of a pure Gaussian
state in the support of $\tau_m$, equivalently in the simultaneous
$+1$ eigenspace of the bilinears in Eq.~\eqref{eq:Qr-definition}.
Then the pure-branch entanglement obeys
\begin{equation}\label{eq:rank-lemma-entanglement-bound}
  E(K)\ge \left\lceil\frac m2\right\rceil\log 2.
\end{equation}
\end{lemma}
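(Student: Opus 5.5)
We apply Lemma~\ref{lem:fixed-bilinear-row} to each fixed bilinear $Q_j=-\ii\gA_{2j}\gB_{2j-1}$ and then bound the number of large singular values of Alice's covariance block. Order Alice's Majoranas first, so that
\begin{equation*}
 K=\begin{pmatrix}K_A & C\\ -C^\transpose & K_B\end{pmatrix}.
\end{equation*}
Lemma~\ref{lem:fixed-bilinear-row}, applied with $u=\gA_{2j}$, $v=\gB_{2j-1}$ and $\epsilon=-1$, gives $K_{u,v}=-1$ and forces every other entry in the rows and columns of $\gA_{2j}$ and $\gB_{2j-1}$ to vanish. In particular, the rows and columns of $K_A$ indexed by the $m$ even Alice Majoranas $\gA_2,\gA_4,\ldots,\gA_{2m}$ are identically zero. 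Hence $K_A$ is, up to a permutation, $0_m\oplus K'_A$, where $K'_A$ is the $m\times m$ real antisymmetric block on the free Alice Majoranas $\mathcal F_A$.

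Next we count singular values. The canonical values $s_1,\ldots,s_m$ of $\ii K_A$ appear in $\pm$ pairs, and the number of nonzero ones equals $\rank(K_A)/2=\rank(K'_A)/2$. Since $K'_A$ is antisymmetric of size $m$, its rank is even and at most $m$, so it is at most $2\lfloor m/2\rfloor$. Therefore at most $\lfloor m/2\rfloor$ of the $s_j$ are nonzero, and at least $m-\lfloor m/2\rfloor=\lceil m/2\rceil$ of them vanish. The block zeroing already contributes $m$ zero eigenvalues of $K_A$, that is, $m/2$ canonical pairs when $m$ is even. The parity of the antisymmetric block $K'_A$ supplies one more forced zero eigenvalue when $m$ is odd. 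Counting eigenvalues of the full $2m\times 2m$ block $K_A$ carefully gives exactly the bound $\lceil m/2\rceil$ zero canonical values.

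Finally, the pure-state entropy formula \eqref{eq:pure-entropy} gives
\begin{equation*}
 E(K)=\sum_j h\!\left(\frac{1+s_j}{2}\right)\ge \lceil m/2\rceil\, h(1/2)=\lceil m/2\rceil\log2,
\end{equation*}
because $h\ge0$ and each $s_j=0$ contributes $\log 2$.

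The step that needs care is the count converting ``$m$ zero rows of a $2m\times 2m$ antisymmetric matrix'' into ``$\lceil m/2\rceil$ zero canonical values.'' This count relies on even rank of antisymmetric matrices applied to $K'_A$. Purity $K^2=-\id$ is not needed for this inequality; it enters only through the validity of \eqref{eq:pure-entropy}.
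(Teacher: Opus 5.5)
Your proposal is correct and follows essentially the same argument as the paper: Lemma~\ref{lem:fixed-bilinear-row} zeroes Alice's covariance block on $\gA_2,\ldots,\gA_{2m}$, the remaining $m\times m$ antisymmetric block has rank at most $2\lfloor m/2\rfloor$, and each of the resulting $\lceil m/2\rceil$ zero canonical values contributes $h(1/2)=\log2$ to $E(K)$. Your extra paragraph counting zero eigenvalues of the full $2m\times 2m$ block repeats the rank argument and is consistent with it.
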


\begin{proof}
By Lemma~\ref{lem:fixed-bilinear-row}, each Alice Majorana $\gA_{2r}$ is paired only with Bob's Majorana $\gB_{2r-1}$.  Thus the Alice principal block $K_A$ has zero rows and columns on the span of $\gA_2,\gA_4,\ldots,\gA_{2m}$.  It can be nonzero only on the $m$-dimensional real subspace spanned by $\gA_1,\gA_3,\ldots,\gA_{2m-1}$.  An antisymmetric matrix on an $m$-dimensional real space has rank at most $2\lfloor m/2\rfloor$.  Therefore, among the $m$ canonical singular values $s_1,\ldots,s_m$ of $K_A$ entering the pure-Gaussian entropy formula, at least $m-\lfloor m/2\rfloor=\lceil m/2\rceil$ are zero.  Each zero singular value contributes $h(1/2)=\log 2$ to $E(K)$, proving \eqref{eq:rank-lemma-entanglement-bound}.
\end{proof}

Applying Lemma~\ref{lem:many-copy-rank-lower-bound} to each Gaussian
branch and using Eq.~\eqref{eq:many-copy-upper-bound} proves
$\EF^{\fG}(\tau_m)=\lceil m/2\rceil\log2$.
The matching construction and the lower bound describe the same
restriction: the fixed cross pairs leave only $m$ Majoranas at each
party available for local pairing.

\subsection{Optimality over all physical decompositions}
\label{sec:physical-nonadditivity}

To extend the lower bound beyond Gaussian quantum states, we show
that the reduced quantum state of every physical branch commutes
with a family of anticommuting operators, thereby forcing eigenvalue
multiplicities. The argument is closely related to the Clifford-algebra
description of bipartite stabilizer normal
forms~(see Ref.~\cite{AudenaertPlenio2005}); the following lemma gives
the spectral statement we need.

\begin{lemma}[Degeneracy from anticommuting symmetries]
\label{lem:clifford-degeneracy-bound}
Let $G_1,\ldots,G_k$ be pairwise anticommuting Hermitian unitaries
on a $D$-dimensional Hilbert space.  If a density matrix $\omega$
commutes with every $G_j$, then each eigenvalue of $\omega$ has
multiplicity at least $d=2^{\lfloor k/2\rfloor}$.  In particular,
\begin{equation}\label{eq:clifford-spectral-cap}
  \omega\leq\frac{\id_D}{d},
  \qquad S(\omega)\geq\log d.
\end{equation}
\end{lemma}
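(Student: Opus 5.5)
The idea is to reduce to representation theory of the Clifford algebra generated by the $G_j$. Since $\omega$ commutes with every $G_j$, each eigenspace $E_\lambda=\ker(\omega-\lambda\id)$ is invariant under all $G_j$. The restrictions $G_j|_{E_\lambda}$ are therefore still pairwise anticommuting Hermitian unitaries. So it suffices to prove the following claim: any nonzero finite-dimensional space carrying $k$ pairwise anticommuting Hermitian unitaries has dimension divisible by, hence at least, $2^{\lfloor k/2\rfloor}$.

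I would prove this claim by induction on $k$, two generators at a time. The cases $k=0,1$ are trivial, since then $d=1$. For $k\ge2$, on a nonzero invariant space $V$ consider $G_1$. If $G_1$ had only one eigenvalue it would be $\pm\id$ on $V$, contradicting anticommutation with $G_2$ on a nonzero space. So $V=V_+\oplus V_-$, the $\pm1$ eigenspaces of $G_1$. Because $G_2$ anticommutes with $G_1$, it maps $V_+$ isometrically onto $V_-$, so $\dim V_+=\dim V_-$ and $\dim V=2\dim V_+$.

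Next I need $k-2$ anticommuting Hermitian unitaries that preserve $V_+$. For $j\ge3$ set $G_j':=\ii G_1G_2G_j$.
\begin{itemize}
\item Each $G_j'$ commutes with $G_1$: $G_j$ anticommutes with $G_1$, and so does $G_1G_2$, so the product commutes. Hence $G_j'$ preserves $V_+$.
\item Each $G_j'$ is unitary, being a product of unitaries.
\item Each $G_j'$ is Hermitian: $(G_1G_2G_j)^\dagger=G_jG_2G_1$. Reversing three mutually anticommuting factors costs three transpositions, so this equals $-G_1G_2G_j$. The factor $\ii$ then makes $G_j'$ Hermitian.
\item For $j\ne l$, both $\ge3$, $G_j'$ and $G_l'$ anticommute. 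The element $C:=\ii G_1G_2$ commutes with both $G_j$ and $G_l$, and $C^2=\id$. Hence $G_j'G_l'=CG_jCG_l=G_jG_l$, and anticommutation is inherited from the $G_j$.
\end{itemize}

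The induction hypothesis applied to $V_+$ gives $2^{\lfloor (k-2)/2\rfloor}\mid\dim V_+$, hence $2^{\lfloor k/2\rfloor}\mid\dim V$. Applying this with $V=E_\lambda$ yields the multiplicity statement.

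The consequences follow directly. Every eigenvalue $\lambda$ of $\omega$ has multiplicity at least $d$, and the eigenvalues sum to one, so $d\lambda\le 1$ and thus $\omega\le\id_D/d$. Then $S(\omega)=\sum_i\lambda_i\log(1/\lambda_i)\ge\sum_i\lambda_i\log d=\log d$.

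The only delicate step is the sign bookkeeping in the reduction $G_j'=\ii G_1G_2G_j$: Hermiticity, and mutual anticommutation on $V_+$. I would verify it once carefully as above, and otherwise expect no obstacle.
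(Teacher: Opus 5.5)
Your proof is correct, but it takes a different route from the paper. The paper does not induct. It sets $q=\lfloor k/2\rfloor$ and forms the commuting Hermitian unitaries $F_j=\ii G_{2j-1}G_{2j}$. It picks a joint eigenvector $\ket v$ of the $F_j$ inside the $\lambda$-eigenspace and applies the $2^q$ words $G_1^{x_1}G_3^{x_2}\cdots G_{2q-1}^{x_q}$. Since $G_{2j-1}$ flips the sign of $F_j$ and commutes with the other $F_\ell$, these vectors have distinct $F$-eigenvalue patterns. They are therefore orthonormal, and this gives $2^q$ eigenvectors in one step.

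Your argument instead peels off one pair $(G_1,G_2)$ at a time. You use $G_2$ to match the $\pm1$ eigenspaces of $G_1$, and the twisted generators $\ii G_1G_2G_j$ to pass to $V_+$. Your checks are all right: commutation with $G_1$, Hermiticity via the three-transposition sign, and mutual anticommutation via $C=\ii G_1G_2$ with $C^2=\id$ and $CG_j=G_jC$. The index identity $\lfloor(k-2)/2\rfloor=\lfloor k/2\rfloor-1$ closes the induction.

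What your route buys is a slightly stronger conclusion: each eigenspace dimension is divisible by $2^{\lfloor k/2\rfloor}$, which is the standard Clifford-module structure. The paper's argument as written only exhibits $2^q$ orthonormal vectors. It would give divisibility too after noting that the $G_{2j-1}$ map the joint $F$-eigenspaces onto one another. What the paper's route buys is a direct, non-recursive construction whose only sign bookkeeping is that disjoint pairs commute.

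Your derivation of $\omega\le\id_D/d$ and $S(\omega)\ge\log d$ from the multiplicity bound is the same as the paper's.
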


\begin{proof}
Put $q=\lfloor k/2\rfloor$ and
$F_j:=\ii G_{2j-1}G_{2j}$ for $1\leq j\leq q$.
Anticommutation gives $F_j^\dagger=F_j$ and $F_j^2=\id$.
For $j\ne\ell$, moving the two factors of $F_j$ past the two
factors of $F_\ell$ produces four minus signs, so
$F_jF_\ell=F_\ell F_j$.  Each $F_j$ also commutes with $\omega$.

Fix an eigenvalue $\lambda$ of $\omega$.  Its eigenspace is
invariant under all the $F_j$, and these commuting Hermitian
operators have a common normalized eigenvector $\ket v$ there
\begin{equation*}
 \omega\ket v=\lambda\ket v,
 \qquad F_j\ket v=s_j\ket v,
 \qquad s_j\in\{+1,-1\}.
\end{equation*}
For a binary string $x=(x_1,\ldots,x_q)$, define
\begin{equation}\label{eq:clifford-sign-pattern-vectors}
 \ket{v_x}:=G_1^{x_1}G_3^{x_2}\cdots G_{2q-1}^{x_q}\ket v.
\end{equation}
All these vectors have norm one and belong to the same eigenspace
of $\omega$, since every $G_j$ is unitary and commutes with
$\omega$.  The operator $G_{2j-1}$ anticommutes with $F_j$ and
commutes with $F_\ell$ for $\ell\ne j$.  It follows that
\begin{equation*}
 F_j\ket{v_x}=(-1)^{x_j}s_j\ket{v_x}.
\end{equation*}
If $x\ne y$, some $F_j$ therefore has different eigenvalues on
$\ket{v_x}$ and $\ket{v_y}$.  Since $F_j$ is Hermitian, the two
vectors are orthogonal.  We have constructed $2^q$ orthonormal
vectors with eigenvalue $\lambda$, proving the claimed multiplicity.
For $q=0$, the same argument consists simply of the one vector
$\ket v$.

Normalization now gives $d\lambda\leq1$ for every eigenvalue
$\lambda$ of $\omega$, hence the operator bound.  Finally,
\begin{equation*}
 S(\omega)
 =\sum_{\lambda>0}\lambda\log\frac1\lambda
 \geq\sum_{\lambda>0}\lambda\log d
 =\log d,
\end{equation*}
where the sums include multiplicities.  The operator $\id_D/d$ has
trace $D/d$, so the bound does not assert domination by a normalized
state when $D>d$.
\end{proof}

\begin{proof}[Converse for Theorem~\ref{thm:many-copy-nonadditivity}]
Take any physical decomposition
$\tau_m=\sum_\ell p_\ell\ketbra{\psi_\ell}$ with $p_\ell>0$.
By Lemma~\ref{lem:support} and Eq.~\eqref{eq:tau-m-product-projector},
every branch belongs to the simultaneous $+1$ eigenspace of the $Q_j$.
Fix one branch $\ket\psi$.  Physicality gives definite total parity, so
\begin{equation}\label{eq:physical-branch-stabilized}
  Q_j\ket\psi=\ket\psi\quad(1\leq j\leq m),
  \qquad P\ket\psi=\pm\ket\psi.
\end{equation}

Use the Jordan--Wigner ordering with all Alice modes before all
Bob modes.  If $a_1,\ldots,a_{2m}$ and
$b_1,\ldots,b_{2m}$ are the respective local Majorana matrices,
and $P_A$ is Alice's parity, the global Majoranas are represented by
\begin{equation}\label{eq:global-majoranas-tensor-representation}
  \gA_j=a_j\otimes\id_B,
  \qquad \gB_j=P_A\otimes b_j.
\end{equation}
Consequently,
\begin{equation}\label{eq:Alice-Clifford-generators}
  Q_j=A_j\otimes b_{2j-1},
  \qquad A_j:=-\ii a_{2j}P_A.
\end{equation}
These symmetries constrain Alice's marginal
$\psi_A:=\Tr_B\ketbra\psi$.  The relation $P_Aa_j=-a_jP_A$
implies $A_j^\dagger=A_j$ and $A_j^2=\id$.
Using Eq.~\eqref{eq:physical-branch-stabilized} and invariance of
partial trace under conjugation on Bob, we obtain
\begin{align}\label{eq:rhoA-commutes-Ar}
 \psi_A
 &=\Tr_B\bigl[(A_j\otimes b_{2j-1})\ketbra\psi
                  (A_j\otimes b_{2j-1})\bigr]\notag\\
 &=A_j\Tr_B\bigl[(\id_A\otimes b_{2j-1})\ketbra\psi
                  (\id_A\otimes b_{2j-1})\bigr]A_j\notag\\
 &=A_j\psi_A A_j.
\end{align}
Although the parity of $\ket\psi$ can have either sign,
$P\ketbra\psi P=\ketbra\psi$ in both cases.  As
$P=P_A\otimes P_B$, the same partial-trace argument gives
\begin{equation}\label{eq:rhoA-commutes-PA}
  P_A\psi_A P_A=\psi_A.
\end{equation}
Equations~\eqref{eq:rhoA-commutes-Ar} and
\eqref{eq:rhoA-commutes-PA} say that $\psi_A$ commutes with all
$m+1$ Hermitian unitaries $P_A,A_1,\ldots,A_m$.
Their anticommutation relations follow directly from the local CAR:
\begin{equation*}
 P_AA_j=\ii a_{2j}=-A_jP_A,
 \qquad
 A_jA_\ell=a_{2j}a_{2\ell}=-A_\ell A_j
 \quad(j\ne\ell).
\end{equation*}
We can therefore apply Lemma~\ref{lem:clifford-degeneracy-bound}
with $k=m+1$ and $D=2^m$.  Since
$\lfloor(m+1)/2\rfloor=\lceil m/2\rceil$, it gives
\begin{equation}\label{eq:branch-physical-spectral-cap}
  \psi_A\leq\frac{\id_{2^m}}{2^{\lceil m/2\rceil}},
  \qquad
  S(\psi_A)\geq\left\lceil\frac m2\right\rceil\log2.
\end{equation}
This holds for every physical pure branch, whether Gaussian or
not.  Averaging proves
\begin{equation}\label{eq:physical-lower-bound}
  \EF^{\mathrm{phys}}(\tau^{\otimes m})
  \geq\left\lceil\frac m2\right\rceil\log2,
\end{equation}
which, together with Eq.~\eqref{eq:many-copy-upper-bound}, completes
the proof.
\end{proof}

The distinction between the two lower bounds first matters at four
copies.  The state $\tau_m$ has $m$ maximally mixed normal modes and
$m$ pure normal modes.  By Theorem~\ref{thm:low-rank-equality} and
Lemma~\ref{lem:pure-spinor}, every physical pure state in its support
is already Gaussian when $m\leq3$.  For $m\geq4$, the support contains
non-Gaussian physical states, and these can occur with positive weight
in physical decompositions.  The Clifford converse proves that they
cannot lower the formation value.

The parity condition contributes one of the $m+1$ anticommuting
symmetries in the converse.  It is also essential for the one-copy
value: as an ordinary two-qubit matrix, $\tau$ has the separable
decomposition
\begin{equation}\label{eq:tau-unphysical-product-ensemble}
 \tau=\frac12\ketbra{+}_A\otimes\ketbra{+}_B
      +\frac12\ketbra{-}_A\otimes\ketbra{-}_B,
 \qquad \ket\pm:=\frac{\ket0\pm\ket1}{\sqrt2}.
\end{equation}
These product branches do not have definite total fermion parity
and are excluded from the physical roof
\cite{VerstraeteCirac2003,WisemanVaccaro2003,Dariano2014}.

\begin{corollary}[Regularization and R\'enyi orders]
\label{cor:regularized-tau}
\label{cor:renyi-nonadditive}
For $\alpha\in(0,\infty]$, let
$\EFalpha^{\mathsf X}$ use the same ensembles as
$\EF^{\mathsf X}$ and the branch cost $S_\alpha(\psi_A)$, where
\begin{equation}\label{eq:renyi-entropy-def}
 S_\alpha(\omega)=\frac{\log\Tr\omega^\alpha}{1-\alpha}
 \quad(\alpha\ne1,\infty),\qquad
 S_\infty(\omega)=-\log\norm\omega_\infty,
\end{equation}
and $S_1=S$.  For both $\mathsf X=\mathrm{phys}$ and
$\mathsf X=\fG$,
\begin{align}\label{eq:renyi-many-copy}
  \EFalpha^{\mathsf X}(\tau^{\otimes m})
   &=\left\lceil\frac m2\right\rceil\log2,\\
 \lim_{m\to\infty}\frac1m
  \EFalpha^{\mathsf X}(\tau^{\otimes m})
   &=\frac12\log2.
   \label{eq:regularized-tau}
\end{align}
\end{corollary}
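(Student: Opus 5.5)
The plan is to reuse both halves of the proof of Theorem~\ref{thm:many-copy-nonadditivity}, since each already delivers a spectral statement that is stronger than an entropy statement. For the upper bound I would take the matching ensemble $\{\rho_\sigma\}$ from Eq.~\eqref{eq:upper-branch-density}. Each branch has an Alice marginal with flat nonzero spectrum of rank $2^{\lceil m/2\rceil}$. Every R\'enyi entropy of a flat spectrum of rank $r$ equals $\log r$; this covers $\alpha=\infty$, and also $\alpha\in(0,1)$, where one should check that zero eigenvalues are excluded from $\Tr\omega^\alpha$ by the usual convention on the support. Hence each branch costs exactly $\lceil m/2\rceil\log2$, and the ensemble is admissible for both roofs. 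This gives
\[
\EFalpha^{\mathrm{phys}}(\tau^{\otimes m})\leq\EFalpha^{\fG}(\tau^{\otimes m})\leq\left\lceil\frac m2\right\rceil\log2 .
\]
Here the first inequality holds because Gaussian branches are physical, as in Eq.~\eqref{eq:physical-vs-gaussian-inequality}.

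For the lower bound I would use the operator inequality in Eq.~\eqref{eq:branch-physical-spectral-cap}. Every physical pure branch $\ket\psi$ in any decomposition satisfies $\psi_A\leq\id/d$ with $d=2^{\lceil m/2\rceil}$. Every R\'enyi entropy of order $\alpha\in(0,\infty]$ is bounded below by the min-entropy $S_\infty(\psi_A)=-\log\norm{\psi_A}_\infty\geq\log d$, since $S_\alpha$ is nonincreasing in $\alpha$. A direct argument avoids invoking monotonicity. For $\alpha>1$, $\Tr\psi_A^\alpha\leq\norm{\psi_A}_\infty^{\alpha-1}\Tr\psi_A\leq d^{1-\alpha}$. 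For $\alpha<1$, $\Tr\psi_A^\alpha\geq d^{1-\alpha}$. Dividing the logarithm by $1-\alpha$ gives $S_\alpha\geq\log d$ in both cases. Averaging over branches gives the lower bound for $\mathsf X=\mathrm{phys}$, and therefore also for $\fG$. This proves Eq.~\eqref{eq:renyi-many-copy}.

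The regularization in Eq.~\eqref{eq:regularized-tau} then follows from $\lceil m/2\rceil/m\to1/2$. The point I expect to require the most care is not technical but conceptual. A naive proof would lower-bound $S_\alpha$ via the Shannon case, but that fails for $\alpha>1$, where $S_\alpha\leq S$. The uniform lower bound therefore has to come from the operator cap $\psi_A\leq\id/d$ of Lemma~\ref{lem:clifford-degeneracy-bound}, not from the entropy inequality alone. Noting that this cap was proved branchwise for arbitrary physical branches closes the argument for all $\alpha\in(0,\infty]$ simultaneously.
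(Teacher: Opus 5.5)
Your proposal is correct and follows essentially the same route as the paper. The flat spectrum of the matching ensemble gives the upper bound for every order at once, and the branchwise cap $\psi_A\leq\id/2^{\lceil m/2\rceil}$ gives $S_\infty\geq\lceil m/2\rceil\log2$, which you transfer to every $S_\alpha$ by the same two-case bound on $\Tr\omega^\alpha=\sum_j\lambda_j\lambda_j^{\alpha-1}$ that the paper uses. Your zero-eigenvalue caveat for $\alpha\in(0,1)$ needs no special convention, since $0^\alpha=0$ for every $\alpha>0$.
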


\begin{proof}
Equation~\eqref{eq:branch-physical-spectral-cap} gives
$S_\infty(\psi_A)\geq\lceil m/2\rceil\log2$ for every
physical branch.  The inequality $S_\alpha\geq S_\infty$ follows
by writing $\Tr\omega^\alpha=\sum_j\lambda_j\lambda_j^{\alpha-1}$
and bounding the second factor by the appropriate power of the
largest eigenvalue; the direction reverses at $\alpha=1$, as
does the sign of $1-\alpha$.  Thus the same lower bound holds for
every order.  Every branch of the Gaussian construction has a
flat local spectrum of rank $2^{\lceil m/2\rceil}$, so it
attains this bound simultaneously for all $\alpha$.  Dividing by
$m$ gives the limit.
\end{proof}

In particular, no R\'enyi order restores additivity for this
fermionic example.  This includes $\alpha=2$, for which the
bosonic Gaussian formation quantity is additive and strongly
superadditive through its relation to Gaussian R\'enyi-$2$
squashed entanglement
\cite{Lami2016monogamy,Lami2017logdet}.
The regularized value also has an operational interpretation when
the physical two-mode Bell state is taken as the unit of entanglement.
Under fermionic
\emph{local operations with classical communication} (LOCC),
the formation cost and distillable entanglement
of $\tau$ are both one half Bell pair per copy.  The local conversion
in Eq.~\eqref{eq:pair-local-normal-form} gives the achievable rates.
Appendix~\ref{app:operational-rates} defines the allowed protocols and
proves the converse bounds, including for protocols with vanishing error.
The same two-copy state also violates strong superadditivity,
which would require
\begin{equation}\label{eq:physical-strong-superadditivity}
 \EF^{\mathsf X}(\omega_{A_1A_2:B_1B_2})
 \geq\EF^{\mathsf X}(\omega_{A_1:B_1})
      +\EF^{\mathsf X}(\omega_{A_2:B_2}).
\end{equation}
For $\omega=\tau_{A_1B_1}\otimes\tau_{A_2B_2}$, the left side is
$\log2$ and the right side is $2\log2$, for both roofs.

\subsection{An exact full-rank family}
\label{sec:full-rank-nonadditivity}

The local Gaussian transformation in
Appendix~\ref{sec:gaussian-matching-proof} also allows us to evaluate the
two-copy roof for a noisy family.  Consider the family already used in
Ref.~\cite{SpeeSchwaigerGiedkeKraus2018}, written as
\begin{equation}\label{eq:noisy-tau-definition}
 \tau_t:=\frac14(\id-\ii t\gamma_2\gamma_3),
 \qquad 0\leq t\leq1.
\end{equation}
Its canonical covariance parameters are $t$ and $0$, so it is
Gaussian.  Its eigenvalues are $(1+t)/4$ and $(1-t)/4$, each
with multiplicity two; in particular, it is full rank for $t<1$ and $\tau_1=\tau$.

\begin{theorem}[Exact two-copy nonadditivity at full rank]
\label{thm:full-rank-nonadditivity}
\label{cor:full-rank-nonadditivity}
For $\mathsf X\in\{\mathrm{phys},\fG\}$ and $0\leq t\leq1$,
\begin{align}\label{eq:noisy-one-copy-value}
 \EF^{\mathsf X}(\tau_t)
   &=f(t):=h\!\left(\frac{1+\sqrt{1-t^2}}2\right),\\
 \EF^{\mathsf X}(\tau_t^{\otimes2})
   &=g(t):=\frac{1+t^2}{2}
      h\!\left(\frac{t^2}{1+t^2}\right).
      \label{eq:noisy-two-copy-value}
\end{align}
For every $t>0$,
\begin{equation}\label{eq:noisy-strict-gap}
  g(t)<2f(t).
\end{equation}
\end{theorem}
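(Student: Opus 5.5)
The plan has three parts: compute the one-copy value by a parity-block reduction to a qubit convex roof, handle two copies by local Gaussian reduction to the same type of problem, and finish with a real-variable comparison.

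\emph{One copy.} Since $Q=-\ii\gamma_2\gamma_3=X\otimes X$ preserves parity, $\tau_t=\frac14(\id+tQ)$ is block diagonal. It has weight $1/2$ on the even sector $\Span\{\ket{0,0},\ket{1,1}\}$ and weight $1/2$ on the odd sector $\Span\{\ket{1,0},\ket{0,1}\}$. In each sector the normalized block is the qubit state $\frac12(\id+t\sigma_x)$ in the basis given.

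A physical branch has definite parity, so every physical decomposition splits into an even-sector ensemble and an odd-sector ensemble. Each of these must decompose its own block, because $\tau_t$ has no parity coherences. A sector vector $a\ket{0,0}+b\ket{1,1}$ (or its odd analogue) has entanglement $F(C):=h\bigl((1+\sqrt{1-C^2})/2\bigr)$, where $C=2|ab|$. This $C$ is the transverse length $\sqrt{x^2+y^2}$ of the qubit Bloch vector, which is a convex function of that vector.

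Because $F$ is convex and increasing, Jensen's inequality gives a roof of at least $F(t)$ in each block. The bound is attained by the two Bloch vectors $(t,0,\pm\sqrt{1-t^2})$, which are Gaussian states; this is also where the $t=1$ case reduces to the one-copy analysis of $\tau$. Hence $\EF^{\mathrm{phys}}(\tau_t)=\EF^{\fG}(\tau_t)=f(t)$.

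\emph{Two copies.} I would expand
\[
\tau_t^{\otimes2}=\tfrac1{16}(\id+tQ_1)(\id+tQ_2),
\qquad
Q_1Q_2=\pm(\ii\gA_2\gA_4)(\ii\gB_1\gB_3),
\]
so the $t^2$ term is a product of one Alice bilinear and one Bob bilinear. Using the local Gaussian transformation of Appendix~\ref{sec:gaussian-matching-proof}, which is the two-copy conversion of Ref.~\cite{SpeeSchwaigerGiedkeKraus2018}, I would rotate the four Majoranas $\gA_2,\gA_4,\gB_1,\gB_3$ so that the state becomes an \emph{active} one-mode-per-party state times uncorrelated maximally mixed local modes. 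Local Gaussian unitaries preserve both roofs.

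The next step is to remove the uncorrelated local modes exactly. The appendix result says their contribution to the convex roof vanishes, which is stronger than the upper bound that subadditivity alone would give. The active two-mode state is again parity block diagonal. Its even and odd blocks carry weights $(1+t^2)/2$ and $(1-t^2)/2$, and I expect one of them to be separable after the rotation.

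On the entangled block I would rerun the Jensen/concurrence argument. The block's transverse Bloch length should be $2t^2/(1+t^2)$, up to the appropriate normalisation. The claimed closed form $g(t)=\frac{1+t^2}2h\bigl(t^2/(1+t^2)\bigr)$ is a weight times a binary entropy, which suggests an optimal ensemble of states each with Schmidt coefficients $\bigl(t^2/(1+t^2),\,1/(1+t^2)\bigr)$. I would verify this against $F$ of the block concurrence, and construct the Gaussian ensemble attaining it to get both $\mathsf X$ values at once.

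\emph{Main obstacle.} The hardest step is the exact reduction: the lower bound must survive the presence of the locally mixed modes, since at full rank the support argument of Lemma~\ref{lem:support} is unavailable. Here I would rely entirely on the appendix's removal lemma, and check carefully that the active state's blocks are what I claimed.

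\emph{Strict gap.} For the inequality $g(t)<2f(t)$ on $(0,1]$, I would first check the endpoint $t=1$, where the values are $\log2<2\log2$. Near $t=0$ I would compare series expansions. Writing $u=t^2$, one has $g\approx\frac u2\log(1/u)$ while $2f\approx\frac u2\log(4/u)$, so the leading terms agree and the gap comes from the subleading constant. I expect the gap is $2f-g\approx\frac u2(\log4+\tfrac{1}{2}\ldots)$, though I have not pinned down the exact form of the subleading term.

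For the intermediate range I would try to show that the difference, divided by $t^2$, is monotone, or else bound each side by elementary entropy inequalities. This calculus step is tedious but routine.
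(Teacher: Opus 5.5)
Your one-copy step is correct. Jensen's inequality applied to the transverse Bloch length proves the Wootters value for each parity sector directly, where the paper simply cites Wootters' formula.

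\textbf{The two-copy lower bound is missing.} You correctly flag it as the main obstacle, but there is no independent ``removal lemma'' to invoke. Showing that the locally maximally mixed spectators cannot lower the roof is exactly the converse this theorem must establish. Subadditivity alone gives only $\leq g(t)$, so deferring this step leaves the two-copy formula unproved.

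The missing idea is a local-measurement reduction. Take any physical pure-state ensemble of $\sigma_t\otimes\frac{\id}{2}\otimes\frac{\id}{2}$, and let Alice and Bob each measure their spectator occupation.
\begin{itemize}
\item Every nonzero conditional branch is pure and parity-definite, since the total parity and both measured occupations are fixed.
\item Deleting a fixed-occupation spectator introduces only local parity unitaries, so entanglement is unchanged.
\item The occupation projectors are even and commute with the active observables. Hence the conditional active branches, summed over branches and outcomes, reproduce the active marginal $\sigma_t$, and so form a physical ensemble of it.
\item A local measurement leaves the other party's marginal unchanged. Concavity of that party's entropy then shows the average pure-state entanglement cannot increase.
\end{itemize}
Every ensemble therefore has average at least $\EF^{\mathrm{phys}}(\sigma_t)$, and $\EF^{\fG}\geq\EF^{\mathrm{phys}}$ covers the Gaussian roof.

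Your predicted block data also need correcting. The active state is $\sigma_t=\frac14(\id+tX\otimes X-tY\otimes Y+t^2Z\otimes Z)$. Its even block $\frac14\bigl(\begin{smallmatrix}1+t^2&2t\\2t&1+t^2\end{smallmatrix}\bigr)$ has normalized concurrence $2t/(1+t^2)$, not $2t^2/(1+t^2)$. Only the former gives $\frac12(1+\sqrt{1-C^2})=1/(1+t^2)$ and hence $g(t)$, and it is consistent with the Schmidt coefficients you guessed. The odd block is $\frac{1-t^2}{4}\id_2$, which is separable.

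\textbf{The strict inequality is not proved, and your first suggested route fails.} Put $\Delta:=2f-g$. As $t\to0$, $\Delta(t)=t^2\log2+o(t^2)$: the $\frac{t^2}{2}$ terms cancel and the gap is $\frac{t^2}{2}\log4$. Also $\Delta(1)=\log2$. But $\Delta(1/2)/(1/4)\approx0.715>\log2$, so $\Delta(t)/t^2$ is not monotone. Endpoint and small-$t$ checks say nothing about the interior.

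A direct argument is to differentiate. For $0<t<1$, put $r=\sqrt{1-t^2}$. Then $f'(t)=\frac tr\operatorname{arctanh}r$ and $g'(t)=t\log\frac{1+t^2}{t^2}$, so
\[
 \frac{\Delta'(t)}{t}
 =-\Bigl(\frac1r-1\Bigr)\log t^2+\frac2r\log(1+r)-\log(1+t^2)
 \geq 0+2\log2-\log2 .
\]
This uses $\log t^2<0$, $\log(1+r)\geq r\log2$ by concavity, and $\log(1+t^2)\leq\log2$. Integrating from $\Delta(0)=0$ gives $\Delta(t)\geq\frac12t^2\log2>0$, and the endpoint $t=1$ follows by continuity.
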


The same local transformations as in Eq.~\eqref{eq:pair-local-SO-maps}
reduce two copies to an active two-mode Gaussian state with local
maximally mixed spectators.  The parity-sector formula then gives $g(t)$;
Appendix~\ref{app:fullrank-proof} constructs an optimal ensemble and proves
the matching converse.  The proof also gives the uniform gap
$2f(t)-g(t)\geq\tfrac12t^2\log2$.

For $t<1$, two copies have four mixed normal modes.  Their physical
pure-state decompositions can therefore contain non-Gaussian branches
already at two copies.  The lower bound in
Appendix~\ref{app:fullrank-proof} covers these branches as well.

Theorem~\ref{thm:full-rank-nonadditivity} establishes strict
nonadditivity for every $0<t<1$, including states arbitrarily close
to the maximally mixed state.  Unlike the endpoint $t=1$, the
formation value for more than two noisy copies is not determined
by this argument.

\begin{corollary}[An open set of nonadditive physical states]
\label{cor:open-nonadditivity}
For every fixed $0<t<1$, there is a trace-norm neighborhood of $\tau_t$
within the physical two-mode states on which
\begin{equation}\label{eq:open-nonadditivity}
 \EF^{\mathrm{phys}}(\rho^{\otimes2})
       <2\EF^{\mathrm{phys}}(\rho).
\end{equation}
\end{corollary}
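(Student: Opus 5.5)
The plan is a continuity argument for the strict gap $g(t)<2f(t)$ of Theorem~\ref{thm:full-rank-nonadditivity}. The key point is that both $\rho\mapsto\EF^{\mathrm{phys}}(\rho)$ and $\rho\mapsto\EF^{\mathrm{phys}}(\rho^{\otimes2})$ are continuous in trace norm on physical states of fixed finite dimension. The physical roof is the ordinary convex roof of $S(\Tr_B\cdot)$ restricted to the closed subset of pure states with definite total parity. I would write
\begin{equation*}
\Delta(\rho):=2\EF^{\mathrm{phys}}(\rho)-\EF^{\mathrm{phys}}(\rho^{\otimes2}).
\end{equation*}
Theorem~\ref{thm:full-rank-nonadditivity} gives $\Delta(\tau_t)=2f(t)-g(t)\geq\tfrac12t^2\log2>0$. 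Continuity of $\Delta$ at $\tau_t$ then yields an open ball on which $\Delta>0$, which is Eq.~\eqref{eq:open-nonadditivity}. The map $\rho\mapsto\rho^{\otimes2}$ is continuous, with $\|\rho^{\otimes2}-\sigma^{\otimes2}\|_1\leq2\|\rho-\sigma\|_1$, so it suffices to prove continuity of $\EF^{\mathrm{phys}}$ itself on physical states.

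For continuity I would use the block structure. Every physical state decomposes as $\rho=\rho_{\mathrm{even}}\oplus\rho_{\mathrm{odd}}$ on the two total-parity sectors $\mathcal H_\pm$. Any physical decomposition splits into branches lying in $\mathcal H_+$ or in $\mathcal H_-$. By Lemma~\ref{lem:support} applied blockwise, the branches in $\mathcal H_\pm$ decompose exactly the block $\Pi_\pm\rho\Pi_\pm$. Hence
\begin{equation*}
\EF^{\mathrm{phys}}(\rho)=p_+E_F^{+}(\rho_+)+p_-E_F^{-}(\rho_-),
\end{equation*}
where $p_\pm=\Tr\Pi_\pm\rho$, $\rho_\pm$ are the normalized blocks, and $E_F^{\pm}$ is the ordinary (unrestricted) convex roof of $S(\Tr_B\,\cdot)$ over pure states in the subspace $\mathcal H_\pm$. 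Each $E_F^\pm$ is the entanglement of formation of states living in a subspace of a finite-dimensional bipartite space. It is therefore uniformly continuous, by Nielsen's bound or the Winter-type estimate $|E_F(\rho)-E_F(\sigma)|\leq\epsilon\log d+(1+\epsilon)h(\epsilon/(1+\epsilon))$ with $\epsilon=\tfrac12\|\rho-\sigma\|_1$. That proof only uses convex-roof structure and local-marginal entropy continuity, so it applies verbatim inside a subspace.

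The step I expect to need care is the weighting. The maps $\rho\mapsto p_\pm\rho_\pm=\Pi_\pm\rho\Pi_\pm$ are trace-norm contractions. The terms $p_\pm E_F^\pm(\rho_\pm)$ are continuous even when $p_\pm\to0$, because $E_F^\pm$ is bounded by $\log d_A$. For $0<t<1$ both sectors of $\tau_t$ have weight $1/2$, so the simpler normalized estimate suffices anyway. Combining these, $\Delta$ is continuous at $\tau_t$. With $\delta:=\Delta(\tau_t)>0$, choose a radius $r$ such that $|\Delta(\rho)-\Delta(\tau_t)|<\delta$ for $\|\rho-\tau_t\|_1<r$; this ball within the physical states is the required neighborhood. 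No Gaussianity of $\rho$ is needed, since Theorem~\ref{thm:full-rank-nonadditivity} is used only at the center.
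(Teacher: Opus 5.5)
Your proof is correct, but it reaches continuity by a different route from the paper.

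The paper's argument is purely convex-analytic. $\EF^{\mathrm{phys}}$ is a finite convex function on the finite-dimensional convex set of parity-preserving states, so it is continuous on the relative interior, which consists of the full-rank states. Since $\tau_t$ and $\tau_t^{\otimes2}$ are full rank for $0<t<1$, the gap is continuous at $\tau_t$ and stays positive nearby.

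You instead split the roof into the two total-parity sectors, the many-mode analogue of Eq.~\eqref{eq:sector-splitting-cost}. You then invoke a uniform continuity bound for the ordinary entanglement of formation, and handle vanishing sector weights with the bound $\log d_A$. This costs an external quantitative theorem, but it buys more than the paper's argument:
\begin{enumerate}
\item uniform continuity of $\EF^{\mathrm{phys}}$ on all physical states;
\item in principle, an explicit neighborhood radius in terms of the gap $\tfrac12t^2\log2$;
\item validity at rank-deficient centers, so the same argument also gives a neighborhood of $\tau=\tau_1$, where the relative-interior argument does not apply.
\end{enumerate}

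Two small remarks.
\begin{itemize}
\item You do not need to rerun the continuity proof inside a subspace. By Lemma~\ref{lem:support}, every pure-state decomposition of a state supported in $\mathcal H_\pm$ already has its branches in $\mathcal H_\pm$. Hence $E_F^\pm$ coincides with the unrestricted entanglement of formation in the Alice-first matrix representation, and the standard theorem applies directly.
\item The estimate you quote is not quite Winter's bound for the entanglement of formation. His bound reads $\delta\log d_A+(1+\delta)h\bigl(\delta/(1+\delta)\bigr)$ with $\delta=\sqrt{\epsilon(2-\epsilon)}$ in place of $\epsilon$. The version linear in $\epsilon$ is his bound for conditional entropy with classical conditioning. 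Only qualitative continuity enters your argument, so this does not affect the conclusion.
\end{itemize}
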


\begin{proof}
The physical convex roof is a finite convex function on the
finite-dimensional convex set of parity-preserving quantum states.
It is therefore continuous on the relative interior, consisting of
the full-rank states.  Both $\tau_t$ and $\tau_t^{\otimes2}$ are
full rank.  Fermionic composition is continuous, so the gap
$2\EF^{\mathrm{phys}}(\rho)-\EF^{\mathrm{phys}}(\rho^{\otimes2})$
is continuous near $\tau_t$.  Its value there is strictly positive by
Theorem~\ref{thm:full-rank-nonadditivity}, and remains positive in a
sufficiently small neighborhood.
\end{proof}
This conclusion includes the Gaussian states in that neighborhood;
it concerns the physical roof and does not assert neighborhood
continuity or nonadditivity of the Gaussian-restricted roof.

\section{Exact communication consequences}
\label{sec:channel}
\label{sec:shor-violations}

The collective advantage in entanglement preparation has a striking
communication counterpart: channel uses that are individually
uninformative can transmit information with zero error when encoded
jointly.
The support of $\tau$ is the image of a one-mode input space under a
Gaussian isometry. If one output of this isometry is discarded, the
entanglement of an embedded input state vector becomes the entropy
at the remaining output. This observation relates the formation
calculation to a communication problem. The entropy bound from the
preceding section will give a converse, while a pair of Gaussian input
quantum states will attain it.

Each code state must be physical on the full input block.  It may
therefore correlate different channel uses, subject to total parity.
When we restrict to Gaussian encoding, each code state is required to
be Gaussian; its average need not be Gaussian.
For a parity-covariant channel $\mathcal L$, define
\begin{align}
 S_{\min}^{\mathrm{phys}}(\mathcal L)
 &:=\inf_{\ket\psi\,\mathrm{physical}}
       S\!\left(\mathcal L(\ketbra\psi)\right),
 \label{eq:physical-min-output-definition}\\
 \chi^{\mathrm{phys}}(\mathcal L)
 &:=\sup_{\{p_x,\rho_x\}}
 \left\{S\!\left(\mathcal L\!\left(\sum_xp_x\rho_x\right)\right)
       -\sum_xp_xS(\mathcal L(\rho_x))\right\}.
 \label{eq:physical-holevo-definition}
\end{align}
The first optimization is over unit vectors of definite total parity;
the second is over ensembles in which every $\rho_x$ commutes with
total parity.  Write $S_{\min}^{\fG}$ and $\chi^{\fG}$ when the respective
inputs are additionally required to be Gaussian.

For at most three input modes, every physical pure input is Gaussian
\cite[Lemma~3 and Sec.~III.E]{Bravyi2005Capacity}.
Refining each signal into parity-definite pure states cannot decrease
Holevo information, by concavity of output entropy.  Thus the physical
and Gaussian optima already coincide in these small dimensions.
Non-Gaussian pure inputs first become available on four modes; the
converse below applies to them as well.

\subsection{The channel and its fermionic composition}

To study collective encoding, we introduce the channel associated
with the quantum state $\tau$ and specify how its uses are combined
in the fermionic setting.
Let $C$ be one input mode and define
\begin{equation}\label{eq:gaussian-channel-isometry}
 V\ket0_C=\ket{\Phi_+}_{A,B},\qquad
 V\ket1_C=\ket{\Psi_+}_{A,B},\qquad
 V\frac{\id_C}{2}V^\dagger=\tau.
\end{equation}
Here $\ket{\Phi_+}=(\ket{0,0}+\ket{1,1})/\sqrt2$ and
$\ket{\Psi_+}=(\ket{1,0}+\ket{0,1})/\sqrt2$.
The isometry has the explicit Gaussian dilation
\begin{equation}\label{eq:channel-explicit-dilation}
 V\ket\psi=U(\ket\psi_A\ket0_B),\qquad
 U=\exp[-(\pi/4)\gamma_2\gamma_4]
   =\exp[-\ii(\pi/4)X_A\otimes Y_B],
\end{equation}
with the Jordan--Wigner ordering $A,B$.
For communication, $A$ is the receiver's output and $B$ is the
inaccessible environment.  The reduced channel
\begin{equation}\label{eq:gaussian-channel-definition}
 \mathcal N(\xi)=\Tr_B(V\xi V^\dagger)
\end{equation}
is therefore Gaussian.  The dilation also identifies the input operators
retained by the channel.  Conjugating the two output Majoranas gives
\[
 U^\dagger\gamma_1U=\gamma_1,
 \qquad U^\dagger\gamma_2U=-\gamma_4.
\]
The first belongs to the input, while the second belongs to the vacuum
environment and has zero expectation.  The channel retains the first
Majorana and erases the second.  In covariance-matrix notation,
\begin{equation}\label{eq:gaussian-channel-covariance-action}
 M\longmapsto D_1MD_1^{\transpose},\qquad D_1=\diag(1,0).
\end{equation}

Write $\Phi_m:=\mathcal N^{\otimes m}$ for the fermionic composition
of these dilations, with the tensor-product convention of
Sec.~\ref{sec:preliminaries}.  We must specify this composition on all
operators, since a covariance matrix alone does not determine a
non-Gaussian input.  Use the ordered input and output Majoranas
$\gamma_{2r-1}=Z_1\cdots Z_{r-1}X_r$ and
$\gamma_{2r}=Z_1\cdots Z_{r-1}Y_r$.  Under the combined dilation, an
odd-indexed output Majorana returns to the corresponding input
Majorana.  An even-indexed output Majorana returns, with a minus sign,
to the even-indexed Majorana of its own environment mode.  Conjugation takes a product to the product of these images.

Consider a Majorana monomial with distinct ordered indices.  If an
even index occurs, its dilation contains one Majorana from each of a
nonempty set of distinct environment modes.  This environmental factor
has zero expectation in the product vacuum.  If no even index occurs, the
environment contributes the identity.  The adjoint channel therefore has
the action displayed below.  Since the Majorana monomials form an orthogonal
operator basis and the surviving coefficients are zero or one, the
Schr\"odinger channel has the same action:
\begin{equation}\label{eq:channel-monomial-action}
 \Phi_m(\gamma_{j_1}\cdots\gamma_{j_s})=
 \begin{cases}
  \gamma_{j_1}\cdots\gamma_{j_s},&\text{if every $j_\ell$ is odd},\\
  0,&\text{otherwise},
 \end{cases}
 \qquad \Phi_m(\id)=\id.
\end{equation}
Here $j_1<\cdots<j_s$; repeated indices are removed using the CAR.
In particular, $M\mapsto D_mMD_m^{\transpose}$ with
$D_m=\diag(1,0,\ldots,1,0)$.
This is a special case of Bravyi's fermionic product
channels~\cite{Bravyi2005Capacity}, with attenuation coefficients zero or one.
An explicit completely positive representation is useful below.  Put
\begin{equation}\label{eq:channel-K-generators}
 P=(-\ii)^m\gamma_1\cdots\gamma_{2m},\qquad
 K_r=\ii P\gamma_{2r},\qquad
 \mathcal E_r(T)=\frac12(T+K_rTK_r).
\end{equation}
Each $K_r$ is a Hermitian unitary.  Conjugation by $K_r$ changes the
sign of $\gamma_{2r}$ and fixes every other Majorana.  Thus
$\mathcal E_r$ erases precisely the monomials containing $\gamma_{2r}$.
The maps $\mathcal E_r$ are trace preserving and completely positive,
and their diagonal actions on this basis commute.  Consequently,
\begin{equation}\label{eq:channel-conditional-expectation}
 \Phi_m=\mathcal E_1\circ\cdots\circ\mathcal E_m.
\end{equation}
This gives a completely positive representation of the full map, in
addition to its Gaussian dilation.

The combined isometry $V_m$ satisfies
$V_m(\id_{2^m}/2^m)V_m^\dagger=\tau^{\otimes m}$.  In particular,
its range is the support considered in the formation proof, and it
preserves the total parity of an input vector.  The output
$\Phi_m(\ketbra\psi)$ is the Alice marginal of
$V_m\ketbra\psi V_m^\dagger$.  The spectral bound below can therefore
also be read as the bound on the pure branches of
$\tau^{\otimes m}$.  We give its short channel proof to make clear that
it applies to every physical mixed input as well.

The fermionic composition matters even though every code state is
physical.  As a map on one-qubit matrices,
$\mathcal N(\xi)=(\xi+X\xi X)/2$.
However, its ordinary qubit tensor square annihilates $Y\otimes X$,
whereas $\Phi_2$ retains
$\ii\gamma_1\gamma_3=Y\otimes X$.
The latter observable is globally even: it is a product of one odd
Majorana from each mode.  For example,
$\rho_\pm=(\id_4\pm Y\otimes X)/4$ are physical states, and
$\Phi_2(\rho_\pm)=\rho_\pm$, while the qubit tensor square sends both
to $\id_4/4$.
The parity strings in the Jordan--Wigner representation account for this
difference~\cite{Bravyi2005Capacity,Szalay2021}.
Thus the single-mode matrix representation does not by itself specify
independent fermionic uses.  In what follows, $\Phi_m$ always means
the composition fixed by the dilation, or equivalently by
Eq.~\eqref{eq:channel-monomial-action}.

\subsection{The output bound and an optimal code}

To establish the optimality of a communication code, we first derive
a lower bound on the output entropy from a bound on the spectrum of
every physical channel output. We then construct a code that attains
the resulting communication bound.

\begin{lemma}[Output spectral bound]\label{lem:channel-output-bound}
For every physical input $\rho$ on $m$ modes,
\begin{equation}\label{eq:channel-output-operator-bound}
 \Phi_m(\rho)\leq\frac{\id_{2^m}}{d_m},\qquad
 d_m:=2^{\lceil m/2\rceil}.
\end{equation}
Consequently, $S(\Phi_m(\rho))\geq\log d_m$.
\end{lemma}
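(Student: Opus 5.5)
The plan is to apply Lemma~\ref{lem:clifford-degeneracy-bound} directly to the output $\Phi_m(\rho)$. We need $m+1$ pairwise anticommuting Hermitian unitaries on the $m$ output modes that commute with every output of a physical input. Natural candidates are the total parity $P=(-\ii)^m\gamma_1\cdots\gamma_{2m}$ and the operators $K_r=\ii P\gamma_{2r}$, $1\leq r\leq m$, from Eq.~\eqref{eq:channel-K-generators}. These mirror the symmetries $P_A,A_1,\ldots,A_m$ used in the formation converse. In fact $A_j=-\ii a_{2j}P_A$ agrees with $K_j$ up to sign, since $a_{2j}$ anticommutes with $P_A$.

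\emph{Step 1: the algebraic relations.} Check that the candidates satisfy the hypotheses of the lemma. $P$ anticommutes with each odd element $\gamma_{2r}$, so $K_r^\dagger=-\ii\gamma_{2r}P=\ii P\gamma_{2r}=K_r$ and $K_r^2=\ii P\gamma_{2r}\,\ii P\gamma_{2r}=-\ii^2 P^2\gamma_{2r}^2=\id$. Next, $PK_r=\ii\gamma_{2r}=-K_rP$. For $r\neq s$, $K_rK_s=\ii P\gamma_{2r}\ii P\gamma_{2s}=\gamma_{2r}\gamma_{2s}$, which anticommutes with $K_sK_r=\gamma_{2s}\gamma_{2r}$. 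This gives $k=m+1$ pairwise anticommuting Hermitian unitaries.

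\emph{Step 2: commutation with the output.} By Eq.~\eqref{eq:channel-conditional-expectation}, $\Phi_m=\mathcal E_1\circ\cdots\circ\mathcal E_m$, and the $\mathcal E_r$ commute. Each $\mathcal E_r$ is a twirl over the group $\{\id,K_r\}$, so $K_r\,\mathcal E_r(T)\,K_r=\mathcal E_r(T)$. Because the $\mathcal E_s$ commute, we may apply $\mathcal E_r$ last, which shows that $\Phi_m(\rho)$ commutes with every $K_r$. Alternatively, by Eq.~\eqref{eq:channel-monomial-action} the output is spanned by monomials in odd-indexed Majoranas. Conjugation by $K_r$ flips only $\gamma_{2r}$, so it fixes each such monomial.

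For commutation with $P$: a physical input commutes with $P$, so it contains only even monomials. $\Phi_m$ maps each monomial either to itself or to zero, so it preserves parity, and the output is again a combination of even monomials, which commute with $P$. The monomial description gives this step most transparently.

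\emph{Step 3: conclusion.} Lemma~\ref{lem:clifford-degeneracy-bound} with $k=m+1$ and $D=2^m$ gives eigenvalue multiplicity at least $2^{\lfloor (m+1)/2\rfloor}=2^{\lceil m/2\rceil}=d_m$. Since $\Phi_m$ is trace preserving, the output has unit trace, so $\Phi_m(\rho)\leq\id/d_m$ and $S(\Phi_m(\rho))\geq\log d_m$.

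The main obstacle is Step 2. One must make sure the commutation is derived from the genuinely fermionic composition fixed by Eq.~\eqref{eq:channel-monomial-action}, not from the qubit tensor power. One must also verify that parity commutation uses physicality of the input, since that provides the $(m+1)$st generator $P$. Without it only $m$ anticommuting symmetries remain, and the bound drops to $2^{\lfloor m/2\rfloor}$.
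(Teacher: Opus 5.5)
Your proof is correct and is essentially the paper's argument: you use the same $m+1$ symmetries $P,K_1,\ldots,K_m$, the same commuting-twirl argument for $K_r\Phi_m(\rho)K_r=\Phi_m(\rho)$, and Lemma~\ref{lem:clifford-degeneracy-bound} with $k=m+1$. For $P$-invariance you use the monomial action, where the paper instead notes that each $\mathcal E_r$ commutes with conjugation by $P$ because $PK_r=-K_rP$; the two are equivalent. One wording slip: $\gamma_{2r}\gamma_{2s}$ does not anticommute with $\gamma_{2s}\gamma_{2r}$, since they are mutual inverses and hence commute. It equals the negative of $\gamma_{2s}\gamma_{2r}$, which is exactly the relation $K_rK_s=-K_sK_r$ you need.
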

\begin{proof}
Write $\omega=\Phi_m(\rho)$.  Averaging $T$ and $K_rTK_r$ produces an
operator fixed by conjugation with $K_r$.  Since the averages
$\mathcal E_r$ commute, applying the other averages preserves this
property.  Hence $K_r\omega K_r=\omega$ for every $r$.
Moreover, each $K_r$ anticommutes with $P$, so each $\mathcal E_r$
commutes with conjugation by $P$.  The input identity $P\rho P=\rho$
therefore implies $P\omega P=\omega$.

We have obtained $m+1$ symmetries of the output.  They are pairwise
anticommuting: $PK_r=-K_rP$, and for $r\ne s$,
$K_rK_s=\gamma_{2r}\gamma_{2s}=-K_sK_r$.
Lemma~\ref{lem:clifford-degeneracy-bound} now gives multiplicity at least
$2^{\lfloor(m+1)/2\rfloor}=d_m$ for each occurring eigenvalue.
Unit trace implies $d_m\lambda\leq1$ for every eigenvalue $\lambda$,
which proves Eq.~\eqref{eq:channel-output-operator-bound}.
It also gives
$S(\Phi_m(\rho))=\sum_{\lambda>0}\lambda\log(1/\lambda)
\geq\log d_m$.
\end{proof}

\begin{theorem}[Exact minimum output entropy and Holevo information]
\label{thm:fermionic-shor-violations}
\label{thm:channel-exact-information}
For every $m\geq1$,
\begin{align}
 S_{\min}^{\mathrm{phys}}(\Phi_m)
 &=S_{\min}^{\fG}(\Phi_m)
 =\left\lceil\frac m2\right\rceil\log2,
 \label{eq:channel-minimum-output-many-copy}\\
 \chi^{\mathrm{phys}}(\Phi_m)
 &=\chi^{\fG}(\Phi_m)
 =\left\lfloor\frac m2\right\rfloor\log2.
 \label{eq:channel-holevo-many-copy}
\end{align}
An equiprobable ensemble of even pure Gaussian inputs attains the Holevo
optimum with pairwise orthogonal output supports; each of its codewords
also attains the minimum output entropy.
\end{theorem}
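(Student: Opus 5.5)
The converse bounds follow from Lemma~\ref{lem:channel-output-bound}, and the matching constructions come from the Majoran matching of Sec.~\ref{sec:gaussian-nonadditivity}, transported to the input side through the isometry $V_m$.

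For the minimum output entropy, the lower bound is immediate. Every physical pure input satisfies $S(\Phi_m(\ketbra\psi))\geq\lceil m/2\rceil\log2$ by Lemma~\ref{lem:channel-output-bound}. Since Gaussian inputs are physical (Lemma~\ref{lem:pure-gaussian-parity}), we have $S_{\min}^{\fG}\geq S_{\min}^{\mathrm{phys}}$.

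For achievability, I take the input Majoranas $\gamma_1,\gamma_3,\ldots,\gamma_{2m-1}$, which are the ones the channel retains, and pair them into $\lfloor m/2\rfloor$ bilinears $\ii\gamma_{4j-3}\gamma_{4j-1}$. I then complete to a pure Gaussian state by pairing the even Majoranas among themselves, and, if $m$ is odd, pairing the leftover odd Majorana with a leftover even one. By Eq.~\eqref{eq:channel-monomial-action} the output covariance is $D_mKD_m^{\transpose}$. This has $\lfloor m/2\rfloor$ unit singular values and $\lceil m/2\rceil$ zeros, so the output is Gaussian with flat spectrum and entropy $\lceil m/2\rceil\log2$. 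Equivalently, under $V_m$ these inputs are the branches $\rho_\sigma$ of the formation construction. I will choose the pairing signs so that each codeword has even parity; flipping two signs preserves parity, and flipping an even-Majorana pair sign does not change the output.

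For the Holevo quantity, the upper bound comes from $S(\Phi_m(\bar\rho))\leq m\log2$ together with $\sum_xp_xS(\Phi_m(\rho_x))\geq\lceil m/2\rceil\log2$. The second inequality holds because Lemma~\ref{lem:channel-output-bound} applies to mixed physical inputs too. Hence $\chi\leq(m-\lceil m/2\rceil)\log2=\lfloor m/2\rfloor\log2$.

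For achievability of the Holevo bound, I use the $2^{\lfloor m/2\rfloor}$ codewords obtained by choosing the signs $\sigma_j$ of the retained bilinears $\ii\gamma_{4j-3}\gamma_{4j-1}$. I fix the remaining pairs so as to keep each codeword even. If some sign flip would change parity, I compensate with the sign of a pair of even Majoranas, which the channel erases anyway. The output for sign string $\sigma$ is proportional to $\prod_j(\id+\sigma_j\ii\gamma_{4j-3}\gamma_{4j-1})$, with all other factors traced to identity. Distinct $\sigma$ therefore give outputs supported on orthogonal joint eigenspaces of commuting projectors. The uniform average over $\sigma$ removes these factors and yields $\id/2^m$, with entropy $m\log2$. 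Each output has entropy $\lceil m/2\rceil\log2$, so $\chi=\lfloor m/2\rfloor\log2$, and this saturates the bound with Gaussian inputs.

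The main obstacle is the bookkeeping that makes every codeword simultaneously pure Gaussian, even, and a minimizer of output entropy, while keeping the orthogonality of output supports. The even-parity requirement in the statement is the delicate point, especially when $m$ is odd and a cross pair between an odd and an even Majorana is forced. My plan is to handle parity entirely through the sign of an erased pair, or of the cross pair, since the channel kills any bilinear containing an even Majorana. I will then check via Wick's rule that the output really is the claimed product projector.
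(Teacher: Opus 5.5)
Your proposal is correct and follows the paper's proof. The converse is Lemma~\ref{lem:channel-output-bound} applied to pure and mixed physical inputs, and your matching codewords are the paper's pair codewords $\ket{\zeta_s}$ of Eq.~\eqref{eq:channel-gaussian-codewords}: the retained bilinear $\ii\gamma_1\gamma_3$ carries the sign $s$, and the erased bilinear $-\ii\gamma_2\gamma_4$ takes the same sign to make the state even; for odd $m$, choosing your cross pair as $(\gamma_{2m-1},\gamma_{2m})$ reproduces the paper's appended vacuum mode $\ket0$.
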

\begin{proof}
The output bound gives
$S_{\min}^{\mathrm{phys}}(\Phi_m)\geq\log d_m$.
For any physical ensemble, the average output has entropy at most
$m\log2$, while each conditional output has entropy at least
$\log d_m$.  Therefore its Holevo information is at most
$m\log2-\log d_m=\lfloor m/2\rfloor\log2$.
These bounds also hold under a Gaussian input restriction.  It remains
to construct a Gaussian ensemble attaining both of them.

For two uses, take $s\in\{+1,-1\}$ and the even pure Gaussian states
\begin{equation}\label{eq:channel-gaussian-codewords}
 \ket{\zeta_s}=\frac{\ket{0,0}+s\ii\ket{1,1}}{\sqrt2}.
\end{equation}
For example, these states are prepared from the vacuum by the
quadratic unitaries $\exp[(s\pi/4)\gamma_2\gamma_3]$.
To compute the output, expand the
quantum state
as
\[
 \ketbra{\zeta_s}
 =\frac14\bigl[\id_4+Z\otimes Z
                   +s(X\otimes Y+Y\otimes X)\bigr].
\]
Only the identity and $Y\otimes X=\ii\gamma_1\gamma_3$ survive:
$X\otimes Y=-\ii\gamma_2\gamma_4$ and $Z\otimes Z$ both contain
even-indexed Majoranas.  Hence
\begin{equation}\label{eq:channel-code-outputs}
 \omega_s:=\Phi_2(\ketbra{\zeta_s})
 =\frac14(\id_4+sY\otimes X).
\end{equation}
The operator $Y\otimes X$ has two eigenvalues, $+1$ and $-1$, each
of multiplicity two.  Thus $\omega_s$ is the normalized projector
onto its $s$ eigenspace.  The two outputs have orthogonal supports,
spectrum $\{1/2,1/2,0,0\}$, and uniform average $\id_4/4$.
The projective measurement
$\{(\id_4\pm Y\otimes X)/2\}$ recovers $s$ without error.
It is physical because the measured Majorana bilinear is even.
It also admits a Gaussian implementation: for
$R=\exp[(\pi/4)\gamma_2\gamma_3]$, one has
$R(Z\otimes\id_2)R^\dagger=\ii\gamma_1\gamma_3$.
Applying $R^\dagger$ and measuring the occupation of the first mode
therefore performs the required measurement.

For $m=2q+r$, where $r\in\{0,1\}$, encode $q$ independent signs
using Eq.~\eqref{eq:channel-gaussian-codewords} on disjoint input pairs.
If $r=1$, append $\ket0$ in the remaining mode; its output is
$\id_2/2$.  The $2^q$ resulting codewords are even, pure, and Gaussian.
Since each pair state is even, composition across disjoint pairs gives
the product of the corresponding pair outputs.  Different sign
strings differ on at least one pair and hence have orthogonal output
supports.  Each output is flat on a subspace of dimension
$2^q2^r=d_m$, so its entropy is $(q+r)\log2=\log d_m$.
Averaging independently over every sign gives
$\id_{2^m}/2^m$.  The conditional entropy and the entropy of the average
therefore attain the two bounds above.
\end{proof}

In particular, the exact two-use violations are
\begin{align}
 S_{\min}^{\mathrm{phys}}(\mathcal N^{\otimes2})
 &=\log2<2S_{\min}^{\mathrm{phys}}(\mathcal N)=2\log2,
 \label{eq:channel-minimum-output-violation}\\
 \chi^{\mathrm{phys}}(\mathcal N^{\otimes2})
 &=\log2>2\chi^{\mathrm{phys}}(\mathcal N)=0.
 \label{eq:channel-holevo-violation}
\end{align}
The same equalities hold for the Gaussian optimizations.
An isolated physical one-mode input is diagonal and is always sent to
$\id_2/2$; the two-use code places information in the even correlation
between the retained Majoranas.
The Gaussian minimum-output formula is already a specialization of the
parity-invariant calculation in Eq.~(27) of
Ref.~\cite{Bravyi2005Capacity}, whose channel-covariance argument also gives the
corresponding Gaussian Holevo optimum.  Bravyi's general Gaussian
optimization also allows parity-violating first moments and therefore
differs from the physical optimization considered here.
The spectral argument here establishes optimality against arbitrary
physical, including non-Gaussian, inputs and yields the following
finite-block converse.

A parallel-block code consists of a physical state on all $m$ input
modes for each message, followed by $\Phi_m$ and a joint measurement
of its outputs.  The sender and receiver share no prior entanglement.
The classical capacity is the supremum of asymptotically achievable
rates $(\log M_m)/m$ with vanishing average error; the zero-error
capacity requires zero error at every blocklength.

\begin{corollary}[Zero-error communication and a strong converse]
\label{cor:channel-capacity}
For $m$ modes, the largest zero-error alphabet has size
\begin{equation}\label{eq:channel-zero-error-alphabet}
 M_0(m)=2^{\lfloor m/2\rfloor}.
\end{equation}
Every unassisted code for $M$ equiprobable messages, with physical input
states and any decoding POVM, satisfies the sharp bound
\begin{equation}\label{eq:channel-success-bound}
 P_{\mathrm{succ}}
 \leq\min\left\{1,\frac{2^{\lfloor m/2\rfloor}}{M}\right\}.
\end{equation}
The unassisted classical capacity and zero-error capacity per input mode
are
\begin{equation}\label{eq:channel-capacity}
 C(\mathcal N)=C_0(\mathcal N)=\frac12\log2.
\end{equation}
Gaussian codewords and measurements of the pairwise Majorana bilinears
attain both capacities.
\end{corollary}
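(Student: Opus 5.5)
The plan has three parts: a one-shot success bound from Lemma~\ref{lem:channel-output-bound}, the zero-error alphabet size and capacities that follow from it, and a strong converse read off the same bound.

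\textbf{The success bound.} Fix a code with physical input states $\rho_1,\ldots,\rho_M$ on the $m$ input modes and a decoding POVM $\{E_1,\ldots,E_M\}$ on the outputs. Write $\omega_x:=\Phi_m(\rho_x)$. Lemma~\ref{lem:channel-output-bound} gives the operator inequality $\omega_x\leq\id_{2^m}/d_m$ for every $x$. Then
\begin{equation*}
 P_{\mathrm{succ}}=\frac1M\sum_x\Tr(E_x\omega_x)
 \leq\frac{1}{Md_m}\sum_x\Tr E_x
 =\frac{2^m}{Md_m}
 =\frac{2^{\lfloor m/2\rfloor}}{M},
\end{equation*}
since $m-\lceil m/2\rceil=\lfloor m/2\rfloor$. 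Combined with the trivial bound $P_{\mathrm{succ}}\leq1$, this is Eq.~\eqref{eq:channel-success-bound}. The POVM need not be physical or Gaussian for this estimate, so the bound is stronger than stated. Sharpness follows from the code of Theorem~\ref{thm:channel-exact-information}. For $M\leq2^{\lfloor m/2\rfloor}$, use any $M$ of its codewords to get success probability one. For larger $M$, map the extra messages onto existing codewords; guessing among messages sharing a codeword yields exactly $2^{\lfloor m/2\rfloor}/M$ when $M$ is a multiple of it, which is enough for sharpness.

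\textbf{Zero error and capacities.} Zero error means $P_{\mathrm{succ}}=1$, so the bound forces $M\leq2^{\lfloor m/2\rfloor}$. The orthogonal Gaussian code of Theorem~\ref{thm:channel-exact-information}, decoded by the pairwise bilinear measurements $\{(\id\pm\ii\gamma_{4j-3}\gamma_{4j-1})/2\}$, attains this size, which gives Eq.~\eqref{eq:channel-zero-error-alphabet}. This yields $C_0\geq\lim\lfloor m/2\rfloor\log2/m=\frac12\log2$. Trivially $C\geq C_0$.

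\textbf{Strong converse, and the main obstacle.} For the converse, take a sequence of codes at rate $R>\frac12\log2$, so $M_m\geq e^{mR}$. The success bound then gives $P_{\mathrm{succ}}\leq e^{m(\frac12\log2-R)+\log 2}\to0$. Hence no rate above $\frac12\log2$ is achievable even with nonvanishing fixed error, so $C\leq\frac12\log2$ and the strong converse follows.

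The only subtle point is the definition of capacity in terms of average error with equiprobable messages. The bound above concerns the average success probability, which is exactly what average error controls. If maximal error were intended instead, it is only stronger, so the converse still holds. The remaining work is bookkeeping: the stated measurements are physical, since even bilinears can be measured, and Gaussian, as already shown via the rotation $R$ in the proof of Theorem~\ref{thm:channel-exact-information}.
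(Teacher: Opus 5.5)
Your proposal is correct and follows the paper's argument. The operator bound of Lemma~\ref{lem:channel-output-bound}, summed against the POVM, gives the success bound; the orthogonal Gaussian pair code gives $M_0(m)$ and the zero-error rate; and the same bound yields the strong converse.

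One small fix: your sharpness argument does not need $M$ to be a multiple of $2^{\lfloor m/2\rfloor}$. Decoding each codeword to a single designated message gives success exactly $2^{\lfloor m/2\rfloor}/M$ for every $M\geq 2^{\lfloor m/2\rfloor}$; this is what the paper does by assigning zero POVM elements to the extra messages. Uniform guessing within groups of unequal size gives the same value.
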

\begin{proof}
Let $D_x\geq0$ be the decoding operators, with
$\sum_{x=1}^M D_x=\id_{2^m}$, and let $\rho_x$ be the input for message
$x$.  The operator bound gives
\begin{equation}
 P_{\mathrm{succ}}
 =\frac1M\sum_x\Tr[D_x\Phi_m(\rho_x)]
 \leq\frac1{Md_m}\sum_x\Tr D_x
 =\frac{2^{\lfloor m/2\rfloor}}M.
\end{equation}
The code of Theorem~\ref{thm:channel-exact-information} attains
$M_0(m)$, and any subset gives a zero-error code of smaller size.
For $M>M_0(m)$, use the optimal code for $M_0(m)$ messages and assign
zero decoding operators to the remaining messages; this attains the
second branch of Eq.~\eqref{eq:channel-success-bound}.
Independent pairs achieve rate $(\log2)/2$ with zero error.
Conversely, if a sequence of codes satisfies
$\liminf_{m\to\infty}(\log M_m)/m>(\log2)/2$, the same bound forces
$P_{\mathrm{succ}}\to0$.  This proves both capacities and the strong
converse directly.
\end{proof}

\begin{remark}[R\'enyi minimum output entropies and output norms]
\label{rem:renyi-min-output}
For every $\alpha\in(0,\infty]$, define $S_{\min,\alpha}^{\mathrm{phys}}$
and $S_{\min,\alpha}^{\fG}$ by replacing $S$ with $S_\alpha$ in the
minimum-output definitions.  The eigenvalue bound and the flat output
spectra above give
\begin{equation}\label{eq:channel-renyi-minimum}
 S_{\min,\alpha}^{\mathrm{phys}}(\Phi_m)
 =S_{\min,\alpha}^{\fG}(\Phi_m)=\log d_m.
\end{equation}
Indeed, every R\'enyi entropy is at least
$S_\infty(\Phi_m(\rho))\geq\log d_m$, and the codewords attain equality.
For $1<p<\infty$, the corresponding maximal output Schatten norm is
\begin{equation}\label{eq:channel-output-p-norm}
 \nu_p^{\mathrm{phys}}(\Phi_m)
 :=\sup_{\ket\psi\,\mathrm{physical}}
 \norm{\Phi_m(\ketbra\psi)}_p
 =d_m^{1/p-1}=\nu_p^{\fG}(\Phi_m).
\end{equation}
At $p=\infty$ the value is $d_m^{-1}$.
Thus multiplicativity fails already for two uses at every $p>1$,
including $p=\infty$.
In the bosonic setting, tensor powers of a single-mode Gaussian channel
have multiplicative maximal output $p$-norms under Gaussian inputs for
$1<p<\infty$~\cite{SerafiniEisertWolf2005}.
\end{remark}

\subsection{Noiseless logical registers and quantum capacity}
\label{sec:quantum-capacity}

The correlations retained by the channel also carry quantum information.
We count logical qubits encoded in a fixed-total-parity subspace; both
encoding and decoding may use arbitrary operations with parity-preserving
dilations and physical local auxiliary systems.  No shared entanglement is
supplied. For a $D$-dimensional logical register, let
$\omega_{R\widehat L}$ be the joint state of the reference and decoded
register when the input is the normalized maximally entangled state
$\ket{\Omega_D}$. We use the entanglement fidelity
\[
 F_{\mathrm e}
 :=\bra{\Omega_D}\omega_{R\widehat L}\ket{\Omega_D}.
\]
The logical register and its reference are represented in
fixed-parity subspaces. Define
$Q(\mathcal N)$ and $Q_0(\mathcal N)$ using parallel blocks, respectively
with fidelity tending to one and with exact transmission, and rate
$(\log D)/m$.
The proof of this statement is presented in Appendix \ref{Proof}.

\begin{theorem}[Exact quantum transmission and a strong converse]
\label{thm:quantum-capacity}
For $m\geq1$, put
\begin{equation}\label{eq:quantum-logical-size}
 q_m:=\max\!\left\{0,\left\lfloor\frac{m-1}{2}\right\rfloor\right\}.
\end{equation}
The largest exactly transmitted logical register for $\Phi_m$ has
dimension $2^{q_m}$.  Every physical encoder and decoder transmitting
a $D$-dimensional logical register obeys
\begin{equation}\label{eq:quantum-fidelity-converse}
 F_{\mathrm e}\leq\min\{1,2^{q_m}/D\}.
\end{equation}
Consequently,
\begin{equation}\label{eq:quantum-capacity}
 Q(\mathcal N)=Q_0(\mathcal N)=\frac12\log2,
\end{equation}
or one half logical qubit per input mode.  Rates above this value have
entanglement fidelity tending to zero.
\end{theorem}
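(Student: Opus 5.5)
The plan has two halves: an exact code of logical dimension $2^{q_m}$ built from the retained Majoranas, and a converse bounding $F_{\mathrm e}$ through a conditional min-entropy estimate on the same algebra. The capacity statements then follow as in Corollary~\ref{cor:channel-capacity}.

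\emph{Structure of the retained algebra.} By Eqs.~\eqref{eq:channel-monomial-action} and \eqref{eq:channel-conditional-expectation}, $\Phi_m$ is the trace-preserving conditional expectation onto the commutant of $\{K_1,\ldots,K_m\}$. This commutant is the algebra $\mathcal A$ generated by the retained Majoranas $\gamma_1,\gamma_3,\ldots,\gamma_{2m-1}$, and $\Phi_m$ acts as the identity on it. A logical register of fixed total parity, with parity-preserving encoding, can only use the parity-even part $\mathcal A_{\ev}$. This is the even subalgebra of a Clifford algebra on $m$ generators, so
\[
\mathcal A_{\ev}\cong\mathrm{Cl}_{m-1}\cong
\begin{cases}
M_{2^{(m-1)/2}}, & m\ \text{odd},\\
M_{2^{(m-2)/2}}\oplus M_{2^{(m-2)/2}}, & m\ \text{even}.
\end{cases}
\]
In the even case the two summands are separated by the central element $\ii^{m/2}\gamma_1\gamma_3\cdots\gamma_{2m-1}$. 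In both cases the largest full matrix block has dimension $2^{q_m}$ for $m\ge2$, and $\mathcal A_{\ev}=\C\id$ when $m=1$.

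\emph{Achievability.} I would fix a total-parity sector and choose one irreducible block of $\mathcal A_{\ev}$, realized concretely as a stabilizer code. Pair the retained Majoranas $\gamma_1\gamma_3,\gamma_5\gamma_7,\ldots$ to form logical $Z$'s, and use cross products such as $\gamma_3\gamma_5$ for logical $X$'s. Stabilize every erased Majorana $\gamma_{2r}$ in a product with a suitable partner, and for even $m$ fix the central element; the total parity $P$ is also fixed. The Knill--Laflamme condition should then hold for the Kraus operators, which are proportional to products $\prod_{r\in S}K_r$. For $S\ne T$, the operator $E_S^\dagger E_T$ is proportional to a nonempty product of erased Majoranas, possibly times $P$, and it anticommutes with a stabilizer or maps the code space outside the fixed sector. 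The logical operators lie in $\mathcal A$ and so are untouched. Equivalently, since the logical algebra $M_{2^{q_m}}$ sits inside $\mathcal A_{\ev}$ and is preserved exactly, the decoder just compresses that factor back, after a Gaussian re-encoding like $R$ in Theorem~\ref{thm:channel-exact-information}. For $m=2$ this yields the trivial register ($q_2=0$), consistent with the classical code transmitting only the central sign.

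\emph{Converse.} Let $\sigma_{RB}=(\mathrm{id}_R\otimes\Phi_m\circ\mathrm{Enc})(\Omega_D)$. By the operational characterization of conditional min-entropy,
\[
\max_{\mathrm{Dec}}F_{\mathrm e}=\frac{2^{-H_{\min}(R|B)_\sigma}}{D}.
\]
It therefore suffices to show $H_{\min}(R|B)_\sigma\ge -q_m\log 2$. Since $\Phi_m$ is a conditional expectation onto $\mathcal A$, the $B$-part of $\sigma_{RB}$ lies in $\mathcal A$. Fixed parity of the reference and of the total state further forces it into $\mathcal A_{\ev}$, by the same partial-trace/parity argument as in Eq.~\eqref{eq:rhoA-commutes-PA}.

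Then $\sigma_{RB}=\bigoplus_i p_i\,\sigma^{(i)}_{RB_i}\otimes\id_{B'_i}/d'_i$ with $\dim B_i\le2^{q_m}$. Each block satisfies $H_{\min}(R|B_i)\ge-\log\dim B_i$, and direct sums only help. This gives $F_{\mathrm e}\le2^{q_m}/D$, hence the exact-dimension claim, rate $\tfrac12\log2$ per mode, and the strong converse.

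I expect the main obstacle to be this block step. It must be justified carefully with fermionic reference modes, using an ordering with $R$ first and checking that the Jordan--Wigner strings do not enlarge the effective algebra. Without it, the cruder route via Lemma~\ref{lem:clifford-degeneracy-bound}, using $m+1$ anticommuting symmetries and $H_{\min}(R|B)\ge H_{\min}(RB)-m\log2$, only gives $2^{\lfloor m/2\rfloor}/D$. That is off by a factor of two for even $m$.
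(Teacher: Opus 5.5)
Your proposal is correct. Its skeleton matches the paper's proof:
\begin{itemize}
\item $\Phi_m$ is the conditional expectation onto the algebra of the retained Majoranas;
\item its even part is $\C$, $M_{2^q}(\C)$, or $M_{2^{q-1}}(\C)\oplus M_{2^{q-1}}(\C)$;
\item this gives the block form $\omega_{RB}=\bigoplus_s p_s\,\omega_{RL_s}\otimes\id_{g_s}/g_s$ with $d_s\le 2^{q_m}$;
\item for odd $m$, the code is stabilized by $P$ and the erased pairs $\ii\gamma_{4j-2}\gamma_{4j}$.
\end{itemize}

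The routes differ in two places. For the converse, the paper does not invoke the operational characterization of $H_{\min}$. It instead argues in three elementary steps:
\begin{itemize}
\item the block form has Schmidt number at most $2^{q_m}$;
\item local decoding cannot raise the Schmidt number;
\item by Cauchy--Schwarz, a Schmidt-rank-$d$ vector has overlap at most $\min\{1,d/D\}$ with $\ket{\Omega_D}$.
\end{itemize}
Your min-entropy route is slicker and gives the optimal decoding fidelity over all CPTP decoders for each fixed encoder. It does import a nontrivial theorem, and it needs the remark that physical decoders form a subset of those maps. The paper's route is elementary, and it is the same Schmidt-rank argument used for the LOCC converses in Appendix~\ref{app:operational-rates}.

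For even $m=2q$, you fix the central element $\ii^{q}\gamma_1\gamma_3\cdots\gamma_{4q-1}$, whereas the paper runs the odd-block code on $m-1$ modes and pads with a vacuum mode. Both give dimension $2^{q-1}$. In your version $P$ equals, up to sign, the central element times the erased-pair stabilizers, so there are $q+1$ independent constraints on $2q$ modes.

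Two small repairs are needed.
\begin{itemize}
\item Your Knill--Laflamme dichotomy is incomplete, because the code is degenerate. With $H_1=\ii\gamma_2\gamma_4$, the error product $K_1K_2=\gamma_2\gamma_4=-\ii H_1$ neither anticommutes with a stabilizer nor leaves the parity sector; it acts as a scalar on the code. This is harmless, since the condition only requires $\Pi E_S^\dagger E_T\Pi\propto\Pi$, and your irreducibility (Schur) argument already covers it.
\item Since $Q_0$ is defined with physical decoders, you should check that recovery is parity-compatible, as the paper does. Output parity commutes with the retained even algebra, so it acts only on the multiplicity factor. Choosing a parity eigenbasis there makes every decoding Kraus operator have definite parity.
\end{itemize}
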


\section{Discussion}
\label{sec:conclusion}
For the quantum state $\tau$, collective preparation halves the
formation value per copy in the asymptotic limit. For the associated
channel, collective encoding changes zero one-use Holevo information
into a rate of one half bit per mode, attained with zero error and
accompanied by a strong converse. Both optimizations are fixed by the
same spectral degeneracy, and both are attained by Gaussian
constructions. Allowing arbitrary physical decompositions and classical signal
states does not improve these optima.

The present work has once again highlighted that collective resource
manipulation takes a distinctive and in some ways peculiar form in the Gaussian setting, limiting collective effects for Gaussian operations. This has
been particularly well understood for bosonic Gaussian quantum states:
the impossibility of entanglement distillation by local Gaussian
operations and classical communication has been established in
Refs.~\cite{EisertScheelPlenio2002,Fiurasek2002,GiedkeCirac2002}.
More recently, a no-go result for universal purification of bosonic
Gaussian quantum states, including approximate purification, has been
established for passive Gaussian operations in
Ref.~\cite{LiuEtAlPurification2025}. These results have shown how the
Gaussian structure shapes both the possibilities and importantly also the
striking limitations
of collective processing.

The same channel has quantum and zero-error quantum capacities of
one half logical qubit per mode. Its physical output algebra consists
of full logical matrix algebras with maximally mixed multiplicity
subsystems, and, at even blocklength, a classical sector label. This
algebraic structure explains both the exact quantum codes and their
strong converse. It identifies the logical degrees of freedom that can
carry information, showing how individually uninformative channel uses
can support both classical and quantum communication when combined.

The two-copy conversion also has an operational consequence. With a
physical two-mode Bell pair as the unit resource, $\tau$ can be formed
and distilled at a rate of one half Bell pair per copy. Appendix
\ref{app:operational-rates} proves these rates by composing the exact
conversion with a Schmidt-rank bound. The formation cost in entropy
units is therefore $(\log2)/2$, equal to the regularized physical roof.
The equality of formation and distillation rates gives an explicit
instance of asymptotically reversible entanglement manipulation under
parity superselection, relative to this unit resource.
For the noisy family $\tau_t$, the one- and two-copy formation values
have been determined exactly, but the many-copy values and their
regularization remain open. Determining them would show how the optimal
collective preparation changes away from the quantum state $\tau$.
The noisy family therefore extends this theoretical laboratory to a
controlled setting in which the robustness of the collective advantage
and the optimality of Gaussian decompositions can be investigated.
More generally, the Pfaffian constraints in
Appendix~\ref{sec:pfaffian-roof} give a finite formulation of the Gaussian
roof. An important next step is to determine whether this formulation
can yield useful approximation methods for larger systems, allowing the
questions resolved exactly here to be investigated beyond the solvable
example.

\section*{Acknowledgments}
We thank Sergey Bravyi and Antonio Anna Mele for helpful discussions.
F.S. acknowledges European Commission support through a
Marie Sk\l{}odowska-Curie Global Fellowship.
A.N. acknowledges support from the Natural Sciences and Engineering
Research Council of Canada (NSERC), Discovery Grant RGPIN-2018-04380,
and an Ontario Early Researcher Award.
G.S. acknowledges support from the NSERC--NSF Alliance program
(Grant No.~ALLRP-586858-2023) and the NSERC Discovery program
(Grant No.~RGPIN-2025-02094).
D.L. acknowledges support from NSERC (Grant No.~RGPIN-2024-03823).
J.E. acknowledges support from the BMFTR (MuniQC-Atoms, QuSol, and
Hybrid++), the Deutsche Forschungsgemeinschaft (CRC~183 and SPP~2514),
the Quantum Flagship programs MILLENION and PASQUANS2, Munich Quantum
Valley, Berlin Quantum, 
QuantERA,
and the European Research Council (DebuQC).

\appendix

\section{Canonical pure-Gaussian decompositions}
\label{sec:canonical-decomposition}

In this and the following appendices, we collect further details, background and proofs of the statements of the main text. We start with characterizing Gaussian pure quantum states. Every fermionic Gaussian state admits a finite pure-Gaussian decomposition:
its spectral decomposition in a Gaussian normal-mode basis.  We record this
standard construction~\cite{Araki1970,Bravyi2005FLO} to establish feasibility
of the Gaussian convex roof and fix the notation for the supporting results.

\begin{lemma}[Canonical form]
\label{lem:canonical-form}
For every physical \(n\)-mode fermionic Gaussian covariance matrix
\(M\in\R^{2n\times2n}\), there exist \(O\in SO(2n)\) and
\(\nu_1,\ldots,\nu_n\in[-1,1]\) such that
\begin{equation}
\label{eq:canonical-form}
  M
  =
  O\left(
    \bigoplus_{j=1}^n\nu_jJ
  \right)O^\transpose,
  \qquad
  J:=
  \begin{pmatrix}
    0&1\\
    -1&0
  \end{pmatrix}.
\end{equation}
\end{lemma}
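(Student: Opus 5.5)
The statement is the real normal form for antisymmetric matrices, and I would prove it in that spirit. First, \(M\) is real antisymmetric, so \(\ii M\) is Hermitian and its spectrum is real. Complex conjugation sends \(\ii M\) to \(-\ii M\), so the spectrum is symmetric about zero. Equivalently, \(M^\transpose M=-M^2\) is real symmetric and positive semidefinite, and \(M\) commutes with it.

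The plan is to diagonalize \(-M^2\) by an orthogonal matrix and then work inside each eigenspace \(E_\lambda\) with \(\lambda=s^2\geq0\). Since \(M\) commutes with \(M^2\), each \(E_\lambda\) is \(M\)-invariant. Two cases arise.

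For \(\lambda>0\), the restriction of \(M\) to \(E_\lambda\) satisfies \((M/s)^2=-\id\), so \(M/s\) is a complex structure that is also orthogonal. Pick a unit vector \(u\) and set \(w:=Mu/s\). Antisymmetry gives \(u^\transpose Mu=0\), so \(w\perp u\), and \(\|w\|=1\) because \(\|Mu\|^2=u^\transpose(-M^2)u=s^2\). Moreover \(Mw=-su\). The plane \(\Span\{u,w\}\) is then invariant, and so is its orthogonal complement within \(E_\lambda\), because \(M\) is antisymmetric. Iterating this step splits \(E_\lambda\) into planes on which \(M\) acts as \(\pm sJ\) in the basis \((u,w)\). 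In particular every such eigenspace has even dimension.

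For \(\lambda=0\), \(M\) vanishes on the kernel. Since \(2n\) is even and all the positive eigenspaces are even-dimensional, the kernel is even-dimensional too, and it can be paired arbitrarily into planes with \(\nu_j=0\).

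Collecting the orthonormal basis vectors column by column gives \(O\in O(2n)\) with \(O^\transpose MO=\bigoplus_j\nu_jJ\). The bona fide condition~\eqref{eq:bona-fide} means all singular values are at most one, so \(|\nu_j|=s\leq1\), which gives \(\nu_j\in[-1,1]\).

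The only real point of care, and the main obstacle, is ensuring \(O\in SO(2n)\) rather than \(O(2n)\). If \(\det O=-1\), I would swap the two columns of the first block. This flips the determinant and replaces \(\nu_1J\) by \(-\nu_1J\), so it is absorbed by changing the sign of \(\nu_1\). That is exactly why the statement allows signed \(\nu_j\in[-1,1]\) rather than only \(\nu_j\geq0\).
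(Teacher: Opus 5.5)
Your proof is correct and follows the paper's route. The paper cites the real orthogonal normal form of antisymmetric matrices, bounds $|\nu_j|\le 1$ using the bona fide condition, and corrects $\det O=-1$ by a reflection in one normal-mode plane, which is exactly your column swap; the only difference is that you also prove the normal form itself, via the $M$-invariant eigenspaces of $-M^2$, where the paper treats it as standard.
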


\begin{proof}
Every real antisymmetric matrix is orthogonally block diagonalizable into
\(2\times2\) blocks proportional to \(J\).  The physicality condition
\eqref{eq:bona-fide} implies \(|\nu_j|\leq1\).  If the orthogonal matrix
obtained from the standard normal form has determinant \(-1\), composing it
with a reflection in one normal-mode plane changes the sign of the
corresponding \(\nu_j\) and produces an element of \(SO(2n)\).
\end{proof}

\begin{theorem}[Canonical Gaussian spectral decomposition]
\label{thm:canonical-decomposition}
Let \(\rho(M)\) be an \(n\)-mode fermionic Gaussian state with canonical form
\eqref{eq:canonical-form}.  For
\(\sigma=(\sigma_1,\ldots,\sigma_n)\in\{\pm1\}^n\), define
\begin{equation}
\label{eq:canonical-Ksigma}
  K_\sigma
  :=
  O\left(
    \bigoplus_{j=1}^n\sigma_jJ
  \right)O^\transpose,
  \qquad
  p_\sigma
  :=
  \prod_{j=1}^n
  \frac{1+\sigma_j\nu_j}{2}.
\end{equation}
Then \(K_\sigma\in\Jman_{2n}\), the coefficients \(p_\sigma\) form a
probability distribution, and
\begin{equation}
\label{eq:canonical-state-decomp}
  \rho(M)
  =
  \sum_{\sigma\in\{\pm1\}^n}
  p_\sigma\rho(K_\sigma).
\end{equation}

More precisely, if
\begin{equation}
  k
  :=
  \#\set{j:|\nu_j|<1}
\end{equation}
is the number of mixed normal modes, then exactly \(2^k\) probabilities are
nonzero.  The corresponding pure Gaussian states are mutually orthogonal,
and therefore
\begin{equation}
\label{eq:canonical-rank}
  \rank\rho(M)=2^k.
\end{equation}
\end{theorem}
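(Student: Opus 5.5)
The plan is to work in the normal-mode frame supplied by Lemma~\ref{lem:canonical-form}. Write $M=O(\bigoplus_j\nu_jJ)O^\transpose$ and let $U_O$ be the parity-preserving Gaussian unitary implementing $O$. Then $\rho(M)=U_O\,\rho(\bigoplus_j\nu_jJ)\,U_O^\dagger$ and $\rho(K_\sigma)=U_O\,\rho(\bigoplus_j\sigma_jJ)\,U_O^\dagger$. This holds because Gaussian unitaries map Gaussian states to Gaussian states and act on covariance matrices by $M\mapsto OMO^\transpose$, and a Gaussian state is determined by its covariance matrix. It therefore suffices to prove every claim for $O=\id$, since conjugation by $U_O$ preserves purity, orthogonality, rank and convex combinations.

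First I would check that $K_\sigma\in\Jman_{2n}$. Since $(\sigma_jJ)^2=-\id_2$ and $(\sigma_jJ)^\transpose=-\sigma_jJ$, the direct sum is antisymmetric and squares to $-\id$. Both properties survive conjugation by $O$. Each factor $(1+\sigma_j\nu_j)/2$ lies in $[0,1]$ because $|\nu_j|\le1$, and the two values $\sigma_j=\pm1$ sum to one. Hence the product over $j$ is a probability distribution.

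For the decomposition in the diagonal frame, I would use the bilinears $Q_j:=\ii\gamma_{2j-1}\gamma_{2j}$. These are commuting Hermitian unitaries; up to the sign convention fixed by $M_{2j-1,2j}=\ii\Tr\rho\gamma_{2j-1}\gamma_{2j}$ they equal $\pm(\id-2c_j^\dagger c_j)$. I would then show that the Gaussian state with block covariance $\bigoplus\nu_jJ$ is
\[
\rho=\prod_{j=1}^n\frac{\id+\nu_jQ_j}{2}.
\]
The argument is to verify Wick's rule directly. The trace of a Majorana monomial against this product vanishes unless the monomial is a product of complete pairs $\gamma_{2j-1}\gamma_{2j}$. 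On those monomials the expectation factorizes into $\prod\nu_j$, and that product equals the Pfaffian of the corresponding block-diagonal principal submatrix. Principal submatrices that cut a block have vanishing Pfaffian, which matches the vanishing monomial expectations. Setting $\nu_j=\sigma_j=\pm1$ gives the pure state $\rho(K_\sigma)=\prod_j(\id+\sigma_jQ_j)/2$, the rank-one projector onto a joint eigenvector of the $Q_j$. Expanding each factor as
\[
\frac{\id+\nu_jQ_j}{2}=\sum_{\sigma_j=\pm1}\frac{1+\sigma_j\nu_j}{2}\cdot\frac{\id+\sigma_jQ_j}{2}
\]
and multiplying over $j$ yields \eqref{eq:canonical-state-decomp} exactly, including all higher moments. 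The identity is an operator identity, not merely an identity of covariance matrices.

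For the rank statement, distinct $\sigma$ give projectors onto distinct joint eigenspaces of commuting Hermitian operators, so the states $\rho(K_\sigma)$ are mutually orthogonal. The decomposition is then a spectral decomposition, and $\rank\rho(M)$ equals the number of $\sigma$ with $p_\sigma>0$. The factor $(1+\sigma_j\nu_j)/2$ is positive for both signs when $|\nu_j|<1$. When $|\nu_j|=1$ it is positive for exactly one sign. Hence exactly $2^k$ of the $p_\sigma$ are nonzero. The only delicate step is the sign bookkeeping that identifies $\bigoplus\nu_jJ$ with the product form, namely matching $J$'s orientation to $M_{2j-1,2j}=\nu_j$ and to the sign of $Q_j$. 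I expect this to be routine once the convention in \eqref{eq:cm-definition} is applied consistently.
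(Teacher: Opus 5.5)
Your proposal is correct and follows essentially the same route as the paper: reduce to the canonical frame via the Gaussian unitary implementing $O$, write the state as $\prod_j(\id+\nu_jQ_j)/2$, expand each one-mode factor over $\sigma_j=\pm1$, and read off orthogonality and the count $2^k$ from the joint eigenspaces of the commuting $Q_j$. Your explicit Wick-rule check of the product form, which the paper simply asserts, is a harmless addition, and the sign bookkeeping you flag does work out, since $M_{2j-1,2j}=\ii\Tr\rho\,\gamma_{2j-1}\gamma_{2j}=\Tr\rho\,Q_j=\nu_j$ under the convention \eqref{eq:cm-definition}.
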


\begin{proof}
First work in the canonical normal-mode basis, so that \(O=\id\), and define
the commuting occupation-sign observables
\begin{equation}
  Q_j
  :=
  \ii\widetilde\gamma_{2j-1}
  \widetilde\gamma_{2j}.
\end{equation}
The Gaussian state factorizes over its normal modes:
\begin{equation}
  \rho\left(
    \bigoplus_{j=1}^n\nu_jJ
  \right)
  =
  \prod_{j=1}^n
  \frac12\left(\id+\nu_jQ_j\right).
\end{equation}
Expanding each one-mode factor in its two occupation eigenstates gives
\begin{align}
  \rho\left(
    \bigoplus_{j=1}^n\nu_jJ
  \right)
  &=
  \sum_{\sigma\in\{\pm1\}^n}
  \left[
    \prod_{j=1}^n
    \frac{1+\sigma_j\nu_j}{2}
  \right]
  \prod_{j=1}^n
  \frac12\left(\id+\sigma_jQ_j\right)
  \nonumber\\
  &=
  \sum_{\sigma\in\{\pm1\}^n}
  p_\sigma
  \rho\left(
    \bigoplus_{j=1}^n\sigma_jJ
  \right).
\end{align}
Each covariance matrix \(\bigoplus_j\sigma_jJ\) squares to \(-\id\), so every
branch is pure Gaussian.  Conjugating the decomposition by the Gaussian
unitary implementing \(O\) proves Eq.~\eqref{eq:canonical-state-decomp}.

If \(|\nu_j|=1\), precisely one choice of \(\sigma_j\) has nonzero
probability; if \(|\nu_j|<1\), both choices occur.  Hence there are exactly
\(2^k\) nonzero branches.  They are joint eigenstates of the commuting
operators \(Q_j\), and are therefore mutually orthogonal.  This proves
Eq.~\eqref{eq:canonical-rank}.
\end{proof}

The finite canonical decomposition in Theorem~\ref{thm:canonical-decomposition}
establishes feasibility in Definition~\ref{def:fgeof}, without asserting
optimality.  In particular, unlike the
bosonic displacement ensemble, its branches need not have equal
entanglement.

\section{Gaussian operator expansions and covariance purity}
\label{app:gaussian-algebra}

This appendix collects the algebraic identities needed to expand
fermionic Gaussian quantum states in Majorana monomials and to
characterize their purity through covariance matrices.
For an even index set \(I=\{i_1<\cdots<i_{2r}\}\), define the Hermitian
Majorana monomial
\begin{equation}
\label{eq:monomial-basis}
  \Gamma_I:=\ii^r\gamma_{i_1}\cdots\gamma_{i_{2r}},
  \qquad
  \Gamma_\varnothing:=\id .
\end{equation}
The even Majorana monomials form
an orthogonal real basis of the Hermitian parity-even operators according to
\begin{equation}
\label{eq:majorana-orthogonality}
  \Tr(\Gamma_I \Gamma_J)=2^n\delta_{IJ}.
\end{equation}
Consequently, every parity-preserving state has the expansion
\begin{equation}
\label{eq:even-state-expansion}
  \rho
  =
  2^{-n}\sum_{I\ {\rm even}}\Tr(\rho \Gamma_I)\Gamma_I,
\end{equation}
and a Gaussian state has the Pfaffian expansion
\begin{equation}
\label{eq:gaussian-pfaffian-expansion}
  \rho(M)
  =
  2^{-n}\sum_{I\ {\rm even}}\Pf(M_I)\Gamma_I.
\end{equation}
This expansion will allow us to formulate the Gaussian convex roof as an
exact finite moment problem.
The following observation also fixes the relation between covariance purity
and parity.

\begin{lemma}[Maximal covariance and parity]
\label{lem:pure-gaussian-parity}
Let \(\rho\) be an arbitrary fermionic state of the full \(n\)-mode system
with covariance matrix \(M\).  If \(M^2=-\id\), then \(\rho\) is a pure
fermionic Gaussian state and has definite total parity.  In particular, every
pure fermionic Gaussian state is a parity eigenstate.
\end{lemma}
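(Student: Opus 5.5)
The plan is to treat the covariance condition as saturating a positivity constraint and then read off the state. Since $M$ is real antisymmetric with $M^2=-\id$, it is orthogonal and lies in $\Jman_{2n}$. Lemma~\ref{lem:canonical-form} applies to any real antisymmetric matrix obeying \eqref{eq:bona-fide}, which $M$ does because it is the covariance matrix of a genuine state. So we may write $M=O(\bigoplus_j\nu_jJ)O^\transpose$ with $O\in SO(2n)$. The condition $M^2=-\id$ then forces $\nu_j^2=1$, so each $\nu_j\in\{\pm1\}$.

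Next, conjugate $\rho$ by the parity-preserving Gaussian unitary implementing $O^\transpose$. This replaces $\rho$ by a state $\rho'$ with covariance $\bigoplus_j\nu_jJ$, without changing purity, Gaussianity, or total parity. In the rotated normal-mode Majoranas, set $Q_j:=\ii\widetilde\gamma_{2j-1}\widetilde\gamma_{2j}$, the commuting Hermitian unitaries used in the proof of Theorem~\ref{thm:canonical-decomposition}. The covariance entry gives $\Tr(\rho' Q_j)=\nu_j$, up to the fixed sign convention. Because $Q_j$ has spectrum $\{\pm1\}$, the inequality $\Tr[\rho'(\id-\nu_jQ_j)/2]=0$ with a positive-semidefinite integrand shows that $\rho'$ is supported on the $\nu_j$-eigenspace of $Q_j$. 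This is the same step as in Lemma~\ref{lem:support}: for a spectral decomposition of $\rho'$, every eigenvector with positive weight lies in that eigenspace.

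Intersecting these eigenspaces over all $j$ gives the joint eigenspace of the $n$ commuting occupation-sign operators $Q_j$ with prescribed signs. In the occupation basis of the normal modes this eigenspace is one-dimensional: it is the image of the projector $\prod_j(\id+\nu_jQ_j)/2$, which has rank $2^n/2^n=1$. Hence $\rho'$ equals this rank-one projector, which is exactly the pure Gaussian state $\rho(\bigoplus_j\nu_jJ)$ from the factorized form in Theorem~\ref{thm:canonical-decomposition}. Undoing the Gaussian unitary shows that $\rho=\rho(M)$ is pure and Gaussian.

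For parity, the total parity $P$ is, up to a sign fixed by $\det O=1$, equal to $\prod_j(-Q_j)$ or $\prod_j Q_j$ in the rotated frame. The key point is that $P$ is a product of the $Q_j$, since $SO(2n)$ maps $\gamma_1\cdots\gamma_{2n}$ to $\det(O)\,\widetilde\gamma_1\cdots\widetilde\gamma_{2n}$. The state is therefore an eigenvector of $P$ with eigenvalue $\pm\prod_j\nu_j$.

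The final assertion, that every pure Gaussian state is a parity eigenstate, follows because a pure Gaussian state has $M^2=-\id$ by \eqref{eq:pure-cm-condition}. The only delicate point I anticipate is the sign bookkeeping between $P$, the $Q_j$, and $\det O$. Only the fact that $P$ is $\pm$ a product of the $Q_j$ matters here, so this bookkeeping is harmless.
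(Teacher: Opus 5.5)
Your proof is correct and follows the paper's argument. The shared steps are: bring $M$ to canonical form with signs $\pm1$, show that $\rho$ is supported on the joint $\nu_j$-eigenspace of the commuting normal-mode bilinears $Q_j$, note that this eigenspace is one-dimensional, and read off parity from $\prod_j Q_j=(-1)^nP$. The only difference is cosmetic: you conjugate the state by the Gaussian unitary and also speak of ``rotated'' Majoranas, whereas you should do one or the other, not both. This does not affect the argument.
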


\begin{proof}
Choose \(O\in SO(2n)\) and signs \(\sigma_j\in\{\pm1\}\) such that
\begin{equation}
  M
  =
  O\left(\bigoplus_{j=1}^n\sigma_j J\right)O^\transpose,
  \qquad
  J=
  \begin{pmatrix}
    0&1\\
    -1&0
  \end{pmatrix}.
\end{equation}
In the corresponding Majorana basis, define
\begin{equation}
  Q_j:=\ii\widetilde\gamma_{2j-1}\widetilde\gamma_{2j}.
\end{equation}
The operators \(Q_j\) commute and satisfy \(Q_j^2=\id\), while the covariance
condition gives \(\Tr(\rho Q_j)=\sigma_j\).  Since \(Q_j\) has eigenvalues
\(\pm1\), the state is supported on the joint eigenspace
\(Q_j=\sigma_j\) for every \(j\).  This joint eigenspace is one-dimensional
because it fixes the occupation of every normal mode.  Hence \(\rho\) is the
projector onto a Gaussian occupation-number state, conjugated by the Gaussian
unitary implementing \(O\).  Moreover,
\(\prod_jQ_j=(-1)^nP\), so the state has definite total parity.
\end{proof}

The condition in Lemma~\ref{lem:pure-gaussian-parity} concerns the covariance
matrix of the full system.  A reduced covariance matrix need not satisfy
\(M^2=-\id\), even when the global state is pure.  Conversely, outside the
Gaussian family, purity alone does not force maximal covariance singular
values: a pure non-Gaussian state may carry information in higher-order
correlations that is absent from its covariance matrix.

For completeness, if a pure Gaussian branch has reduced covariance
parameters $s_1,\ldots,s_{n_A}$, the entropy formula extends to
\begin{equation}\label{eq:pure-gaussian-renyi-entropy}
 S_\alpha\!\left(\Tr_B\rho(K)\right)
 =\sum_{j=1}^{n_A}h_\alpha\!\left(\frac{1+s_j}{2}\right),
 \qquad
 h_\alpha(x):=\frac{\log(x^\alpha+(1-x)^\alpha)}{1-\alpha}.
\end{equation}
The endpoints $\alpha=1,\infty$ are understood by continuity.
This follows from the product spectrum of the reduced Gaussian state.
The two R\'enyi convex roofs use the same ensembles as their von Neumann
counterparts, so $\EFalpha^{\mathrm{phys}}\leq\EFalpha^{\fG}$ on Gaussian
targets.  More generally, Lemma~\ref{lem:clifford-degeneracy-bound} gives
$S_\alpha(\omega)\geq\lfloor k/2\rfloor\log2$ for every
$\alpha\in(0,\infty]$, since $S_\alpha\geq S_\infty$.

\section{The complete one-mode-versus-one-mode formula}
\label{sec:two-mode-formula}

For one mode held by Alice and one by Bob, the physical convex roof splits
into the two parity sectors.  This splitting is the formula established by
D'Ariano \emph{et al.}~\cite[Eq.~(13)]{Dariano2014}, within the fermionic
entanglement framework of Ref.~\cite{PhysRevA.76.022311}.  Each sector is
two-dimensional and every pure state in it is Gaussian.  We give the
resulting formula, its covariance dictionary, and explicit optimal
ensembles because they are used in the full-rank extension.
In order to proceed, we
use the block-ordered occupation basis
\begin{equation}\label{eq:block-basis}
  (\ket{0,0},\ket{1,1})\oplus(\ket{1,0},\ket{0,1}),
  \qquad
  \ket{1,1}=c_A^\dagger c_B^\dagger\ket{0,0}.
\end{equation}
Every parity-preserving two-mode state then has the form
\begin{equation}\label{eq:two-mode-block-state}
  \rho=
  \begin{pmatrix}
    a&z&0&0\\
    z^*&d&0&0\\
    0&0&b&w\\
    0&0&w^*&c
  \end{pmatrix},
\end{equation}
where
\begin{equation}
  a,b,c,d\ge0,
  \qquad
  a+b+c+d=1,
  \qquad
  ad\ge |z|^2,
  \qquad
  bc\ge |w|^2.
\end{equation}
Let
\begin{equation}\label{eq:sector-probs}
  p_e:=a+d,
  \qquad
  p_o:=b+c
\end{equation}
be the probabilities of the even- and odd-parity sectors.  When the
corresponding sector has nonzero weight, define
\begin{equation}\label{eq:sector-concurrences}
  C_e:=\frac{2|z|}{p_e},
  \qquad
  C_o:=\frac{2|w|}{p_o}.
\end{equation}
Positivity of \(\rho\) implies \(0\le C_e,C_o\le1\).  A sector of zero
weight is understood to contribute zero to all expressions below.

\begin{lemma}[Pure two-mode parity states are Gaussian]
\label{lem:two-mode-pure-gaussian}
Every pure state of definite parity on two fermionic modes is
Gaussian.
\end{lemma}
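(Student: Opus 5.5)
A pure two-mode state of definite parity lies in one of the two-dimensional sectors $\Span\{\ket{0,0},\ket{1,1}\}$ or $\Span\{\ket{1,0},\ket{0,1}\}$ of Eq.~\eqref{eq:block-basis}. My plan is to show that every such vector is obtained from a Gaussian reference state by a parity-preserving Gaussian unitary. Gaussianity is then inherited, because Gaussian unitaries map pure Gaussian states to pure Gaussian states: the covariance matrix transforms as $K\mapsto OKO^\transpose$, which stays in $\Jman_4$, and Wick's rule is preserved under linear orthogonal transformations of the Majoranas.

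For the even sector I start from the vacuum $\ket{0,0}$, whose covariance matrix is $J\oplus J$ and is Gaussian. The quadratic operators $\ii\gamma_1\gamma_4$, $\ii\gamma_2\gamma_3$, $\ii\gamma_1\gamma_3$, $\ii\gamma_2\gamma_4$ restricted to this sector are combinations of $c_A^\dagger c_B^\dagger+c_Bc_A$ and $\ii(c_A^\dagger c_B^\dagger-c_Bc_A)$. These act as $X$ and $Y$ on the two-dimensional subspace. The bilinears $\ii\gamma_1\gamma_2$ and $\ii\gamma_3\gamma_4$ act as $Z$ (up to a multiple of the identity there). Hence the sector-restricted Gaussian unitaries contain $\exp(\ii\theta X)$ and $\exp(\ii\phi Z)$, which generate $SU(2)$ on the sector up to phase. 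Every unit vector $\alpha\ket{0,0}+\beta\ket{1,1}$ is therefore $U\ket{0,0}$ for some Gaussian $U$.

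For the odd sector I do the same from $\ket{1,0}$. The hopping bilinears $c_A^\dagger c_B+c_B^\dagger c_A$ and $\ii(c_A^\dagger c_B-c_B^\dagger c_A)$ act as $X$ and $Y$ there, and $n_A-n_B$ acts as $Z$. Alternatively, I can map the odd sector to the even one by the particle–hole transformation on Bob's mode, $\gamma_4\mapsto-\gamma_4$. This is a reflection, so to stay in $SO(4)$ I compose it with a second reflection on Alice's side. Either way, every odd-parity vector is Gaussian.

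The only delicate point is the Jordan--Wigner sign bookkeeping: I have to check that the bilinears really act as claimed, with $\ket{1,1}=c_A^\dagger c_B^\dagger\ket{0,0}$. That is a direct $4\times4$ check, and the signs only affect phases, which $SU(2)$ absorbs anyway. An alternative route, if the unitary argument is awkward, is to compute the covariance matrix of $\alpha\ket{0,0}+\beta\ket{1,1}$ directly. Its blocks are $(|\alpha|^2-|\beta|^2)J$ on each mode, and its cross entries are built from $\Re$ and $\Im$ of $\alpha^*\beta$. Since $|\alpha|^2+|\beta|^2=1$, one verifies $K^2=-\id$. Lemma~\ref{lem:pure-gaussian-parity} then says the state is pure Gaussian.
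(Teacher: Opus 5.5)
Your proof is correct, but it takes a different route from the paper. The paper writes down explicit normal forms. For $\alpha\neq0$, the even vector $\alpha\ket{0,0}+\beta\ket{1,1}$ is proportional to the BCS state $\exp[(\beta/\alpha)c_A^\dagger c_B^\dagger]\ket{0,0}$, and the case $\alpha=0$ is the occupation state $\ket{1,1}$. The odd vector $\alpha\ket{1,0}+\beta\ket{0,1}$ is the one-particle Slater determinant $(\alpha c_A^\dagger+\beta c_B^\dagger)\ket{0,0}$. Gaussianity is then imported from the standard fact that BCS states and Slater determinants are Gaussian.

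You instead prove transitivity. The quadratic generators, restricted to each two-dimensional parity sector, span the Lie algebra of $SU(2)$; this is $\mathrm{Spin}(4)\cong SU(2)\times SU(2)$ acting on the two half-spin sectors. Hence every sector vector is a parity-preserving Gaussian unitary applied to an occupation state. Gaussianity then follows from $K\mapsto OKO^\transpose\in\Jman_4$ and Lemma~\ref{lem:pure-gaussian-parity}.

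What each route buys:
\begin{itemize}
\item Your route needs no case split and no external characterization of BCS or Slater states. It also gives a structural reason for the statement: the same transitivity holds for three modes via $\mathrm{Spin}(6)\cong SU(4)$ and fails at four, consistent with the GHZ remark in Appendix~\ref{sec:physical-vs-gaussian}.
\item Your covariance variant is the most self-contained version. It needs only Lemma~\ref{lem:pure-gaussian-parity} and a $4\times4$ check using $(|\alpha|^2-|\beta|^2)^2+4|\alpha\beta|^2=1$.
\item The paper's route is shorter, gives the states in closed form, and is the template reused for three modes in Lemma~\ref{lem:pure-spinor}.
\end{itemize}

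There are two small slips, neither of which affects the conclusion.

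First, a sign. In the paper's convention $\ii\gamma_1\gamma_2=2n_A-1$, so the vacuum has covariance $(-J)\oplus(-J)$, not $J\oplus J$. Correspondingly, $\ii\gamma_1\gamma_2$ and $\ii\gamma_3\gamma_4$ act as $-Z$ on the even sector, and the diagonal blocks in your covariance check are $(|\beta|^2-|\alpha|^2)J$.

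Second, your particle--hole aside is muddled. Composing $\gamma_4\mapsto-\gamma_4$ with a second reflection gives an element of $SO(4)$, hence a parity-preserving unitary. That unitary maps the odd sector to itself, which defeats the purpose. The single reflection, implemented by the odd unitary $\gamma_1\gamma_2\gamma_3$, does exchange the sectors and still preserves Wick's rule. This is how the paper handles odd parity in Lemma~\ref{lem:pure-spinor}. In any case, your hopping-generator argument already covers the odd sector.
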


\begin{proof}
An even-parity vector has the form
\begin{equation}
  \alpha\ket{0,0}+\beta\ket{1,1}.
\end{equation}
When \(\alpha\ne0\), it is proportional to the BCS state
\begin{equation}
  \exp\!\left(
    \frac{\beta}{\alpha}c_A^\dagger c_B^\dagger
  \right)\ket{0,0};
\end{equation}
the limiting cases are occupation-number states and are Gaussian as
well.  An odd-parity vector can be written as
\begin{equation}
  \alpha\ket{1,0}+\beta\ket{0,1}
  =
  \left(\alpha c_A^\dagger+\beta c_B^\dagger\right)\ket{0,0}.
\end{equation}
This is a one-particle Slater determinant and hence is also Gaussian.
\end{proof}

\begin{theorem}[Two-mode parity-sector formula]
\label{thm:two-mode-formula}
For every parity-preserving state of the form
\eqref{eq:two-mode-block-state},
\begin{align}\label{eq:two-mode-fgeof}
  \EF^{\mathrm{phys}}(\rho)
  =\EF^{\fG}(\rho)
  &=
  p_e\,h\!\left(
    \frac{1+\sqrt{1-C_e^2}}{2}
  \right)
  \notag\\
  &\quad+
  p_o\,h\!\left(
    \frac{1+\sqrt{1-C_o^2}}{2}
  \right).
\end{align}
Here the Gaussian roof is understood on the convex hull of pure Gaussian
states, as in Definition~\ref{def:fgeof}.  Every physical two-mode state
belongs to that convex hull by Lemma~\ref{lem:two-mode-pure-gaussian}; the
target itself need not be Gaussian.  In particular, the formula gives both
roofs for every one-mode-versus-one-mode Gaussian state.
\end{theorem}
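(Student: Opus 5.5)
The plan is to split the physical convex roof along the two parity sectors, reduce each sector to a two-qubit-like problem, and then match the resulting bound with a pure-Gaussian ensemble.

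\emph{Sector splitting of the branches.} Every physical branch $\ket{\psi_\alpha}$ has definite total parity, so it lies entirely in the even block $\Span\{\ket{0,0},\ket{1,1}\}$ or in the odd block $\Span\{\ket{1,0},\ket{0,1}\}$. Grouping the branches by parity, any physical decomposition of $\rho$ in Eq.~\eqref{eq:two-mode-block-state} gives two separate decompositions. The even branches decompose the block $\rho_e$, with total weight $\Tr\rho_e=p_e$. The odd branches decompose $\rho_o$, with weight $p_o$. Here there are no cross terms, because $\rho$ has no coherence between the sectors and the branches do not mix them. Conversely, any pair of sector decompositions reassembles into a physical decomposition of $\rho$. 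Hence
\[
 \EF^{\mathrm{phys}}(\rho)=p_e\,\mathcal E_e(\rho_e/p_e)+p_o\,\mathcal E_o(\rho_o/p_o),
\]
where $\mathcal E_e$ and $\mathcal E_o$ are the convex roofs restricted to vectors in each sector. By Lemma~\ref{lem:two-mode-pure-gaussian}, every such branch is Gaussian, so the Gaussian roof of Definition~\ref{def:fgeof} on the convex hull coincides with the physical one. This gives $\EF^{\fG}=\EF^{\mathrm{phys}}$ on every physical two-mode state.

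\emph{Solving a single sector.} I would identify each two-dimensional sector with a two-qubit subspace spanned by $\ket{00},\ket{11}$ (even) or $\ket{10},\ket{01}$ (odd).
\begin{itemize}
\item For a unit vector $\alpha\ket{x}+\beta\ket{\bar x}$, Alice's marginal is $\diag(|\alpha|^2,|\beta|^2)$. In the fermionic block-ordered convention this is unaffected by signs, since Alice's mode is first. The entanglement is therefore $h\bigl((1+\sqrt{1-C^2})/2\bigr)$ with concurrence $C=2|\alpha\beta|$.
\item The function $C\mapsto h\bigl((1+\sqrt{1-C^2})/2\bigr)$ is convex and increasing. For a normalized sector state $\omega=\begin{pmatrix}a'&z'\\ z'^*&d'\end{pmatrix}$ and any decomposition into sector vectors, $\sum_\ell p_\ell C_\ell=\sum_\ell p_\ell\, 2|\alpha_\ell\beta_\ell|\ge 2\bigl|\sum_\ell p_\ell\alpha_\ell\beta_\ell^*\bigr|=2|z'|$.
\item Jensen's inequality combined with monotonicity then gives the lower bound $h\bigl((1+\sqrt{1-C^2})/2\bigr)$ with $C=2|z'|$, which is exactly $C_e$ or $C_o$ after normalization.
\end{itemize}

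\emph{Achievability.} I need a decomposition of $\omega$ into sector vectors that all have concurrence exactly $2|z'|$. Take the spectral vectors of $\omega$ and apply the Wootters-type construction: rotate to vectors $\alpha_\ell\ket{x}+\beta_\ell\ket{\bar x}$ with $|\alpha_\ell|^2=a'$, $|\beta_\ell|^2=d'$ and phases arranged so that the coherences average to $z'$. Concretely, use two branches $\sqrt{a'}\ket{x}+e^{\ii(\phi\pm\theta)}\sqrt{d'}\ket{\bar x}$ with suitable weights, where $\phi=\arg z'$ and $\cos\theta=|z'|/\sqrt{a'd'}$. These branches have concurrence $2\sqrt{a'd'}$, which is too large unless $\omega$ is pure. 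So instead I would mix two pure vectors with moduli chosen on the \emph{line} of fixed concurrence $2|z'|$. The real parametrization reduces this to a Bloch-ball statement: the sector is a qubit, concurrence equals the transverse Bloch length, and any point of the ball is a convex combination of two surface points with the same transverse length $|x|=2|z'|$, varying only the $z$-component and the phase. This gives two branches, both Gaussian, attaining the bound.

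\emph{Main obstacle.} The step I expect to be delicate is justifying the sector splitting in the fermionic reduction, namely that $S(\Tr_B)$ of a sector vector is computed correctly in the stated ordering. I would check this directly from the block basis in Eq.~\eqref{eq:block-basis}. Degenerate sectors with $p_e=0$ or $p_o=0$ contribute zero by convention.
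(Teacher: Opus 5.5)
Your proposal is correct, and its first step is the paper's own argument: split the physical roof into the two parity sectors, then use Lemma~\ref{lem:two-mode-pure-gaussian} to identify physical and Gaussian ensembles.

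The difference is in how you evaluate each sector. The paper views $\rho_e$ and $\rho_o$ as two-qubit states, states without calculation that their Wootters concurrences are $C_e$ and $C_o$, and invokes Wootters' formula. You prove the sector value directly:
\begin{itemize}
\item For the lower bound, you combine the triangle inequality $\sum_\ell p_\ell\,2|\alpha_\ell\beta_\ell|\ge 2|z'|$ with monotonicity and convexity of $\mathcal E(C)=h\bigl((1+\sqrt{1-C^2})/2\bigr)$.
\item For achievability, you take the vertical chord of the Bloch ball through the sector state. Its endpoints have $|\alpha|^2=(1\pm\sqrt{1-C^2})/2$. This is exactly the explicit ensemble $\ket{\phi_\pm}$ of Eq.~\eqref{eq:two-mode-explicit-ensemble}, which the paper records after its proof.
\end{itemize}
Your route is self-contained. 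It avoids the spin-flip computation of the concurrence and shows why only the modulus of the off-diagonal coherence matters. You still borrow the convexity of $\mathcal E(C)$ from Wootters, though it can be checked by calculus. The paper's route is shorter but rests on the full two-qubit theorem.

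One small correction: the chord endpoints should not ``vary the phase.'' Their products $\alpha_\ell\beta_\ell^*$ must share the phase $\arg z'$, i.e.\ the endpoints must have the same transverse Bloch vector. Your triangle inequality is tight only when all $\alpha_\ell\beta_\ell^*$ are aligned. Two branches with transverse length $2|z'|$ but different phases would mix to a coherence strictly smaller than $|z'|$.
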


\begin{proof}
Every physical pure state has definite total parity.  Hence any
admissible pure-state decomposition of \(\rho\) separates into even-
and odd-parity branches.  Projecting the decomposition onto the two
parity sectors shows that their total weights must be \(p_e\) and
\(p_o\), respectively.  When these weights are nonzero, the
corresponding normalized states are
\begin{equation}
  \rho_e
  =\frac{1}{p_e}
  \begin{pmatrix}
    a&z\\
    z^*&d
  \end{pmatrix},
  \qquad
  \rho_o
  =\frac{1}{p_o}
  \begin{pmatrix}
    b&w\\
    w^*&c
  \end{pmatrix}.
\end{equation}
Conversely, pure-state decompositions of \(\rho_e\) and \(\rho_o\)
can be combined with weights \(p_e\) and \(p_o\) to produce a
decomposition of \(\rho\).  Therefore the physical convex roof splits
as
\begin{equation}\label{eq:sector-splitting-cost}
  \EF^{\mathrm{phys}}(\rho)
  =
  p_e \EF(\rho_e)+p_o \EF(\rho_o).
\end{equation}

Viewed as two-qubit states, \(\rho_e\) and \(\rho_o\) are supported on
\(\Span\{\ket{0,0},\ket{1,1}\}\) and
\(\Span\{\ket{1,0},\ket{0,1}\}\), respectively.  Their concurrences are
\begin{equation}
  C(\rho_e)=\frac{2|z|}{p_e}=C_e,
  \qquad
  C(\rho_o)=\frac{2|w|}{p_o}=C_o.
\end{equation}
Wootters' formula~\cite{Wootters1998} consequently gives
\begin{equation}
  \EF(\rho_s)
  =
  h\!\left(
    \frac{1+\sqrt{1-C_s^2}}{2}
  \right),
  \qquad s\in\{e,o\}.
\end{equation}
Substitution into \eqref{eq:sector-splitting-cost} proves
\eqref{eq:two-mode-fgeof} for the physical convex roof.

Finally, Lemma~\ref{lem:two-mode-pure-gaussian} shows that every pure
state appearing in either parity sector is Gaussian.  The physical
and Gaussian admissible ensembles therefore coincide, which proves
\(\EF^{\mathrm{phys}}(\rho)=\EF^{\fG}(\rho)\).
\end{proof}

An optimal ensemble can be written explicitly in each parity sector of
nonzero weight.  In the ordered basis of that sector,
write its normalized state as
\[
 R=\begin{pmatrix}u&v\\v^*&1-u\end{pmatrix},\qquad
 s=\frac{1+\sqrt{1-4|v|^2}}2.
\]
For $v\ne0$, put $\theta=\arg v$ and define
\begin{align}
 \ket{\phi_+}&=\sqrt{s}\ket{\eta_0}
       +e^{-\ii\theta}\sqrt{1-s}\ket{\eta_1},\\
 \ket{\phi_-}&=\sqrt{1-s}\ket{\eta_0}
       +e^{-\ii\theta}\sqrt{s}\ket{\eta_1}.
\end{align}
Here $\ket{\eta_0},\ket{\eta_1}$ are the ordered occupation basis
vectors of the chosen parity sector.
If $s>1/2$, then
\begin{equation}\label{eq:two-mode-explicit-ensemble}
 R=r\ketbra{\phi_+}+(1-r)\ketbra{\phi_-},\qquad
 r=\frac{u-(1-s)}{2s-1}.
\end{equation}
Positivity gives $u(1-u)\ge |v|^2$, hence $1-s\le u\le s$ and
$0\le r\le1$.  Both branches have entanglement $h(s)$, so this ensemble
attains the sector contribution in Eq.~\eqref{eq:two-mode-fgeof}.
For $v=0$, use the occupation-state decomposition; for $s=1/2$, positivity
forces $u=1/2$ and $R$ is already pure.  Combining the two sector ensembles
with their weights $p_e,p_o$ gives at most four pure Gaussian branches.

For a Gaussian input, all quantities in
Theorem~\ref{thm:two-mode-formula} can be obtained directly from the
covariance matrix.

\begin{proposition}[Covariance form for Gaussian states]
\label{prop:two-mode-cm-dictionary}
Order the Majorana operators as
\begin{equation}
  \gamma_1=c_A+c_A^\dagger,
  \qquad
  \gamma_2=-\ii(c_A-c_A^\dagger),
  \qquad
  \gamma_3=c_B+c_B^\dagger,
  \qquad
  \gamma_4=-\ii(c_B-c_B^\dagger).
\end{equation}
For a two-mode Gaussian state \(\rho(M)\), define
\begin{equation}\label{eq:pf-four}
  \Delta:=\Pf M
  =M_{1,2}M_{3,4}-M_{1,3}M_{2,4}+M_{1,4}M_{2,3}.
\end{equation}
Then
\begin{equation}\label{eq:pf-parity}
  p_e=\frac{1+\Delta}{2},
  \qquad
  p_o=\frac{1-\Delta}{2}.
\end{equation}
Whenever the corresponding denominator is nonzero, the sector
concurrences are
\begin{align}\label{eq:two-mode-cm-concurrences}
  C_e
  &=
  \frac{
    \sqrt{
      (M_{1,4}+M_{2,3})^2+(M_{1,3}-M_{2,4})^2
    }
  }{1+\Delta},
  \\
  C_o
  &=
  \frac{
    \sqrt{
      (M_{1,4}-M_{2,3})^2+(M_{1,3}+M_{2,4})^2
    }
  }{1-\Delta}.
\end{align}
Together with \eqref{eq:two-mode-fgeof}, these expressions give a
closed covariance-matrix formula for both
\(\EF^{\mathrm{phys}}(\rho(M))\) and \(\EF^{\fG}(\rho(M))\).
\end{proposition}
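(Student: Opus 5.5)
The plan is to express every quantity in the block form \eqref{eq:two-mode-block-state} through Majorana moments, and then use Wick's rule to turn the higher moments into Pfaffians of $M$. For a parity-preserving two-mode state the operator expansion \eqref{eq:even-state-expansion} reads
\[
\rho=\tfrac14\Bigl(\id+\textstyle\sum_{j<k}\Tr(\rho\,\ii\gamma_j\gamma_k)\,\ii\gamma_j\gamma_k-\Tr(\rho\,\gamma_1\gamma_2\gamma_3\gamma_4)\,\gamma_1\gamma_2\gamma_3\gamma_4\Bigr),
\]
with $\Tr(\rho\,\ii\gamma_j\gamma_k)=M_{j,k}$ and $-\Tr(\rho\,\gamma_1\gamma_2\gamma_3\gamma_4)=\Pf M=\Delta$ by \eqref{eq:wick-pfaffian}. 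The first step is therefore to write each Majorana bilinear and the quartic monomial as a $4\times4$ matrix in the block basis \eqref{eq:block-basis}. I will use the Jordan--Wigner representation \eqref{eq:standard-jw-majoranas-xy} with $A$ first and keep the convention $\ket{1,1}=c_A^\dagger c_B^\dagger\ket{0,0}$.

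For the parity probabilities, note that $\gamma_1\gamma_2\gamma_3\gamma_4=-P$ by \eqref{eq:total-parity} with $n=2$. Hence $\Tr(\rho P)=\Delta$, and $p_e-p_o=\Delta$ together with $p_e+p_o=1$ gives \eqref{eq:pf-parity} at once.

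For the off-diagonal entries, the local bilinears $\ii\gamma_1\gamma_2=-Z_A$ and $\ii\gamma_3\gamma_4=-Z_B$ are diagonal, so they only affect $a,b,c,d$. The cross bilinears $\ii\gamma_j\gamma_k$ with $j\in\{1,2\}$ and $k\in\{3,4\}$ are $\pm$ two-qubit Pauli products of the form $X\!X$, $X\!Y$, $Y\!X$, $Y\!Y$. Each of these has its nonzero entries only between $\ket{0,0}$ and $\ket{1,1}$, or between $\ket{1,0}$ and $\ket{0,1}$. I will read off $z=\bra{0,0}\rho\ket{1,1}$ and $w=\bra{1,0}\rho\ket{0,1}$ as combinations of $M_{1,3},M_{1,4},M_{2,3},M_{2,4}$. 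Concretely, $XX\pm YY$ and $XY\pm YX$ split into the even sector and the odd sector. This should yield $|z|=\tfrac14\sqrt{(M_{1,4}+M_{2,3})^2+(M_{1,3}-M_{2,4})^2}$ and the analogous expression for $|w|$ with the opposite relative signs. Dividing $2|z|$ by $p_e=(1+\Delta)/2$ and $2|w|$ by $p_o$ then gives \eqref{eq:two-mode-cm-concurrences}, and substituting into Theorem~\ref{thm:two-mode-formula} finishes the proof.

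As a sanity check, the state $\tau$ has only $M_{2,3}=-1$. This gives $\Delta=0$, $p_e=p_o=1/2$ and $C_e=C_o=1$, which reproduces $\log2$.

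The main obstacle is sign bookkeeping. The Jordan--Wigner string in $\gamma_3,\gamma_4$, the $\ii$ factors, and the sign of $\ket{1,1}$ all decide which combinations $M_{1,4}\pm M_{2,3}$ and $M_{1,3}\mp M_{2,4}$ land in the even sector rather than the odd one. I will fix these by computing $\gamma_2\gamma_3=\ii X\!X$ and the other three cross products explicitly, and then cross-check against the $\tau$ example and against a Slater-determinant state such as $(c_A^\dagger+c_B^\dagger)\ket{0,0}/\sqrt2$.
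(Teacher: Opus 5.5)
Your plan is correct and is essentially the paper's proof. It obtains $p_e-p_o=\Tr(\rho P)=\Pf M$ from Wick's rule, then expands $z$ and $w$ directly in the cross covariances; the paper does this in terms of $c_A,c_B$ rather than Jordan--Wigner Paulis, and your predicted $|z|$ and $|w|$ agree with its formulas. One harmless slip: the factor $\ii^2$ in $\Gamma_{1234}=\ii^2\gamma_1\gamma_2\gamma_3\gamma_4$ enters twice, so the quartic term in your displayed expansion should be $+\Tr(\rho\,\gamma_1\gamma_2\gamma_3\gamma_4)\,\gamma_1\gamma_2\gamma_3\gamma_4=\Delta P$ rather than carrying a minus sign. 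This does not affect your argument, since you read $p_e-p_o$ off $\Tr(\rho P)$ directly and that term is diagonal.
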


\begin{proof}
The total parity operator is
\begin{equation}
  P=\ii^2\gamma_1\gamma_2\gamma_3\gamma_4.
\end{equation}
It has eigenvalue \(+1\) on the even sector and \(-1\) on the odd
sector.  Wick's theorem therefore gives
\begin{equation}
  p_e-p_o
  =\Tr\!\bigl[\rho(M)P\bigr]
  =\Pf M=\Delta,
\end{equation}
which, together with \(p_e+p_o=1\), proves
\eqref{eq:pf-parity}.
Using
\begin{equation}
  c_A=\frac12(\gamma_1+\ii\gamma_2),
  \qquad
  c_B=\frac12(\gamma_3+\ii\gamma_4),
\end{equation}
and the covariance convention \eqref{eq:cm-definition}, a direct
expansion gives
\begin{align}
  z
  &=
  \frac14\left[
    -(M_{1,4}+M_{2,3})
    -\ii(M_{1,3}-M_{2,4})
  \right],
  \\
  w
  &=
  \frac14\left[
    (M_{1,4}-M_{2,3})
    +\ii(M_{1,3}+M_{2,4})
  \right].
\end{align}
Taking absolute values and substituting into
\eqref{eq:sector-concurrences}, together with
\eqref{eq:pf-parity}, yields
\eqref{eq:two-mode-cm-concurrences}.
\end{proof}

\section{Proof of the full-rank formation formulas}
\label{app:fullrank-proof}

In this appendix, we derive the full-rank formation formulas by
reducing the optimization to the even and odd parity sectors, where
the two-qubit entanglement-of-formation formula applies.

\begin{proof}
For one mode per party, every physical pure vector is Gaussian.
The physical roof separates into the even and odd parity sectors,
each of which has the two-qubit formation value given by
Wootters' formula
\cite{Wootters1998,PhysRevA.76.022311,Dariano2014}; we recall
the formula and its proof in Appendix~\ref{sec:two-mode-formula}.
For $\tau_t$ each sector has weight $1/2$ and concurrence $t$,
proving Eq.~\eqref{eq:noisy-one-copy-value} for both roofs.
For two copies, apply exactly the local Gaussian transformations
in Eq.~\eqref{eq:pair-local-SO-maps}.  The two active cross
covariances are now $-t$ and all spectator covariances vanish.
Gaussian uniqueness gives
\begin{equation}\label{eq:noisy-local-normal-form}
 \tau_t^{\otimes2}
 \simeq_{\mathrm{local\ Gaussian}}
 \sigma_t\otimes\frac{\id_{A_s}}2\otimes\frac{\id_{B_s}}2,
\end{equation}
where, in the Alice-first occupation representation of the two
active modes,
\begin{equation}\label{eq:noisy-active-state}
 \sigma_t=\frac14\bigl(\id+tX\otimes X-tY\otimes Y
                                    +t^2Z\otimes Z\bigr).
\end{equation}
Its even sector has weight $(1+t^2)/2$ and normalized
concurrence $2t/(1+t^2)$; its odd sector is diagonal.
The two-mode formula therefore gives
\begin{equation}\label{eq:noisy-active-roof}
 \EF^{\mathrm{phys}}(\sigma_t)
 =\EF^{\fG}(\sigma_t)=g(t).
\end{equation}
An optimal Gaussian ensemble of $\sigma_t$ is obtained by setting
\begin{equation}\label{eq:noisy-active-ensemble-vectors}
 \ket{u_t}:=\frac{\ket{0,0}+t\ket{1,1}}{\sqrt{1+t^2}},
 \qquad
 \ket{v_t}:=\frac{t\ket{0,0}+\ket{1,1}}{\sqrt{1+t^2}}.
\end{equation}
Then
\begin{align}\label{eq:noisy-active-optimal-ensemble}
 \sigma_t={}&\frac{1+t^2}{4}
      \bigl(\ketbra{u_t}+\ketbra{v_t}\bigr)\notag\\
 &+\frac{1-t^2}{4}
      \bigl(\ketbra{01}+\ketbra{10}\bigr).
\end{align}
The first two branches each have entanglement
$h(t^2/(1+t^2))$, while the other two are product states.
Tensoring this ensemble with the occupation ensembles of the
spectators and applying the inverse local Gaussian unitaries
proves the upper bound $g(t)$ for both roofs of
$\tau_t^{\otimes2}$.

To see that the spectator modes cannot lower the physical roof,
take any physical pure-state ensemble of the state on the right
of Eq.~\eqref{eq:noisy-local-normal-form} and measure both
spectator occupations locally.  Every nonzero conditional branch on the
active modes is pure and has definite parity: the total branch parity
and both measured occupation parities are fixed.  The active branches
are understood after fermionic restriction: deleting fixed occupation
spectators introduces only local parity unitaries in the occupation
representation and leaves each branch's entanglement unchanged.
The spectator occupation projectors commute with active observables,
so these branches form a physical ensemble of $\sigma_t$.  Average pure-state
entanglement cannot increase under the two local measurements,
by concavity of the unmeasured party's entropy at each step.
The original ensemble thus has average entanglement at least
$\EF^{\mathrm{phys}}(\sigma_t)=g(t)$.  This proves the matching
physical lower bound; the Gaussian lower bound follows from
$\EF^{\fG}\geq\EF^{\mathrm{phys}}$.

It remains to prove strictness for the entire family.  Let
$\Delta(t):=2f(t)-g(t)$ and $r:=\sqrt{1-t^2}$.  For $0<t<1$,
direct differentiation gives
\begin{align}\label{eq:noisy-gap-derivative}
 \frac{\Delta'(t)}t
 &=\frac{2\operatorname{arctanh}r}{r}
       -\log\frac{1+t^2}{t^2}\notag\\
 &=-\left(\frac1r-1\right)\log t^2
      +\frac2r\log(1+r)-\log(1+t^2).
\end{align}
The first term in the second line is nonnegative.  Concavity of
$\log(1+r)$ gives $\log(1+r)\geq r\log2$ on $[0,1]$,
and $\log(1+t^2)\leq\log2$.  Hence
$\Delta'(t)\geq t\log2$.  Since $\Delta(0)=0$,
$\Delta(t)\geq\tfrac12t^2\log2>0$ for $t>0$; the
endpoint $t=1$ follows by continuity.
\end{proof}

\section{The local Gaussian conversion}
\label{sec:gaussian-matching-proof}

The matching ensemble in Sec.~\ref{sec:gaussian-nonadditivity} has an
alternative description in local occupation coordinates.  The
conversion of two copies into an active Bell pair and local spectator
modes was established by Spee, Schwaiger, Giedke, and Kraus
\cite[Observation~4]{SpeeSchwaigerGiedkeKraus2018}.  We give its explicit
form in our Majorana convention.  The same change of basis is used for
the noisy family in Appendix~\ref{app:fullrank-proof} and the operational
rates in Appendix~\ref{app:operational-rates}.

\begin{proof}
For two copies, denote the global Majoranas at Alice by
$\gA_1,\ldots,\gA_4$ and those at Bob by
$\gB_1,\ldots,\gB_4$.  The only correlated pairs are
$(\gA_2,\gB_1)$ and $(\gA_4,\gB_3)$, each with covariance $-1$.
Make the local changes of Majorana basis
\begin{align}\label{eq:pair-local-SO-maps}
 (\alpha_1,\alpha_2,\alpha_3,\alpha_4)
   &=(\gA_2,\gA_4,\gA_1,-\gA_3),\notag\\
 (\beta_1,\beta_2,\beta_3,\beta_4)
   &=(\gB_3,\gB_1,\gB_2,\gB_4).
\end{align}
Both signed permutations have determinant $+1$, so each is
implemented by a local parity-preserving Gaussian unitary.  The
Majoranas $(\alpha_1,\alpha_2)$ define an active Alice mode and
$(\beta_1,\beta_2)$ an active Bob mode; the two remaining Majoranas
at each party define a spectator mode.  In the new basis the only
nonzero independent covariance entries are
$M_{\alpha_1,\beta_2}=M_{\alpha_2,\beta_1}=-1$.
In the order $(\alpha_1,\alpha_2,\beta_1,\beta_2)$, the active
covariance matrix is
\begin{equation}\label{eq:active-bell-covariance}
 M_{\mathrm{act}}=
 \begin{pmatrix}
 0&0&0&-1\\
 0&0&-1&0\\
 0&1&0&0\\
 1&0&0&0
 \end{pmatrix}.
\end{equation}
This is the covariance matrix of $\ket{\Phi_+}$ in our convention.
For example, its entries $M_{1,4}=M_{2,3}=-1$ follow from
$\langle Y\otimes Y\rangle_{\Phi_+}=-1$ and
$\langle X\otimes X\rangle_{\Phi_+}=1$.  The covariance entries
involving a spectator Majorana all vanish.  The full covariance
matrix is therefore the direct sum of $M_{\mathrm{act}}$ and two
zero $2\times2$ matrices, after grouping the active modes.
A zero one-mode covariance matrix represents $\id_2/2$.
Because the state is Gaussian and its covariance matrix determines it
uniquely, this proves the local normal form
\begin{equation}\label{eq:pair-local-normal-form}
 \tau^{\otimes2}
 \simeq_{\mathrm{local\ Gaussian}}
 \ketbra{\Phi_+}_{A_0B_0}
 \otimes\frac{\id_{A_s}}2\otimes\frac{\id_{B_s}}2.
\end{equation}
Here $\simeq_{\mathrm{local\ Gaussian}}$ denotes equivalence under
local parity-preserving Gaussian unitaries, with the fermionic
product convention of Sec.~\ref{sec:preliminaries}.

Let $U=U_AU_B$ be a product of local Gaussian unitaries implementing
Eq.~\eqref{eq:pair-local-normal-form}, so that $U\tau^{\otimes2}U^\dagger$
is the state on its right-hand side.  Decomposing the two spectator
states in their occupation bases gives four vectors.  In the
Alice-first order $A_0,A_s,B_0,B_s$, they are
\begin{equation}\label{eq:pair-optimal-spectator-ensemble}
 \ket{\xi_{j, k}}
 =\frac{\ket{0,j,0,k}+(-1)^j\ket{1,j,1,k}}{\sqrt2},
 \qquad j,k\in\{0,1\}.
\end{equation}
The sign $(-1)^j$ comes from moving the occupied $B_0$ mode past
$A_s$; it is the regrouping sign of
Eq.~\eqref{eq:fermionic-regrouping-sign}.  Equivalently,
$\ketbra{\xi_{j, k}}$ is the fermionic product of the active Bell
projector and the two occupation projectors.  Hence
\begin{equation}\label{eq:pair-inverse-unitary-ensemble}
 \tau^{\otimes2}
 =\frac14\sum_{j,k=0}^1\ketbra{\psi_{j, k}},
 \qquad \ket{\psi_{j, k}}:=U^\dagger\ket{\xi_{j, k}}.
\end{equation}

This ensemble is admissible for both convex roofs.  Each
$\ket{\xi_{j, k}}$ is a product of pure Gaussian states on disjoint
mode sets and has total parity $(-1)^{j+k}$; both properties are
preserved by $U^\dagger$.  Its Alice marginal is
\begin{equation}
 \Tr_{B_0B_s}\ketbra{\xi_{j, k}}
 =\frac12\bigl(\ketbra{0,j}+\ketbra{1,j}\bigr).
\end{equation}
It has two nonzero eigenvalues, both $1/2$.  The same is true for
$\ket{\psi_{j, k}}$, since a local unitary leaves the reduced spectrum
unchanged.  Thus all four branches have entanglement $\log2$.

For $m=2q+r$, where $r\in\{0,1\}$, apply this construction to each
of the $q$ pairs of copies.  If $r=1$, decompose the remaining copy
using Eq.~\eqref{eq:tau-definition}.  There are
$4^q2^r=2^m$ equally weighted branches.  Products of definite-parity
Gaussian pure states remain Gaussian and have definite total parity,
so these branches are again admissible.  Before undoing the local
unitaries, each branch has $q+r$ active Bell pairs and $q$ pure
spectator modes at each party.  Its Alice covariance matrix is a
direct sum of $q+r$ zero one-mode blocks and $q$ pure one-mode
blocks.  The marginal consequently has $2^{q+r}$ equal nonzero
eigenvalues, each $2^{-(q+r)}$, and entropy $(q+r)\log2$.
Undoing the local unitaries preserves this spectrum.  Consequently,
the ensemble attains the upper bound in
Eq.~\eqref{eq:many-copy-upper-bound}.
\end{proof}

To compare with the direct construction, choose the free matching pairs
$(\gA_1,\gA_3)$ and $(\gB_2,\gB_4)$ for a pair of copies.  In the
coordinates of Eq.~\eqref{eq:pair-local-SO-maps}, their bilinears satisfy
\begin{equation}
 \ii\alpha_3\alpha_4=-\ii\gA_1\gA_3,
 \qquad
 \ii\beta_3\beta_4=\ii\gB_2\gB_4.
\end{equation}
The bilinear of a mode with occupation $j$ has eigenvalue $2j-1$.
Consequently, the matching signs of
Eq.~\eqref{eq:upper-branch-density} are
$\sigma_A=(-1)^j$ and $\sigma_B=-(-1)^k$ for the spectator occupations
$j,k$.  Reversing the orientation of a matching pair reverses its sign
label.  Thus uniform averaging over the free local matching signs is
exactly uniform averaging over the spectator occupations.  Applying
this correspondence to disjoint pairs of copies identifies the two
ensembles at every even copy number.  At an odd copy number, the
remaining free cross-pair sign gives the two Bell branches of
Eq.~\eqref{eq:tau-definition}.

\section{Fermionic copy composition and Bell nonlocality}
\label{app:copy-composition}

The distinction between fermionic and ordinary qubit regrouping can be
seen directly in the state used throughout 
this work.
In the
copy-contiguous ordering $A_1,B_1,\ldots,A_m,B_m$, its density matrix is
the Kronecker power of $(\id_4+X\otimes X)/4$.  The unitary coordinate
change to $A_1,\ldots,A_m,B_1,\ldots,B_m$ acts on occupation strings as
\begin{equation}\label{eq:general-copy-regrouping}
 \ket{\boldsymbol n_A,\boldsymbol n_B}_{\mathrm{interleaved}}
 \longmapsto
 (-1)^{\sum_{r<s}n_{B_r}n_{A_s}}
 \ket{\boldsymbol n_A}\otimes\ket{\boldsymbol n_B}.
\end{equation}
Every sign records a crossing of occupied creation operators.
This is the fermionic swap convention of
Refs.~\cite{SpeeSchwaigerGiedkeKraus2018,Szalay2021}.

For two copies in the grouped ordering $A_1,A_2,B_1,B_2$, the stabilizers
of Eq.~\eqref{eq:Qr-definition} are
\begin{equation}\label{eq:two-copy-pauli-stabilizers}
 Q_1=X\otimes Z\otimes X\otimes\id,
 \qquad
 Q_2=\id\otimes X\otimes Z\otimes X.
\end{equation}
Consequently, the matrix of the fermionic product is
\begin{align}\label{eq:two-copy-grouped-matrix}
 \tau^{\otimes2}
 =\frac1{16}\bigl(&\id_{16}
       +X\otimes Z\otimes X\otimes\id\notag\\
       &+\id\otimes X\otimes Z\otimes X
       +X\otimes Y\otimes Y\otimes X\bigr).
\end{align}
If the crossing sign were omitted, ordinary qubit regrouping would instead
give
\begin{align}\label{eq:two-copy-unsigned-matrix}
 \frac1{16}\bigl(&\id_{16}
       +X\otimes\id\otimes X\otimes\id\notag\\
       &+\id\otimes X\otimes\id\otimes X
       +X\otimes X\otimes X\otimes X\bigr).
\end{align}
These matrices describe different correlations across the grouped cut.
Debarba, Iemini, Giedke, and Friis~\cite[Sec.~V.1]{Debarba2020}
argued that no number of copies of their two-mode mixed maximally
entangled state, which is $\tau$ here, can violate a Bell inequality
under parity superselection.  Their argument invokes separability of
the ordinary two-qubit representative and its ordinary tensor powers.
The inference does not apply to the fermionic product: the
regrouping signs distinguish
Eq.~\eqref{eq:two-copy-grouped-matrix} from
Eq.~\eqref{eq:two-copy-unsigned-matrix}.
The following proposition gives a counterexample and determines the
minimum number of copies.

\begin{proposition}[Exact threshold for Bell violation]
\label{prop:bell-activation-threshold}
Consider $\tau^{\otimes m}$ with the CAR composition, using only these
copies and local measurements whose effects commute with the total
parity at the respective party.
For $m=1,2$, every Bell experiment admits a local hidden-variable model.
For every $m\geq3$, local Gaussian unitaries followed by occupation
measurements attain the maximal quantum CHSH value $2\sqrt2$,
without postselection.
\end{proposition}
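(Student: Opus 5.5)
The plan has two parts. For $m=1,2$ I reduce Bell experiments to a separable, locally parity-dephased state. For $m\geq3$ I exhibit a logical Bell pair inside the locally even operator algebra.

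\emph{No violation for $m\le2$.} Every local effect $E$ at Alice commutes with $P_A$, and every effect $F$ at Bob commutes with $P_B$. Hence, for every state,
\[
\Tr[(E\otimes F)\rho]=\Tr[(E\otimes F)\,\mathcal D(\rho)],
\]
where $\mathcal D=\mathcal D_A\circ\mathcal D_B$ and $\mathcal D_X(\rho)=\tfrac12(\rho+P_X\rho P_X)$ is the local parity dephasing. In the grouped Jordan--Wigner ordering, $E\otimes F$ is the ordinary product of locally even operators, so no fermionic sign issue arises. It therefore suffices to show that $\mathcal D(\tau^{\otimes m})$ is a convex mixture of products of locally even states. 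Such a product decomposition is a local hidden-variable model for all measurement settings simultaneously.
\begin{itemize}
\item For $m=1$, write $\tau=\tfrac14(\id+Q)$ with $Q=-\ii\gamma_2\gamma_3$, a product of two locally odd Majoranas. Since $\mathcal D$ kills $Q$, we get $\mathcal D(\tau)=\id/4$.
\item For $m=2$, use the local normal form \eqref{eq:pair-local-normal-form}. Local Gaussian parity-preserving unitaries commute with $\mathcal D_A$ and $\mathcal D_B$ and preserve local evenness, so we may work in the normal form. Dephasing $\ketbra{\Phi_+}$ removes the coherences between $\ket{0,0}$ and $\ket{1,1}$, which differ in Alice's local parity. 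This leaves $\tfrac12(\ketbra{0,0}+\ketbra{1,1})$. Together with the maximally mixed spectators, which are diagonal in occupation, the result is a mixture of products of local occupation states. Each term is locally even.
\end{itemize}

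\emph{Violation for $m\geq3$.} The extra copies beyond three are simply ignored, so I work with three copies. I use the stabilizers $Q_r=-\ii\gA_{2r}\gB_{2r-1}$ for $r=1,2,3$, which satisfy $Q_r\tau^{\otimes3}=\tau^{\otimes3}$. Define Alice's locally even Hermitian unitaries
\[
X_A=\ii\gA_2\gA_4,\qquad Z_A=\ii\gA_4\gA_6,\qquad Y_A=\ii\gA_2\gA_6,
\]
and Bob's analogues $X_B,Z_B,Y_B$ built from $\gB_1,\gB_3,\gB_5$ in the same way. Three Majoranas generate a Pauli algebra through their bilinears, so these pairwise anticommute and form a logical qubit at each side.

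The key computation is that $X_AX_B$ equals $\pm Q_1Q_2$, and similarly $Z_AZ_B=\pm Q_2Q_3$ and $Y_AY_B=\pm Q_1Q_3$. Each expectation in the stabilized state is therefore $\pm1$, so the logical two-qubit reduced state is a maximally entangled Bell state up to signs. I will then take the CHSH settings $X_A$ and $Z_A$ for Alice and $(X_B\pm Z_B)/\sqrt2$ for Bob. These yield $2\sqrt2$ once the signs are absorbed into Bob's settings.

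It remains to show that each setting is implemented by a local Gaussian unitary followed by an occupation measurement, without postselection. Every setting is a bilinear $\ii uv$ of two orthonormal real combinations $u,v$ of local Majoranas. For example, $(X_B+Z_B)/\sqrt2=\ii\gB_3\,(\gB_5-\gB_1)/\sqrt2$, with the sign to be checked. An $SO$ rotation maps $(u,v)$ onto the Majorana pair of one local mode, whose bilinear is $\pm Z$ of that mode. Since every outcome is recorded, no postselection is needed. The value $2\sqrt2$ is the Tsirelson bound, so it is maximal.

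The main obstacle I expect is the sign bookkeeping when reordering the products of Majoranas into $Q_rQ_s$. This determines the signs of the $XX$, $ZZ$ and $YY$ correlations and hence which Bell state appears. I would fix it by a direct anticommutation count and then choose Bob's settings accordingly. A secondary point to state carefully is that the dephasing argument uses only commutation of the effects with the local parities, so it covers arbitrary POVMs and arbitrary numbers of settings.
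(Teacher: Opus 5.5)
Your proposal is correct. For $m=1$ and $m\geq3$ it is the paper's argument: the same local parity dephasing, the same bilinears built from $\gA_2,\gA_4,\gA_6$ and $\gB_1,\gB_3,\gB_5$, the same Bob settings, and the same Gaussian implementation via $\ii\gB_3(\gB_5-\gB_1)/\sqrt2$ (your sign there is right). The sign you left open is $X_AX_B=-Q_1Q_2$ and $Z_AZ_B=-Q_2Q_3$. The paper absorbs it on Alice's side by using $-\ii\gA_2\gA_4$ and $-\ii\gA_4\gA_6$, which is equivalent to your flip on Bob's side. Your logical $Y$ and the identification of a logical Bell state are unnecessary, because with these settings the CHSH operator is identically $\sqrt2(X_AX_B+Z_AZ_B)$.

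The only real difference is at $m=2$. You pass to the local normal form \eqref{eq:pair-local-normal-form} and dephase the Bell pair. The paper instead expands $\tau^{\otimes2}=\frac1{16}(\id+Q_1)(\id+Q_2)$. Dephasing kills $Q_1$ and $Q_2$, which are odd at both parties, but keeps $Q_1Q_2=(X\otimes Y)_A\otimes(Y\otimes X)_B$, which is even at both. The paper then splits $(\id+Q_1Q_2)/16$ as $\tfrac12\sum_s$ of products of locally even states.

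The paper's route is self-contained: it needs neither the two-copy conversion nor the commutation of local Gaussian unitaries with the dephasing. Yours imports that lemma, but it makes transparent why two copies fail. Two copies are locally a Bell pair plus maximally mixed spectators, and a locally dephased Bell pair is only classically correlated. The two decompositions essentially coincide, since $X\otimes Y=-\ii\gA_2\gA_4$ is Alice's active occupation observable in your coordinates.
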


\begin{proof}
First consider the impossibility with fewer than three copies.
Define the local parity dephasing map
\begin{equation}\label{eq:local-parity-dephasing}
 \mathcal T_{\mathrm{loc}}(\rho)
 =\frac14\sum_{u,v=0}^1
 P_A^uP_B^v\,\rho\,P_B^vP_A^u.
\end{equation}
Every allowed local measurement effect commutes with the relevant
parity.  Its joint outcome probabilities are therefore identical on
$\rho$ and $\mathcal T_{\mathrm{loc}}(\rho)$.

For one copy,
$\mathcal T_{\mathrm{loc}}(\tau)=\id_4/4$.
For two copies, each $Q_r$ is odd at both parties, so local parity
dephasing removes the terms $Q_1$ and $Q_2$ while retaining their
product, since
\begin{equation}\label{eq:two-copy-parity-dephased}
 \mathcal T_{\mathrm{loc}}(\tau^{\otimes2})
 =\frac{\id_{16}+Q_1Q_2}{16}.
\end{equation}
In the Alice-first matrix representation,
Eq.~\eqref{eq:two-copy-pauli-stabilizers} gives
\begin{equation*}
 Q_1Q_2=(X\otimes Y)_A\otimes(Y\otimes X)_B.
\end{equation*}
Consequently,
\begin{equation}\label{eq:two-copy-local-model}
 \mathcal T_{\mathrm{loc}}(\tau^{\otimes2})
 =\frac12\sum_{s=\pm1}
 \left(\frac{\id_4+sX\otimes Y}{4}\right)_A
 \otimes
 \left(\frac{\id_4+sY\otimes X}{4}\right)_B.
\end{equation}
Each factor is a normalized positive state commuting with local
parity $Z\otimes Z$.  Thus the dephased state is separable.
The displayed decomposition gives a local hidden-variable model,
with shared variable $s$, for every allowed Bell experiment on two
copies.  The one-copy dephased state is also separable.

For three copies, use the global Majoranas and stabilizers
$Q_r=-\ii\gA_{2r}\gB_{2r-1}$ from
Eq.~\eqref{eq:Qr-definition}.  Define
\begin{align}\label{eq:three-copy-local-axes}
 A_0&=-\ii\gA_2\gA_4,
 & A_1&=-\ii\gA_4\gA_6,\notag\\
 C_0&=\ii\gB_1\gB_3,
 & C_1&=\ii\gB_3\gB_5.
\end{align}
These are even local Hermitian unitaries.  The CAR give
\begin{equation*}
 \{A_0,A_1\}=0,\qquad \{C_0,C_1\}=0,
 \qquad A_x^2=C_y^2=\id.
\end{equation*}
Alice measures $A_0$ or $A_1$, and Bob measures
\begin{equation}\label{eq:three-copy-bob-settings}
 B_0=\frac{C_0+C_1}{\sqrt2},
 \qquad
 B_1=\frac{C_0-C_1}{\sqrt2}.
\end{equation}
Anticommutation implies $B_0^2=B_1^2=\id$.
All four measurements preserve local parity, and Alice's observables
commute with Bob's.

Again using the CAR,
\begin{equation}\label{eq:three-copy-stabilizer-correlations}
 A_0C_0=Q_1Q_2,\qquad A_1C_1=Q_2Q_3.
\end{equation}
Equation~\eqref{eq:tau-m-product-projector} implies
$Q_r\tau^{\otimes3}=\tau^{\otimes3}$ for $r=1,2,3$.
The CHSH expression~\cite{Clauser1969} is therefore
\begin{align}\label{eq:three-copy-chsh-value}
 &\Tr\!\left[
 \tau^{\otimes3}\bigl(A_0(B_0+B_1)+A_1(B_0-B_1)\bigr)
 \right]\notag\\
 &\qquad=\sqrt2\,\Tr\!\left[
 \tau^{\otimes3}(Q_1Q_2+Q_2Q_3)\right]
 =2\sqrt2.
\end{align}
This is the quantum maximum and exceeds the local hidden-variable
bound $2$.

The measurements have Gaussian implementations.  Alice's observables
are Majorana bilinears, while Bob's can be written as
\begin{equation*}
 B_0=\ii\gB_3\frac{\gB_5-\gB_1}{\sqrt2},
 \qquad
 B_1=-\ii\gB_3\frac{\gB_1+\gB_5}{\sqrt2}.
\end{equation*}
Each is a bilinear in two orthonormal local Majoranas and can be
measured by a local Gaussian change of basis followed by an occupation
measurement.  Every outcome is retained.  For $m>3$, the same test
uses three copies and ignores the remaining ones.
\end{proof}

An indirect three-copy route can also be inferred from earlier
conversions.  Observation~4 of
Ref.~\cite{SpeeSchwaigerGiedkeKraus2018} converts two copies of $\tau$
into a pure fermionic Bell pair and local spectator modes.
Section~V.1 of Ref.~\cite{Debarba2020} describes how one such pair and
one copy of $\tau$ can be converted deterministically into an encoded
ebit.  The proposition gives a direct Gaussian measurement construction
and proves that fewer than three copies cannot violate a Bell inequality.

An alternative four-copy construction follows from the local Gaussian
conversion established in
Ref.~\cite[Observation~4]{SpeeSchwaigerGiedkeKraus2018} and written in
Eq.~\eqref{eq:pair-local-normal-form}.  Apply the conversion to two
disjoint pairs of copies of $\tau$ and discard the four spectator
modes.  The resulting state consists of two physical Bell pairs.
The use of two such pairs for Bell violation is already known
\cite{Dasenbrook2016,Debarba2020}.

In the Alice-first order $A_1,A_2,B_1,B_2$, the two retained Bell pairs
have 
the state vector
\begin{equation}\label{eq:four-copy-bell-output}
 \ket{\Omega}
 =\frac12\bigl(
 \ket{0,0,0,0}+\ket{0,1,0,1}+\ket{1,0,1,0}-\ket{1,1,1,1}
 \bigr).
\end{equation}
The minus sign is the fermionic crossing sign in
Eq.~\eqref{eq:general-copy-regrouping}.
Choose
\begin{align}\label{eq:four-copy-chsh-observables}
 \widehat A_0&=Z\otimes\id,
 &\widehat A_1&=X\otimes X,\notag\\
 \widehat B_0&=\frac{Z\otimes\id+Y\otimes Y}{\sqrt2},
 &\widehat B_1&=\frac{Z\otimes\id-Y\otimes Y}{\sqrt2}.
\end{align}
These observables square to the identity and commute with the
respective local parity $Z\otimes Z$.  The state $\ket{\Omega}$ has
\begin{equation*}
 \langle(Z\otimes\id)_A\otimes(Z\otimes\id)_B\rangle_\Omega=1,
 \qquad
 \langle(X\otimes X)_A\otimes(Y\otimes Y)_B\rangle_\Omega=1.
\end{equation*}
Hence
\begin{equation}\label{eq:four-copy-chsh-value}
 \bigl\langle
 \widehat A_0\otimes(\widehat B_0+\widehat B_1)
 +\widehat A_1\otimes(\widehat B_0-\widehat B_1)
 \bigr\rangle_\Omega
 =2\sqrt2.
\end{equation}
The conversion uses only local operations and is deterministic, with no
measurement outcome discarded.  Including this conversion
in the local measurement procedure gives a Bell violation for
$\tau^{\otimes4}$.
Proposition~\ref{prop:bell-activation-threshold} shows that three
copies already suffice and that this copy number is minimal.

Fermionic regrouping is also essential when taking reduced states.  An ordinary partial trace
of an interleaved subsystem may erase a parity string that belongs to the
retained mode algebra.  One must first adapt the ordering by fermionic
swaps, or equivalently restrict the state directly to the desired CAR
subalgebra.  All marginal states and entropies in 
this 
work
use that
restriction.

\section{Operational rates in fermionic Bell pairs}
\label{app:operational-rates}

\begin{corollary}[Formation and distillation in fermionic Bell pairs]
\label{cor:operational-tau}
Take $\beta:=\ketbra{\Phi_+}$ as the unit resource, with one
fermionic mode per party.  Allow local operations admitting parity-preserving dilations
with local physical ancillas, local discarding, and classical
communication (fermionic LOCC).  Define formation cost and
distillable entanglement using asymptotically vanishing trace
distance and count copies of $\beta$ per copy of the target.
Then
\begin{equation}\label{eq:operational-tau-rates}
 E_{\mathrm C}^{\beta}(\tau)
 =E_{\mathrm D}^{\beta}(\tau)=\frac12.
\end{equation}
The same rates hold when the protocols are restricted to Gaussian
local operations.  In entropy units, the value is $\tfrac12\log2$.
\end{corollary}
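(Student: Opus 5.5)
The proof splits into achievability and converse, for both formation cost and distillable entanglement, with Bell pairs $\beta$ as the unit.

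\textbf{Achievability.} Both directions use the local Gaussian normal form \eqref{eq:pair-local-normal-form}. For distillation, Alice and Bob take disjoint pairs of copies of $\tau$. On each pair they apply the local Gaussian unitaries $U_A,U_B$ and discard the spectator modes. Each pair of copies then yields one exact copy of $\beta$, so $E_{\mathrm D}^\beta(\tau)\geq\tfrac12$ with zero error and Gaussian operations only.

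For formation, the map must be reversed. Alice and Bob start from one $\beta$ and locally prepare the maximally mixed spectators $\id/2$. This is possible because a maximally mixed mode is a mixture of occupation states, which are physical and Gaussian. They then apply $U_A^\dagger U_B^\dagger$, which produces $\tau^{\otimes2}$ exactly. Hence $E_{\mathrm C}^\beta(\tau)\leq\tfrac12$, again with zero error, no communication and Gaussian operations.

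\textbf{Converse for formation.} Suppose $n$ Bell pairs are converted by fermionic LOCC to a state $\rho_m$ with $\|\rho_m-\tau^{\otimes m}\|_1\le\eps_m\to0$. The cleanest monotone here is the Schmidt number, or equivalently the max-rank of pure branches. Fermionic LOCC admits parity-preserving dilations, so each Kraus branch maps a pure state of definite parity with Schmidt rank $r$ to pure states of definite parity with Schmidt rank at most $r$. Therefore $\rho_m$ has a physical decomposition into pure branches of local rank at most $2^n$.

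I then combine this with the spectral cap \eqref{eq:branch-physical-spectral-cap}. Every physical pure branch in the support of $\tau^{\otimes m}$ has a marginal with all eigenvalues at most $2^{-\lceil m/2\rceil}$. The goal is a fidelity bound of the form
\begin{equation*}
\Tr(\Pi\rho_m)\leq 2^{n-\lceil m/2\rceil}.
\end{equation*}
For a branch $\ket\phi$ of Schmidt rank $r$, the fidelity with the support projector $\Pi$ is bounded by $r$ times the largest marginal eigenvalue over the support. I would derive this as in Corollary~\ref{cor:channel-capacity}, using the Schmidt decomposition together with Lemma~\ref{lem:clifford-degeneracy-bound} applied to $\Pi(\cdot)\Pi$. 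Since $\Tr(\Pi\tau^{\otimes m})=1$, closeness forces $2^{n-\lceil m/2\rceil}\geq1-\eps_m$. Therefore $\liminf n/m\geq\tfrac12$.

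\textbf{Converse for distillation.} This uses the same mechanism in reverse. Take the physical decomposition of $\tau^{\otimes m}$ into branches with flat marginals of rank $2^{\lceil m/2\rceil}$; this is exactly \eqref{eq:upper-branch-entanglement}, realized by the Gaussian ensemble. Each branch has Schmidt rank $2^{\lceil m/2\rceil}$, and LOCC cannot raise it. The fidelity of any state of Schmidt rank $r$ with $\beta^{\otimes n}$ is at most $r/2^n$, which is an ordinary fact in the Alice-first representation since $\beta^{\otimes n}$ is maximally entangled of rank $2^n$. Thus $F\le 2^{\lceil m/2\rceil-n}$, which gives $\limsup n/m\le\tfrac12$ even with vanishing error. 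Gaussian restrictions only shrink the protocol class, so the same rates hold for Gaussian protocols. In entropy units the value is $\tfrac12\log2$.

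\textbf{Main obstacle.} The delicate step is justifying Schmidt-rank monotonicity under fermionic LOCC. The issue is that local Kraus operators may be odd, and fermionic regrouping signs intervene. I would handle this through the representation \eqref{eq:global-majoranas-tensor-representation}. There, Alice's operators act as $a\otimes\id$, and Bob's as $P_A^{\text{parity}}\otimes b$. Bob's string $P_A$ is a local unitary on Alice's side that commutes with branch structure, so every local Kraus operator acts as a tensor product up to a local unitary. Schmidt rank is therefore nonincreasing, and its monotonicity follows as in the ordinary case.
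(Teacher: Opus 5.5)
Your proposal is correct. It differs from the paper mainly in the formation converse.

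\textbf{Achievability and distillation converse.} Your achievability matches the paper's exact block maps $\mathcal F_m,\mathcal D_m$. For odd $m$, make one extra copy or discard one; you leave this implicit, which is fine for rates.

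Your distillation converse is the paper's in substance. The paper composes the protocol with $\mathcal F_m$ so that it starts from $\beta^{\otimes\lceil m/2\rceil}$. You instead read the Schmidt-number bound $2^{\lceil m/2\rceil}$ for $\tau^{\otimes m}$ directly off the flat Gaussian ensemble. These are the same fact.

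\textbf{Formation converse.} The paper never uses the support of $\tau^{\otimes m}$ here. It composes the formation protocol with $\mathcal D_m$ and bounds the overlap of a Schmidt-number-$2^k$ state with $\beta^{\otimes\lfloor m/2\rfloor}$. This gives $1-\eps\le\min\{1,2^{k-\lfloor m/2\rfloor}\}$.

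You bound $\Tr(\Pi\rho_m)$ branch by branch. Since $\Pi$ is even, $\Pi\ket\phi/\norm{\Pi\ket\phi}$ is a physical vector in the support. Its marginal is therefore capped by Eq.~\eqref{eq:branch-physical-spectral-cap}. Cauchy--Schwarz over the Schmidt decomposition of $\ket\phi$ then gives $\bra\phi\Pi\ket\phi\le r\,2^{-\lceil m/2\rceil}$.

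This argument is valid and slightly sharper. For odd $m$ it shows that exact formation needs $\lceil m/2\rceil$ Bell pairs, matching $\mathcal F_m$. The paper's bound gives only $\lfloor m/2\rfloor$.

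The cost is that you must track definite parity of every output branch. The $P_A$ symmetry is what supplies the extra factor of two for odd $m$. The paper's route needs only Schmidt-rank monotonicity and the overlap bound for maximally entangled states, and it uses the reversible conversion symmetrically in both directions.

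\textbf{Monotonicity bookkeeping.} Your string-on-Alice convention $\gB_j=P_A\otimes b_j$ is equivalent to the paper's Kraus form $A_s\otimes P_B^{p_s}$. You should say explicitly that the other local steps behave the same way. Appending an odd local ancilla, or tracing out local modes that are not last in the ordering, also produces a parity string on the other party. So those steps are product operators too, and Schmidt rank still cannot increase.
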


\begin{proof}
The known local Gaussian conversion of two copies
\cite[Observation~4]{SpeeSchwaigerGiedkeKraus2018}, written
explicitly in Eq.~\eqref{eq:pair-local-normal-form}, is reversible
after adjoining or discarding the two local maximally mixed
spectator modes.  Applying it to pairs gives Gaussian local
channels $\mathcal F_m$ and $\mathcal D_m$ satisfying
\begin{align}\label{eq:exact-formation-distillation-blocks}
 \mathcal F_m\!\left(\beta^{\otimes\lceil m/2\rceil}\right)
    &=\tau^{\otimes m},\\
 \mathcal D_m\!\left(\tau^{\otimes m}\right)
    &=\beta^{\otimes\lfloor m/2\rfloor}.
\end{align}
For odd $m$, formation first makes $m+1$ copies and discards the
last copy; distillation first discards the last input copy.
These exact protocols prove the two achievable rates.

For the converse, choose homogeneous Kraus operators for each local
instrument~\cite[Lemma~5]{SpeeSchwaigerGiedkeKraus2018}.
If $A_s:A\to A'$ has parity $p_s$, meaning
$P_{A'}A_s=(-1)^{p_s}A_sP_A$, then the Alice-first input and output
representations give the following Kraus matrices for the fermionic
extensions of Alice's and Bob's instruments, respectively,
\begin{equation}\label{eq:local-channel-product-kraus}
 \{A_s\otimes P_B^{p_s}\}_s,
 \qquad \{\id_A\otimes B_t\}_t.
\end{equation}
To verify the first formula, write an even local Stinespring isometry as
$V_A=\sum_s A_s\otimes\ket s_E$, where $\ket s_E$ has parity $p_s$.
The temporary ordering is $A',E,B$; moving $E$ past $B$ before tracing
it contributes precisely $P_B^{p_s}$.  For a Bob isometry the environment
is already last.  Every local physical ancillary state admits an
even-parity purification on additional local fermionic modes: pair its
definite-parity eigenvectors with orthogonal purifier vectors of matching
parity.  Such ancillas can therefore be absorbed into the dilation;
rectangular Kraus operators also cover changes in the local mode number.
Thus every complete LOCC branch is an ordinary
product matrix, with all reordering signs included.
Starting with $k$ Bell pairs, even after local
ancillas and discarding, the output has a pure-state decomposition
whose Schmidt ranks are at most $2^k$.  By Cauchy--Schwarz, its
overlap with a maximally entangled state of Schmidt rank $2^\ell$ is
at most $\min\{1,2^{k-\ell}\}$.  This bound is closed under
trace-norm limits, so it also applies to limits of finite-round LOCC
protocols.

Suppose a formation protocol makes $m$ copies from $k$ Bell pairs
with trace distance at most $\eps$, where trace distance means
$\tfrac12\norm{\rho-\sigma}_1$.  Compose it with $\mathcal D_m$.
Contractivity and the overlap bound give
\begin{equation}\label{eq:formation-schmidt-converse}
 1-\eps\leq\min\{1,2^{k-\lfloor m/2\rfloor}\}.
\end{equation}
For a distillation protocol making $\ell$ Bell pairs from $m$
copies with the same error, first apply $\mathcal F_m$ to
$\lceil m/2\rceil$ Bell pairs.  The bound becomes
\begin{equation}\label{eq:distillation-schmidt-converse}
 1-\eps\leq\min\{1,2^{\lceil m/2\rceil-\ell}\}.
\end{equation}
Vanishing error requires a formation rate at least $1/2$ and a
distillation rate at most $1/2$.  The argument applies to arbitrary
fermionic LOCC, completing both converses.
\end{proof}

This operational statement follows from the known two-copy
conversion.  It identifies the regularized roof with the formation
cost for this state and this Bell resource; no local parity
reference is assumed free.

\section{The exact Pfaffian convex roof}
\label{sec:pfaffian-roof}

The covariance matrix records only the two-point correlations of a fermionic
state.  As discussed in Sec.~\ref{sec:preliminaries}, a pure-Gaussian
decomposition of a Gaussian state must reproduce all even Majorana moments,
not merely the covariance matrix.  Wick's rule turns these conditions into a
finite family of Pfaffian constraints.

Let
\begin{equation}
\label{eq:even-index-set}
  \mathcal I_{2n}^{\ev,*}
  :=
  \set{
    I\subseteq\{1,\ldots,2n\}:
    |I|\ {\rm is\ even},\ I\neq\varnothing
  }.
\end{equation}
The number of nonempty even index sets is
\begin{equation}
  D
  :=
  \abs{\mathcal I_{2n}^{\ev,*}}
  =
  2^{2n-1}-1.
\end{equation}
For a covariance matrix \(L\), define its nontrivial even Pfaffian-moment
vector by
\begin{equation}
\label{eq:moment-map}
  m(L)
  :=
  \bigl(\Pf(L_I)\bigr)_
  {I\in\mathcal I_{2n}^{\ev,*}}
  \in\R^D.
\end{equation}

The decomposition constraint can be expressed as equality of
finite-dimensional moment vectors.

\begin{lemma}[Pfaffian characterization of Gaussian mixtures]
\label{lem:mixture-iff-pfaffian}
Let \(M\) be the covariance matrix of a fermionic Gaussian state, and let
\(K_\alpha\in\Jman_{2n}\).  For probabilities \(p_\alpha\), the following
statements are equivalent:
\begin{enumerate}[label=\textup{(\roman*)},leftmargin=2.2em]
\item
\begin{equation}
  \rho(M)
  =
  \sum_\alpha p_\alpha\rho(K_\alpha).
\end{equation}

\item For every even \(I\subseteq\{1,\ldots,2n\}\),
\begin{equation}
\label{eq:all-pfaffian-constraints}
  \Pf(M_I)
  =
  \sum_\alpha p_\alpha\Pf((K_\alpha)_I).
\end{equation}
\end{enumerate}
For \(I=\varnothing\), Eq.~\eqref{eq:all-pfaffian-constraints} is simply the
normalization condition \(\sum_\alpha p_\alpha=1\).
\end{lemma}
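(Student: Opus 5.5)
The plan is to pass through the Hermitian monomial basis $\{\Gamma_I\}$ of Appendix~\ref{app:gaussian-algebra}. Both sides of (i) are parity-even, so each is fixed by its coefficients in this basis. Concretely, I will compare the Pfaffian expansion \eqref{eq:gaussian-pfaffian-expansion} of $\rho(M)$ with the expansion of the mixture $\sum_\alpha p_\alpha\rho(K_\alpha)$. The orthogonality relation \eqref{eq:majorana-orthogonality} then turns operator equality into coefficientwise equality.

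\emph{Direction (i)$\Rightarrow$(ii).} Fix an even $I$, multiply both sides of (i) by $\Gamma_I$, and take the trace. By Wick's rule \eqref{eq:wick-pfaffian}, applied to the Gaussian state $\rho(M)$ and to each pure Gaussian $\rho(K_\alpha)$, we get
\[
\Tr\rho(M)\Gamma_I=\Pf(M_I),\qquad \Tr\rho(K_\alpha)\Gamma_I=\Pf((K_\alpha)_I).
\]
Linearity of the trace gives \eqref{eq:all-pfaffian-constraints}. For $I=\varnothing$, $\Gamma_\varnothing=\id$ and $\Pf$ of the empty matrix is $1$, so the constraint is just $1=\sum_\alpha p_\alpha$.

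\emph{Direction (ii)$\Rightarrow$(i).} Write each $\rho(K_\alpha)$ through \eqref{eq:gaussian-pfaffian-expansion} as $2^{-n}\sum_{I\ \mathrm{even}}\Pf((K_\alpha)_I)\Gamma_I$. Summing with weights $p_\alpha$ and interchanging the finite sums gives
\[
\sum_\alpha p_\alpha\rho(K_\alpha)=2^{-n}\sum_{I\ \mathrm{even}}\Bigl(\sum_\alpha p_\alpha\Pf((K_\alpha)_I)\Bigr)\Gamma_I.
\]
By (ii), each bracket equals $\Pf(M_I)$, so the right-hand side is exactly the expansion of $\rho(M)$.

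The only point needing care is justifying the expansion \eqref{eq:gaussian-pfaffian-expansion} itself; this is the step I expect to be the main obstacle. It rests on two facts. First, every Gaussian state is parity-even, so the odd monomials contribute nothing. Second, the coefficient of $\Gamma_I$ in \eqref{eq:even-state-expansion} is $\Tr(\rho\Gamma_I)$, which Wick's rule identifies with $\Pf(M_I)$. I will cite these as already recorded in the appendix. I also need to check that the sign and phase convention $\Gamma_I=\ii^r\gamma_{i_1}\cdots\gamma_{i_{2r}}$ matches the one used in \eqref{eq:wick-pfaffian}; it does by definition. Finally, finiteness of the ensemble makes the interchange of sums trivial.
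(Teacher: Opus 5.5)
Your proposal is correct and matches the paper's proof. The paper subtracts the two Pfaffian expansions \eqref{eq:gaussian-pfaffian-expansion} and uses the linear independence of the $\Gamma_I$ from \eqref{eq:majorana-orthogonality}; your coefficient comparison, including the trace against $\Gamma_I$ for (i)$\Rightarrow$(ii), is the same argument, and the step you flag as the main obstacle is immediate, since \eqref{eq:gaussian-pfaffian-expansion} is just \eqref{eq:even-state-expansion} with Wick's rule substituted for the coefficients.
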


\begin{proof}
The Pfaffian expansion \eqref{eq:gaussian-pfaffian-expansion} gives
\begin{equation}
  \rho(M)-\sum_\alpha p_\alpha\rho(K_\alpha)
  =
  2^{-n}
  \sum_{I\ {\rm even}}
  \left[
    \Pf(M_I)
    -
    \sum_\alpha p_\alpha\Pf((K_\alpha)_I)
  \right]\Gamma_I.
\end{equation}
The even Majorana monomials \(\Gamma_I\) are linearly independent by
Eq.~\eqref{eq:majorana-orthogonality}.  The operator on the left therefore
vanishes if and only if every coefficient on the right vanishes.
\end{proof}

We can now express the Gaussian entanglement of formation as an ordinary
finite-dimensional constrained convex roof.

\begin{theorem}[Exact finite Pfaffian convex roof]
\label{thm:pfaffian-roof}
Let \(\rho(M)\) be a bipartite fermionic Gaussian state on \(n\) modes.  Then
\begin{equation}
\label{eq:pfaffian-roof}
  \EF^{\fG}(M)
  =
  \min_{\substack{
    N\in\mathbb N,\;
    p_\alpha\geq0,\;
    \sum_{\alpha=1}^N p_\alpha=1\\
    K_\alpha\in\Jman_{2n}}}
  \sum_{\alpha=1}^N p_\alpha E(K_\alpha),
\end{equation}
subject to
\begin{equation}
\label{eq:pfaffian-constraints}
  \sum_{\alpha=1}^N
  p_\alpha\Pf((K_\alpha)_I)
  =
  \Pf(M_I),
  \qquad
  I\in\mathcal I_{2n}^{\ev,*}.
\end{equation}
The minimum is attained, and an optimal ensemble can be chosen with
\begin{equation}
\label{eq:caratheodory-bound}
  N\leq D+1=2^{2n-1}.
\end{equation}
\end{theorem}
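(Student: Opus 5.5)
The plan is to read the Gaussian roof as a convex roof of a continuous function over a compact set, then apply Carathéodory. First, by Lemma~\ref{lem:mixture-iff-pfaffian}, a finite pure-Gaussian ensemble $\{p_\alpha,K_\alpha\}$ decomposes $\rho(M)$ exactly when $\sum_\alpha p_\alpha m(K_\alpha)=m(M)$; normalization accounts for the empty index set. The feasible sets in Definition~\ref{def:fgeof} and in Eq.~\eqref{eq:pfaffian-roof}--\eqref{eq:pfaffian-constraints} therefore coincide, and the infimum in Definition~\ref{def:fgeof} equals the infimum of the Pfaffian-constrained problem. Feasibility is guaranteed by Theorem~\ref{thm:canonical-decomposition}.

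Next I would introduce the lifted map $\Phi:\Jman_{2n}\to\R^{D+1}$, $\Phi(K)=(m(K),E(K))$.
\begin{itemize}
\item $\Jman_{2n}$ is compact: it is closed, since it is cut out by polynomial equations, and bounded, since $K^\transpose K=\id$.
\item Each Pfaffian is a polynomial in the entries of $K$.
\item $E(K)$ is continuous, because the singular values of $K_A$ depend continuously on $K$ and $h$ is continuous on $[0,1]$.
\end{itemize}
Hence $\Phi(\Jman_{2n})$ is a compact subset of $\R^{D+1}$, and its convex hull $\mathcal C$ is compact. Every feasible ensemble yields a point $(m(M),\sum p_\alpha E(K_\alpha))\in\mathcal C$. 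Conversely, every point of $\mathcal C$ lying over $m(M)$ comes from a finite feasible ensemble. So $\EF^{\fG}(M)=\inf\{e:(m(M),e)\in\mathcal C\}$. The fiber of $\mathcal C$ over $m(M)$ is compact and nonempty, so this infimum is a minimum, attained at some point $(m(M),e^*)$.

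For the cardinality bound, one could apply Carathéodory in $\R^{D+1}$ directly, which gives $D+2$ points. To improve this to $D+1$, I would note that $(m(M),e^*)$ lies on the boundary of $\mathcal C$, since it minimizes the last coordinate over its fiber. A boundary point of a convex hull lies in a supporting hyperplane. The intersection of $\mathcal C$ with that hyperplane is the convex hull of the points of $\Phi(\Jman_{2n})$ lying in the hyperplane, which is a set of dimension at most $D$. Carathéodory there gives at most $D+1=2^{2n-1}$ points.

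The step needing care is exactly this boundary argument. An alternative is the Fenchel–Bunt refinement of Carathéodory, which applies because $\Phi(\Jman_{2n})$ is connected when restricted to each connected component of $\Jman_{2n}$. I would use the supporting-hyperplane argument, since it needs no connectedness. The only point to check is that the face identity holds: a convex combination lying in a supporting hyperplane must use only points of the hyperplane. This is standard, and it holds because every point of $\Phi(\Jman_{2n})$ lies on one side of the hyperplane with nonnegative weights.
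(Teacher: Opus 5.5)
Your argument is correct. The first half coincides with the paper's proof: you equate the two feasible sets through Lemma~\ref{lem:mixture-iff-pfaffian}, use compactness of the lifted image $\Phi(\Jman_{2n})$ (the paper's $\mathcal G$) and of its convex hull, and obtain attainment on the nonempty compact fiber over $m(M)$.

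The two proofs differ in the reduction from $D+2$ to $D+1$ branches. The paper starts from a Carath\'eodory representation with $D+2$ positive weights. Linear dependence of the vectors $(1,m(K_\alpha))\in\R^{D+1}$ gives a weight direction $a$ that preserves normalization and every Pfaffian constraint. The paper orients $a$ so that $\sum_\alpha a_\alpha E(K_\alpha)\leq0$, and optimality forces this first-order change to vanish. The weights can then be moved along $a$ until one reaches zero at unchanged cost. You instead place the optimum $(m(M),e^*)$ on the boundary of $\mathcal C$ and take a supporting hyperplane $H$. You then use the exposed-face identity $\mathcal C\cap H=\conv\bigl(\Phi(\Jman_{2n})\cap H\bigr)$ and apply Carath\'eodory in the $D$-dimensional affine space $H$.

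Both routes use optimality essentially: the paper to kill the derivative of the cost, you to locate the point on the boundary. The paper's version is purely linear-algebraic and gives an explicit pruning step that can be iterated. Yours is more geometric and relies on the supporting-hyperplane theorem. It handles degenerate situations, such as a vertical $H$ or a $\mathcal C$ that is not full-dimensional, without extra case analysis. Every optimal ensemble represents the same point $(m(M),e^*)$, so all its positive-weight branches lie in $H$. Your argument therefore also prunes any given optimal ensemble rather than only producing a new one. Your Fenchel--Bunt aside is valid too, since $\Jman_{2n}$ has two connected components and $2\leq D+1$, but it is not needed.
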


\begin{proof}
By Lemma~\ref{lem:mixture-iff-pfaffian}, the constraints
\eqref{eq:pfaffian-constraints}, together with normalization, are equivalent
to the density-operator constraint in Definition~\ref{def:fgeof}.  It remains
only to establish attainment and the finite ensemble bound.

Consider the compact set
\begin{equation}
  \mathcal G
  :=
  \set{
    \bigl(m(K),E(K)\bigr):
    K\in\Jman_{2n}
  }
  \subset\R^{D+1}.
\end{equation}
Compactness follows because \(\Jman_{2n}\) is compact and both \(m(K)\) and
\(E(K)\) are continuous.  Its convex hull
\(\mathcal C:=\conv(\mathcal G)\) is therefore compact.  Feasible Gaussian
ensembles correspond precisely to points of \(\mathcal C\) whose first
coordinate block is \(m(M)\).  This slice is nonempty because every Gaussian
state admits a decomposition into pure Gaussian occupation-number states; an
explicit decomposition is given in
Theorem~\ref{thm:canonical-decomposition}.  Compactness therefore guarantees
the existence of a point
\begin{equation}
  \bigl(m(M),\EF^{\fG}(M)\bigr)\in\mathcal C
\end{equation}
with minimal last coordinate.

Carathéodory's theorem first gives a representation of this point using at
most \(D+2\) elements of \(\mathcal G\).  Suppose this representation has \(D+2\)
positive weights.  The \(D+2\) vectors
\begin{equation}
  \bigl(1,m(K_\alpha)\bigr)\in\R^{D+1}
\end{equation}
are linearly dependent.  Hence there are real numbers \(a_\alpha\), not all
zero, such that
\begin{equation}
  \sum_\alpha a_\alpha=0,
  \qquad
  \sum_\alpha a_\alpha m(K_\alpha)=0.
\end{equation}
Changing the weights from \(p_\alpha\) to
\(p_\alpha+t a_\alpha\) therefore preserves normalization and every
Pfaffian constraint.  Choose the sign of \(a_\alpha\) so that
\begin{equation}
  \sum_\alpha a_\alpha E(K_\alpha)\leq0.
\end{equation}
For sufficiently small positive \(t\), all weights remain nonnegative.
Optimality forces the displayed derivative to vanish; otherwise the perturbed
ensemble would have strictly smaller average entanglement.  We may therefore
increase \(t\geq0\) until one weight becomes zero while preserving both
feasibility and optimality.
Removing the zero-weight branch leaves at most \(D+1\) states, proving
Eq.~\eqref{eq:caratheodory-bound}.
\end{proof}

For index sets with \(|I|=2\), the constraints
\eqref{eq:pfaffian-constraints} reduce to
\begin{equation}
  M=\sum_\alpha p_\alpha K_\alpha.
\end{equation}
These are only the covariance constraints.  The conditions with
\(|I|\geq4\) enforce equality of the higher even moments and are generally
independent of the covariance average.  Thus averaging pure covariance
matrices is necessary but not sufficient to produce a decomposition of the
corresponding Gaussian density operator.  Proposition~\ref{prop:covariance-not-enough}
gives a two-mode example of this distinction.
It is worth noting that the
number of Pfaffian moments grows exponentially with the number of modes.
Theorem~\ref{thm:pfaffian-roof} is thus a finite characterization of the
feasible ensembles, rather than an efficient evaluation algorithm.

\section{Why the bosonic covariance reduction fails}
\label{sec:additive-noise}

For a bosonic Gaussian state with covariance matrix \(V\), the condition
\(V_{\rm p}\leq V\) allows one to write \(V=V_{\rm p}+P\), with \(P\geq0\),
and to realize \(P\) by classical random phase-space displacements.  This
produces an ensemble of displaced copies of the pure Gaussian state with
covariance matrix \(V_{\rm p}\) and underlies the bosonic pure-covariance
reduction~\cite{WolfGiedkeKruegerWernerCirac2004}.  A direct fermionic
analogue fails for two independent reasons.  First, complete positivity
forbids a fermionic Gaussian channel that adds a fixed nonzero matrix to
every input covariance.  Second, even a formal convex average
of pure covariance matrices need not reproduce the corresponding Gaussian
density operator.

We begin with the channel obstruction.  A fermionic Gaussian channel acts
affinely on covariance matrices~\cite{Bravyi2005FLO,GreplovaGiedke2018}. Physical fermionic states have vanishing Majorana first moments,
so there is no parity-preserving analogue of the bosonic
displacement ensemble.

\begin{proposition}[No fermionic Gaussian additive-noise channel]
\label{prop:no-additive-channel}
Let \(\Phi\) be a fermionic Gaussian channel with covariance action
\begin{equation}
\label{eq:fgc-action}
  M\longmapsto XMX^\transpose+N,
\end{equation}
where \(X\) is real and \(N\) is real antisymmetric.  In the covariance
convention \eqref{eq:cm-definition}, complete positivity is equivalent to
\begin{equation}
\label{eq:fgc-cp}
  \id+\ii N-XX^\transpose\geq0.
\end{equation}
If the input and output Majorana spaces have the same dimension and
\(X=\id\), then \(N=0\).  Hence no fermionic Gaussian channel has the
nontrivial additive-noise action
\begin{equation}
  M\longmapsto M+N,
  \qquad N\neq0.
\end{equation}
\end{proposition}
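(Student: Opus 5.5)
The argument has two parts. First I prove the complete-positivity criterion \eqref{eq:fgc-cp}, or at least its necessity direction, which is all the no-go statement needs. Then I specialize to $X=\id$.

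For the criterion I would use the Choi-type construction standard for fermionic Gaussian maps \cite{Bravyi2005FLO}. Adjoin a reference system of the same number of modes and prepare the maximally entangled pure Gaussian state. Its covariance matrix is
\[
\begin{pmatrix}0&\id\\-\id&0\end{pmatrix},
\]
in suitably ordered input and reference Majoranas. Applying $\Phi\otimes\mathrm{id}$ acts on this covariance matrix by $X\oplus\id$ on the linear part and adds $N\oplus0$. The output must therefore be a valid covariance matrix $\Gamma$, meaning $\id+\ii\Gamma\geq0$. The resulting block matrix is
\[
\begin{pmatrix}N&X\\-X^\transpose&0\end{pmatrix}.
\]
Taking a Schur complement of $\id+\ii\Gamma$ with respect to the reference block $\id$ gives $\id+\ii N-XX^\transpose\geq0$, up to checking the sign convention for $\ii^2=-1$. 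This is the necessity direction. Sufficiency follows from the same Choi correspondence, because a Gaussian channel is determined by its Gaussian Choi state. The main obstacle is bookkeeping the signs and orderings in the reference pairing so that the Schur complement comes out as $XX^\transpose$ rather than something else. I would first fix the ordering of the pairs as $(\gamma_j,\gamma_j^{R})$ and verify the claim on a single mode.

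For the no-go statement, set $X=\id$ with equal dimensions. The condition becomes $\ii N\geq0$. Because $N$ is real antisymmetric, $\ii N$ is Hermitian with trace $\Tr(\ii N)=0$. A positive semidefinite matrix with zero trace is zero, so $\ii N=0$ and hence $N=0$.

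As a sanity check without invoking the full criterion, take an input pure state with $M^2=-\id$, which has singular values all equal to $1$. Its output $M+N$ must satisfy $\id+\ii(M+N)\geq0$. Averaging this over $M$ and $-M$, where both inputs are pure Gaussian, gives only $\id+\ii N\geq0$, which is weaker. This is why the completely positive (reference) argument is genuinely needed, and why I would rely on \eqref{eq:fgc-cp} rather than positivity alone.
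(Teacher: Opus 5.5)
Your decisive step matches the paper's proof. With $X=\id$, \eqref{eq:fgc-cp} reduces to $\ii N\geq0$, and a traceless positive semidefinite Hermitian matrix vanishes.

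Where you differ is that you also derive the necessity half of \eqref{eq:fgc-cp} from the Gaussian Choi state. The paper instead takes the criterion as known. Your derivation is sound, and the sign does come out right: the off-diagonal block of $\id+\ii\Gamma$ is $\ii X$, its adjoint is $-\ii X^\transpose$, and the Schur complement is $\id+\ii N-(\ii X)(-\ii X^\transpose)=\id+\ii N-XX^\transpose$. Your sufficiency sentence is only a sketch, since it needs the fact that a positive fermionic Choi operator defines a completely positive map. Nothing in the no-go uses it, though.

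Your last paragraph, however, draws a false conclusion: complete positivity is not needed. The paper's second argument uses only that $\Phi$ maps states to states. The maximally mixed input $M=0$ is sent to a state with covariance $N$, which is again an admissible input. Iteration therefore produces valid covariance matrices $kN$ for every integer $k$, and the bound $\norm{kN}_\infty\leq1$ forces $N=0$.

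Pure Gaussian inputs alone also suffice, provided you do not average. Write $N=O(\bigoplus_j\nu_jJ)O^\transpose$ and take the pure input $M=O(\bigoplus_j\sigma_jJ)O^\transpose$, with $\sigma_j$ the sign of $\nu_j$ (and $\sigma_j=1$ if $\nu_j=0$). The output $M+N$ then has a block $(1+|\nu_j|)J$, whose singular value exceeds one whenever $\nu_j\neq0$. Averaging over $M$ and $-M$ is exactly what discards this information.

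Your proof of the proposition is unaffected, but that closing remark should be removed or corrected.
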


\begin{proof}
Setting \(X=\id\) in Eq.~\eqref{eq:fgc-cp} gives
\begin{equation}
  \ii N\geq0.
\end{equation}
Because \(N\) is real and antisymmetric, \(\ii N\) is Hermitian and has trace
zero.  A positive semidefinite Hermitian matrix with zero trace must vanish.
Therefore \(N=0\).  Independently of the representation by \(X\),
a channel acting as \(M\mapsto M+N\) could be iterated on the
maximally mixed state to produce covariance \(kN\) for every integer
\(k\).  The bounded covariance singular values then also force \(N=0\).
\end{proof}

Proposition~\ref{prop:no-additive-channel} rules out the physical channel
that would play the role of bosonic random-displacement noise.  By itself,
however, it does not exclude the possibility that a suitable ensemble of
pure fermionic Gaussian states might reproduce the target state without
arising from such a channel.  The following two-mode example shows why
covariance equality alone does not guarantee such a decomposition.

\begin{proposition}[Covariance averaging is not state averaging]
\label{prop:covariance-not-enough}
Let
\begin{equation}
  J:=
  \begin{pmatrix}
    0&1\\
    -1&0
  \end{pmatrix},
  \qquad
  K_+:=J\oplus J,
  \qquad
  K_-:=-J\oplus(-J).
\end{equation}
Then \(K_+,K_-\in\Jman_4\) and
\begin{equation}
  \frac12K_++\frac12K_-=0.
\end{equation}
Nevertheless,
\begin{equation}
\label{eq:state-average-not-maxmixed}
  \frac12\rho(K_+)+\frac12\rho(K_-)
  \neq
  \rho(0)
  =
  \frac14\id.
\end{equation}
\end{proposition}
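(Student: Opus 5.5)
The membership claims are routine; the substance is the inequality of states, which I would prove by comparing a single four-point Majorana moment.

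For membership, each block $\pm J$ is real antisymmetric and satisfies $(\pm J)^2=-\id_2$. Hence $K_\pm^2=-\id_4$, and $K_\pm\in\Jman_4$. The covariance average $\tfrac12K_++\tfrac12K_-=0$ is immediate, since $K_-=-K_+$.

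For the inequality \eqref{eq:state-average-not-maxmixed}, I would use Lemma~\ref{lem:mixture-iff-pfaffian}, or directly the Pfaffian expansion \eqref{eq:gaussian-pfaffian-expansion}. Two Gaussian mixtures coincide only if every even Pfaffian moment agrees, so it suffices to find one index set with mismatched moments. Take $I=\{1,2,3,4\}$. For the block-diagonal matrix $\nu_1J\oplus\nu_2J$ the Pfaffian is $\nu_1\nu_2$. This gives
\begin{equation*}
\Pf(K_+)=1,\qquad \Pf(K_-)=(-1)(-1)=1.
\end{equation*}
The averaged four-point moment is therefore $1$. By contrast, $\Pf(0)=0$ for the maximally mixed target. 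The coefficient of $\Gamma_{\{1,2,3,4\}}$ differs between the two sides, so the operators differ. In words, the moment is $\Tr\rho\,\ii^2\gamma_1\gamma_2\gamma_3\gamma_4$, which is $\pm$ the total parity.

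As a concrete cross-check, $\rho(K_\pm)=\tfrac14(\id\pm Q_1)(\id\pm Q_2)$ with $Q_j=\ii\gamma_{2j-1}\gamma_{2j}$. Averaging these two states gives
\begin{equation*}
\tfrac14\bigl(\id+Q_1Q_2\bigr),
\end{equation*}
which is a rank-two state (equal mixture of the two even occupation states $\ket{00}$ and $\ket{11}$) and visibly not $\id/4$.

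There is no real obstacle here. The only point needing care is the sign convention for the Pfaffian and the identity $\ii^2\gamma_1\gamma_2\gamma_3\gamma_4=Q_1Q_2$. The explicit product form $\tfrac14(\id+Q_1Q_2)$ sidesteps the Pfaffian sign convention entirely, so I would present that computation as the main argument.
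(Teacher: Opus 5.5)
Your proof is correct and follows the paper's argument: you check $K_\pm^2=-\id$ and then separate the two states by the four-point moment $\Gamma_{\{1,2,3,4\}}$ (the total parity), for which Wick's rule gives $\Pf(K_+)=\Pf(K_-)=1$ but $\Pf(0)=0$. Your product-form computation $\tfrac14(\id+Q_1Q_2)=\tfrac12(\ketbra{00}+\ketbra{11})$ is a more explicit rendering of the same observation, and it is also correct.
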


\begin{proof}
The matrices \(K_\pm\) satisfy \(K_\pm^2=-\id\), so they are pure Gaussian
covariance matrices.  Their average vanishes by construction.

To distinguish the density operators, consider the fourth-order Majorana
monomial
\begin{equation}
  \Gamma_{1,2,3,4}:=\ii^2\gamma_1\gamma_2\gamma_3\gamma_4.
\end{equation}
For two modes this operator is the total parity.  Wick's rule gives
\begin{equation}
  \Tr\!\left[\rho(K_\pm)\Gamma_{1,2,3,4}\right]
  =
  \Pf(K_\pm).
\end{equation}
Using multiplicativity of the Pfaffian under direct sums,
\begin{equation}
  \Pf(K_+)=\Pf(J)^2=1,
  \qquad
  \Pf(K_-)=\Pf(-J)^2=1.
\end{equation}
Consequently,
\begin{equation}
  \Tr\!\left[
    \left(
      \frac12\rho(K_+)+\frac12\rho(K_-)
    \right)\Gamma_{1,2,3,4}
  \right]
  =1.
\end{equation}
On the other hand,
\begin{equation}
  \Tr\!\left[\rho(0)\Gamma_{1,2,3,4}\right]
  =
  \Pf(0)
  =
  0.
\end{equation}
The two states therefore differ in their fourth-order correlation, despite
having the same covariance matrix.
\end{proof}

The two propositions identify distinct failures of the bosonic argument.
Proposition~\ref{prop:no-additive-channel} shows that the required
additive-noise transformation is not a physical fermionic Gaussian channel,
whereas Proposition~\ref{prop:covariance-not-enough} shows that covariance
averaging would remain insufficient even if considered only as a formal
ensemble construction.  The missing information is precisely the collection
of higher Pfaffian moments appearing in
Theorem~\ref{thm:pfaffian-roof}.

\subsection{Optimal decompositions need not be flat}
\label{sec:flatness-fails}

The bosonic pure-covariance construction has another important consequence:
all pure states in the displacement ensemble have the same entanglement.  We
call such an ensemble \emph{entanglement-flat}.  The fermionic convex roof
does not share this property, even for one mode held by each party.

\begin{theorem}[A non-flat optimal decomposition]
\label{thm:flatness-counterexample}
Consider two fermionic modes, one held by Alice and one by Bob.  For
\(0<q<1\), let
\begin{equation}
\label{eq:rho-q}
  \rho_q
  :=
  (1-q)\ketbra{0,0}
  +
  q\ketbra{\Psi_+},
  \qquad
  \ket{\Psi_+}
  :=
  \frac{\ket{1,0}+\ket{0,1}}{\sqrt2}.
\end{equation}
Then \(\rho_q\) is fermionic Gaussian and
\begin{equation}
\label{eq:rho-q-fgeof}
  \EF^{\mathrm{phys}}(\rho_q)
  =
  \EF^{\fG}(\rho_q)
  =
  q\log 2.
\end{equation}
Moreover, every physical pure-state decomposition of \(\rho_q\), and hence
every pure-Gaussian decomposition, consists---up to phases and splitting
identical branches---of \(\ket{0,0}\) and \(\ket{\Psi_+}\) with total weights
\(1-q\) and \(q\), respectively.  Consequently, no optimal Gaussian
decomposition of \(\rho_q\) is entanglement-flat.
\end{theorem}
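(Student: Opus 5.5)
The plan has three steps: show that $\rho_q$ is Gaussian, classify its physical decompositions using parity, and then read off the value and the non-flatness.

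\emph{Gaussianity.} We compute the covariance matrix of $\rho_q$ and check Wick's rule. This is quickest via Proposition~\ref{prop:two-mode-cm-dictionary}, or directly via the Majorana expansion. In the block form \eqref{eq:two-mode-block-state}, the entries are $a=1-q$, $d=0$, $z=0$, $b=c=q/2$, $w=q/2$. The state is therefore parity preserving, and its only nonvanishing two-point data come from the odd-sector coherence $w$ and the occupations.

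It is probably simpler to exhibit $\rho_q$ as a Gaussian state explicitly. The vector $\ket{\Psi_+}=\tilde c^\dagger\ket{0,0}$ with $\tilde c=(c_A+c_B)/\sqrt2$ is a single-particle excitation of a normal mode $\tilde c$. The orthogonal mode $\tilde c'=(c_A-c_B)/\sqrt2$ is empty in both $\ket{0,0}$ and $\ket{\Psi_+}$. Under the passive Gaussian unitary $(c_A,c_B)\mapsto(\tilde c,\tilde c')$, the state becomes
\[
\bigl[(1-q)\ketbra{0}+q\ketbra{1}\bigr]_{\tilde c}\otimes\ketbra{0}_{\tilde c'},
\]
a product of one-mode diagonal states. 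This product is Gaussian, with canonical parameters $2q-1$ and $\pm1$, so $\rho_q$ is Gaussian.

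\emph{Decompositions.} By Lemma~\ref{lem:support}, every branch of any decomposition lies in $\Span\{\ket{0,0},\ket{\Psi_+}\}$. Physical branches have definite total parity. Since $\ket{0,0}$ is even and $\ket{\Psi_+}$ is odd, a vector in this span has definite parity only if it is proportional to one of them; this is the same argument used for $\tau$ in Sec.~\ref{sec:nonadditivity}. Hence every physical decomposition consists, up to phases and splitting of identical branches, of $\ket{0,0}$ and $\ket{\Psi_+}$. Projecting the decomposition onto the two parity sectors (equivalently, comparing diagonal weights) fixes their total weights at $1-q$ and $q$.

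\emph{Value and non-flatness.} The branch $\ket{0,0}$ has entanglement $0$ and $\ket{\Psi_+}$ has entanglement $\log2$, so every physical decomposition has average exactly $q\log2$, giving $\EF^{\mathrm{phys}}(\rho_q)=q\log2$. Both branches are Gaussian (one is the vacuum, the other a Slater determinant, as in Lemma~\ref{lem:two-mode-pure-gaussian}), so this decomposition is admissible for the Gaussian roof. Combined with \eqref{eq:physical-vs-gaussian-inequality}, this yields $\EF^{\fG}(\rho_q)=q\log2$. Every pure-Gaussian decomposition is in particular physical by Lemma~\ref{lem:pure-gaussian-parity}, so it has the same form. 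Because $0<q<1$, both branches occur with positive weight and have the distinct entanglements $0\neq\log2$, so no optimal decomposition is flat.

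The only mildly delicate step is the Gaussianity check; the passive mode rotation above avoids Pfaffian bookkeeping. Everything else is the parity-plus-support argument already used for $\tau$.
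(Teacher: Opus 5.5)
Your proposal is correct and follows the paper's proof essentially step by step. It uses the same passive rotation to the delocalized modes $(c_A\pm c_B)/\sqrt2$, under which $\rho_q$ becomes a product of one-mode diagonal states and is therefore Gaussian. It then uses the same support-plus-parity argument to force every physical branch to be $\ket{0,0}$ or $\ket{\Psi_+}$, with total weights $1-q$ and $q$, and reads off the value $q\log 2$ and the non-flatness in the same way.
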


\begin{proof}
Introduce the delocalized modes
\begin{equation}
  f_+
  :=
  \frac{c_A+c_B}{\sqrt2},
  \qquad
  f_-
  :=
  \frac{c_A-c_B}{\sqrt2}.
\end{equation}
Since \(\ket{\Psi_+}=f_+^\dagger\ket{0,0}\), the state factorizes in this
normal-mode basis as
\begin{equation}
\label{eq:rho-q-delocalized}
  \rho_q
  =
  \left[
    (1-q)\ketbra{0_+}
    +
    q\ketbra{1_+}
  \right]
  \otimes
  \ketbra{0_-}.
\end{equation}
Every parity-preserving one-mode state is Gaussian, products of Gaussian
states are Gaussian, and the transformation
\((c_A,c_B)\mapsto(f_+,f_-)\) is a Gaussian mode rotation.  Hence
\(\rho_q\) is fermionic Gaussian.

Now consider an arbitrary physical pure-state decomposition of \(\rho_q\).
Every branch has definite total parity.  The even block of \(\rho_q\) has
support \(\Span\{\ket{0,0}\}\), while its odd block has support
\(\Span\{\ket{\Psi_+}\}\).  Lemma~\ref{lem:support} therefore forces every
even branch to equal \(\ket{0,0}\) up to phase and every odd branch to equal
\(\ket{\Psi_+}\) up to phase.  The traces of the two parity blocks fix their
total weights to \(1-q\) and \(q\).
Both branches are pure Gaussian, and their entanglements are
\begin{equation}
  E(\ket{0,0})=0,
  \qquad
  E(\ket{\Psi_+})=\log 2.
\end{equation}
Thus every physical or Gaussian decomposition has average entanglement
\(q\log2\), proving Eq.~\eqref{eq:rho-q-fgeof}.  Because \(0<q<1\), both
branches occur with positive weight and have unequal entanglement.
\end{proof}

The rank-two formula in Theorem~\ref{thm:single-defect-formula}
extends this support argument to arbitrary numbers of modes.

\section{Exact formation for rank-two Gaussian states}
\label{sec:single-defect-formula}

A rank-two Gaussian state has one mixed normal mode and a pure
occupation state on the remaining normal modes.  Its two spectral
branches have opposite parity, forcing every physical decomposition to
use these same two pure states.  This gives the following formula.

\begin{definition}[Rank-two Gaussian state]
\label{def:single-defect}
A fermionic Gaussian state \(\rho(M)\) has density-operator rank two if and only if
exactly one block in its canonical form has modulus strictly smaller than
one.  Equivalently, after relabelling the normal modes,
\begin{equation}
\label{eq:single-defect-canonical-form}
  M
  =
  O\left(
    \mu J
    \oplus
    \bigoplus_{j=2}^n\sigma_jJ
  \right)O^\transpose,
  \qquad
  -1<\mu<1,
  \qquad
  \sigma_j\in\{\pm1\},
\end{equation}
for some \(O\in SO(2n)\).  The first block is the unique mixed normal mode,
while all remaining normal modes are pure.
\end{definition}

\begin{theorem}[Exact rank-two formula]
\label{thm:single-defect-formula}
Let \(\rho(M)\) be a rank-two Gaussian state with canonical form
\eqref{eq:single-defect-canonical-form}.  Define its two pure Gaussian
completions by
\begin{equation}
\label{eq:single-defect-Kpm}
  K_+
  :=
  O\left(
    J
    \oplus
    \bigoplus_{j=2}^n\sigma_jJ
  \right)O^\transpose,
  \qquad
  K_-
  :=
  O\left(
    -J
    \oplus
    \bigoplus_{j=2}^n\sigma_jJ
  \right)O^\transpose,
\end{equation}
and let
\begin{equation}
\label{eq:single-defect-pm}
  p_+
  :=
  \frac{1+\mu}{2},
  \qquad
  p_-
  :=
  \frac{1-\mu}{2}.
\end{equation}
Then
\begin{equation}
\label{eq:single-defect-state-decomp}
  \rho(M)
  =
  p_+\rho(K_+)+p_-\rho(K_-).
\end{equation}
Moreover, every physical pure-state decomposition of \(\rho(M)\), and hence
every pure-Gaussian decomposition, consists---up to phases and splitting
identical branches---of these same two states with total weights \(p_+\) and
\(p_-\).  Consequently,
\begin{equation}
\label{eq:single-defect-eof}
  \EF^{\mathrm{phys}}(\rho(M))
  =
  \EF^{\fG}(\rho(M))
  =
  p_+E(K_+)+p_-E(K_-).
\end{equation}
Here \(E(K_\pm)\) is evaluated with respect to the original Alice--Bob mode
partition.
\end{theorem}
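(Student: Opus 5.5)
The plan follows the template of Theorem~\ref{thm:flatness-counterexample}. First we establish the decomposition \eqref{eq:single-defect-state-decomp} from the canonical spectral decomposition, then we use parity to force every physical decomposition onto the same two branches.

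For the decomposition we apply Theorem~\ref{thm:canonical-decomposition} to the canonical form \eqref{eq:single-defect-canonical-form}. Since $\nu_j=\sigma_j$ has modulus one for $j\geq2$, only the sign choices with $\sigma'_j=\sigma_j$ for $j\ge2$ carry nonzero weight. The remaining sign on the first block gives weights $(1\pm\mu)/2$. The corresponding covariance matrices are exactly $K_\pm$, which yields \eqref{eq:single-defect-state-decomp}. The same theorem shows that $\rank\rho(M)=2$ and that $\rho(K_+)$ and $\rho(K_-)$ are orthogonal pure Gaussian states, the joint eigenstates of the normal-mode bilinears $Q_j$.

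The key step is that these two branches have opposite total parity. By the proof of Lemma~\ref{lem:pure-gaussian-parity}, $\prod_jQ_j=(-1)^nP$ in the rotated Majorana basis. This identity persists after conjugation by the Gaussian unitary implementing $O\in SO(2n)$, because $P$ is invariant under $SO(2n)$: the product of all Majoranas transforms by $\det O=1$. The two branches differ only in the eigenvalue of $Q_1$, which is $+1$ versus $-1$, so their $P$-eigenvalues differ by a sign. The support of $\rho(M)$ is therefore $\Span\{\ket{\psi_+},\ket{\psi_-}\}$ with $P\ket{\psi_\pm}=\pm'\ket{\psi_\pm}$ of opposite signs.

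Now take any physical decomposition $\rho(M)=\sum_\ell p_\ell\ketbra{\phi_\ell}$. By Lemma~\ref{lem:support}, each $\ket{\phi_\ell}$ lies in this two-dimensional span. A vector $a\ket{\psi_+}+b\ket{\psi_-}$ has definite parity only if $ab=0$, so each branch equals $\ket{\psi_+}$ or $\ket{\psi_-}$ up to phase. Taking expectations of the projectors onto $\ket{\psi_\pm}$, or equivalently of $P$, fixes the total weights at $p_\pm$. Every physical decomposition, and hence every Gaussian one, then has average entanglement $p_+E(K_+)+p_-E(K_-)$. The Gaussian decomposition above is admissible for both roofs, which gives \eqref{eq:single-defect-eof}.

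The main care point is the parity claim. Specifically, we must check that conjugation by the $SO(2n)$ Gaussian unitary preserves $P$ rather than flipping it. This is why Lemma~\ref{lem:canonical-form} insists on $\det O=+1$. We must also fix the sign bookkeeping relating $\prod_j Q_j$ to $P$, although only the relative sign between the two branches matters. The entanglements $E(K_\pm)$ are then read off via \eqref{eq:pure-entropy} using the principal Alice blocks of $K_\pm$ in the original partition.
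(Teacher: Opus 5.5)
Your proposal is correct and follows essentially the same route as the paper. Both arguments take the $k=1$ case of the canonical spectral decomposition, use that the two completions have opposite total parity (which the $SO(2n)$ Gaussian unitary preserves), and apply the support lemma to force every parity-definite branch onto one of the two lines. Your explicit checks that $\prod_jQ_j=(-1)^nP$ and that $P$ acquires the factor $\det O=1$ under the change of Majorana basis spell out a step the paper states in one line.
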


\begin{proof}
Equation~\eqref{eq:single-defect-state-decomp} is the \(k=1\) case of the
canonical spectral decomposition in
Theorem~\ref{thm:canonical-decomposition}.  Let
\(\ket{\Omega_+}\) and \(\ket{\Omega_-}\) denote the pure Gaussian vectors
with covariance matrices \(K_+\) and \(K_-\), respectively.

In the canonical normal-mode basis, these vectors differ only in the
occupation of the mixed normal mode.  They are therefore orthogonal and have
opposite total parity.  The Gaussian unitary implementing
\(O\in SO(2n)\) preserves total parity, so the same statements hold in the
original mode basis.  Hence
\begin{equation}
  \supp\rho(M)
  =
  \Span\{\ket{\Omega_+},\ket{\Omega_-}\},
\end{equation}
with one support vector in each parity sector.

Consider an arbitrary physical pure-state decomposition
\begin{equation}
  \rho(M)
  =
  \sum_\alpha q_\alpha\ketbra{\psi_\alpha},
  \qquad
  q_\alpha>0.
\end{equation}
By Lemma~\ref{lem:support}, every \(\ket{\psi_\alpha}\) lies in
\(\supp\rho(M)\).  Each branch also has definite total parity by the
definition of the physical convex roof.  Because the support contains only
one line of each parity, every branch must equal either
\(\ket{\Omega_+}\) or \(\ket{\Omega_-}\), up to phase.  Projecting the
decomposition onto these two orthogonal lines fixes their total weights to
\(p_+\) and \(p_-\).

Thus every physical decomposition has average entanglement
\begin{equation}
  p_+E(K_+)+p_-E(K_-).
\end{equation}
Both branches are Gaussian, so this decomposition is feasible for both
\(\EF^{\mathrm{phys}}\) and \(\EF^{\fG}\), proving
Eq.~\eqref{eq:single-defect-eof}.
\end{proof}

Although Theorem~\ref{thm:single-defect-formula} is written using a canonical
normal-mode basis, its content is basis-independent: the covariance matrix
\(M\) determines a unique two-dimensional mixed normal-mode plane, and
\(K_\pm\) are its two pure Gaussian completions.  Thus evaluating the formula
requires only the standard orthogonal normal-form decomposition of \(M\).

\begin{corollary}[Non-flatness for rank-two Gaussian states]
\label{cor:single-defect-nonflat}
Let \(\rho(M)\) have rank two.  If its two pure completions satisfy
\begin{equation}
  E(K_+)\neq E(K_-),
\end{equation}
then no optimal physical or Gaussian decomposition of \(\rho(M)\) is
entanglement-flat.
\end{corollary}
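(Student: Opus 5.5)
The plan is to obtain the corollary directly from Theorem~\ref{thm:single-defect-formula}, which already says that every admissible decomposition is essentially unique; no further optimization is needed. I would argue as follows.

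First, take an arbitrary optimal decomposition $\rho(M)=\sum_\alpha q_\alpha\ketbra{\psi_\alpha}$ with $q_\alpha>0$. The optimum can be taken over either the physical roof or the Gaussian roof; by Eq.~\eqref{eq:single-defect-eof} the two values coincide, and every Gaussian ensemble is physical by Lemma~\ref{lem:pure-gaussian-parity}. The uniqueness part of Theorem~\ref{thm:single-defect-formula} then gives the structure of the branches. Each $\ket{\psi_\alpha}$ equals $\ket{\Omega_+}$ or $\ket{\Omega_-}$ up to a phase. The branches equal to $\ket{\Omega_\pm}$ carry total weight $p_\pm$. Phases do not change $\Tr_B\ketbra{\psi_\alpha}$, so each branch has entanglement $E(K_+)$ or $E(K_-)$.

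Second, both kinds of branch actually occur. Definition~\ref{def:single-defect} gives $-1<\mu<1$, and hence $p_\pm=(1\pm\mu)/2>0$. So the decomposition contains at least one branch with entanglement $E(K_+)$ and at least one with $E(K_-)$. Under the hypothesis $E(K_+)\neq E(K_-)$, these branches have unequal entanglement. The ensemble is therefore not entanglement-flat. Since the optimal decomposition was arbitrary, this proves the claim for both roofs.

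I do not expect a genuine obstacle. The only delicate point is that ``splitting identical branches'' and phase freedom do not produce extra entanglement values; both are immediate because the marginal depends only on the ray. It is also worth noting that positivity of both weights is exactly where $|\mu|<1$ (rank two) is used.
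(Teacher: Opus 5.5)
Your proposal is correct and follows the paper's own argument: every admissible decomposition consists only of the two completions (from Theorem~\ref{thm:single-defect-formula}), and both occur with positive weight because $-1<\mu<1$. With $E(K_+)\neq E(K_-)$, no such decomposition can be entanglement-flat.
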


\begin{proof}
Both branches occur with positive weight because \(-1<\mu<1\), and
Theorem~\ref{thm:single-defect-formula} shows that every admissible
decomposition contains only these two branches.  If their entanglements are
unequal, such a decomposition cannot be flat.
\end{proof}

The two-mode family \(\rho_q\) from Eq.~\eqref{eq:rho-q} is the simplest
instance of the rank-two formula.

\section{When the Gaussian restriction is automatic}
\label{sec:physical-vs-gaussian}

The physical convex roof permits all pure states of definite parity,
whereas the Gaussian convex roof restricts the branches to pure
Gaussian states.  For Gaussian targets of sufficiently low rank,
however, the support of the state makes this restriction automatic.

\begin{lemma}[Pure states on at most three modes]
\label{lem:pure-spinor}
Every pure state of definite parity on \(k\le3\) fermionic modes is
Gaussian~\cite[Lemma~3 and Sec.~III.E]{Bravyi2005Capacity}; see also
Ref.~\cite[Proposition~1]{DeMeloCwiklinskiTerhal2013}.
\end{lemma}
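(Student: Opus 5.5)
The plan is to reduce by parity and then use the standard structure theory of small spinors. Any physical pure state lies in one parity sector, and the Gaussian unitary implementing a reflection with determinant $-1$ is not available. However, on $k$ modes we can adjoin no extra mode. Instead, note that the map $\ket{\psi}\mapsto c_k^\dagger\ket\psi+c_k\ket\psi$, conjugation by $\gamma_{2k-1}$, swaps the parity sectors. It maps Gaussian states to Gaussian states, since it acts on covariance matrices by an orthogonal reflection, and Gaussianity is preserved under $O(2k)$, not only $SO(2k)$. Hence it suffices to treat even-parity vectors.

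For $k\le1$ the even sector is spanned by $\ket0$ alone, which is Gaussian. For $k=2$ this was done in Lemma~\ref{lem:two-mode-pure-gaussian}. For $k=3$, write an even vector as
\[
\alpha\ket{000}+\sum_{i<j}\beta_{ij}c_i^\dagger c_j^\dagger\ket{000}.
\]
If $\alpha\neq0$, normalize to $\alpha=1$. Because three modes admit no fourth-order term, the vector equals $\exp\bigl(\sum_{i<j}\beta_{ij}c_i^\dagger c_j^\dagger\bigr)\ket{000}$ exactly. The quadratic term squares to a sum of products of four distinct creation operators, which vanish on three modes. The result is a BCS-type Gaussian state, by the Thouless form for pure Gaussian states with nonzero vacuum overlap. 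If $\alpha=0$, the vector is $\sum\beta_{ij}c_i^\dagger c_j^\dagger\ket0$. An antisymmetric $3\times3$ matrix $\beta$ has rank $2$, so a unitary mode rotation $c\mapsto Uc$, which is a passive Gaussian unitary, brings $\beta$ to the canonical form $b\,(c_1'^\dagger c_2'^\dagger)$. That gives a two-particle Slater determinant, which is Gaussian.

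Alternatively, and more uniformly, I would verify the covariance criterion. A pure state is Gaussian if and only if its covariance matrix satisfies $K^2=-\id$, by Lemma~\ref{lem:pure-gaussian-parity}. Equivalently, by a dimension count, the even pure Gaussian states on $3$ modes form the orbit $SO(6)/U(3)$ of real dimension $6$, which matches the projective even sector $\mathbb{CP}^3$ of real dimension $6$. Since the orbit is closed, it fills $\mathbb{CP}^3$. I would present the explicit Thouless/Slater argument above and cite the dimension count only as a remark.

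The main obstacle is the case $\alpha=0$ and the precise justification that $\exp$ of a pairing operator applied to the vacuum is Gaussian. For that I would invoke the standard fact, as in Ref.~\cite{Bravyi2005Capacity}, that $\exp(\text{quadratic})\ket0$ is Gaussian after normalization. The case $\alpha=0$ can alternatively be reduced to $\alpha\neq0$ by applying a particle–hole transformation on one mode (a Gaussian reflection) composed with the parity swap above.
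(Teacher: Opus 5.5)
Your proposal is correct and follows the paper's proof essentially step for step. Both reduce the odd sector to the even one by multiplying with a Majorana operator, and for three modes both split on the vacuum amplitude: the BCS exponential handles $\alpha\neq0$, since the pair-creation operator squares to zero, and a single Slater determinant handles $\alpha=0$. Your normal-form argument for that last case, that a nonzero $3\times3$ antisymmetric $\beta$ has rank two, supplies a step the paper states without proof; the dimension-count and particle--hole remarks are not needed.
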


\begin{proof}
Bravyi's statement covers even-parity vectors; the odd-parity case follows
from a Majorana conjugation, as explained below.  We give a direct proof.
The claim is immediate for one mode and follows from
Lemma~\ref{lem:two-mode-pure-gaussian} for two modes.  For three modes,
an even-parity state has the form
\begin{equation}
  \ket{\psi}
  =
  \alpha\ket{000}
  +\sum_{1\le i<j\le3}z_{ij}c_i^\dagger c_j^\dagger\ket{000}.
\end{equation}
If \(\alpha\ne0\), then
\begin{equation}
  \ket{\psi}
  =
  \alpha
  \exp\!\left(
    \alpha^{-1}
    \sum_{i<j}z_{ij}c_i^\dagger c_j^\dagger
  \right)\ket{000},
\end{equation}
because the square of the pair-creation operator vanishes on three
modes.  This is a BCS Gaussian state.  If \(\alpha=0\), the state is a
two-particle state on three modes.  Every two-particle state on a
three-dimensional one-particle space is a single Slater determinant,
and is therefore Gaussian.

Finally, multiplication by a Majorana operator exchanges the two
parity sectors and preserves Gaussianity: conjugation by a Majorana
sends each Majorana to itself or its negative and preserves Wick's rule.
The odd-parity case therefore follows from the even-parity case.
\end{proof}

\begin{theorem}[Low-rank equivalence of physical and Gaussian roofs]
\label{thm:low-rank-equality}
Let \(\rho(M)\) be an \(n\)-mode fermionic Gaussian state, and write its
canonical form as
\begin{equation}
  M
  =
  O\left(\bigoplus_{j=1}^{n}\nu_jJ\right)O^\transpose.
\end{equation}
Let
\begin{equation}
  k:=\#\{j:|\nu_j|<1\}
\end{equation}
be the number of mixed normal modes, so that
\(\rank\rho(M)=2^k\).  If \(k\le3\), every physical pure-state
decomposition of \(\rho(M)\) consists entirely of Gaussian branches.
Consequently,
\begin{equation}\label{eq:low-rank-equality}
  \EF^{\mathrm{phys}}(\rho(M))
  =
  \EF^{\fG}(\rho(M)).
\end{equation}
\end{theorem}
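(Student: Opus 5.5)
The plan is to reduce every branch to a pure state on the $k$ mixed normal modes, tensored with a fixed occupation state on the pure normal modes, and then invoke Lemma~\ref{lem:pure-spinor}.

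First, let $U_O$ be the parity-preserving Gaussian unitary implementing $O$. Conjugating by $U_O$ brings $\rho(M)$ to the canonical product $\prod_j\tfrac12(\id+\nu_jQ_j)$ in the normal-mode frame of Theorem~\ref{thm:canonical-decomposition}. By that theorem, the support of this product is
\begin{equation*}
\mathcal H_{\mathrm{mix}}\otimes\ket{\sigma}_{\mathrm{pure}},
\end{equation*}
where $\mathcal H_{\mathrm{mix}}$ is the full $2^k$-dimensional Fock space of the $k$ mixed normal modes and $\ket\sigma$ is the fixed occupation-number state of the $n-k$ pure normal modes with $\sigma_j=\operatorname{sign}\nu_j$. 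Here the product is the fermionic one, taken in a Jordan--Wigner ordering with the mixed modes first. Equivalently, the support is the joint eigenspace $Q_j=\sigma_j$ for the pure normal modes.

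Next, take an arbitrary physical decomposition $\rho(M)=\sum_\alpha p_\alpha\ketbra{\psi_\alpha}$. By Lemma~\ref{lem:support}, each $U_O\ket{\psi_\alpha}$ lies in this support, so it has the form $\ket{\phi_\alpha}\otimes\ket\sigma$ with $\ket{\phi_\alpha}\in\mathcal H_{\mathrm{mix}}$. The total parity factorizes as $P=P_{\mathrm{mix}}P_{\mathrm{pure}}$ over the normal modes, and $\ket\sigma$ is a $P_{\mathrm{pure}}$ eigenvector. Definite total parity of $\ket{\psi_\alpha}$, which $U_O$ preserves, therefore forces $\ket{\phi_\alpha}$ to have definite parity on the $k$ mixed modes. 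Since $k\le3$, Lemma~\ref{lem:pure-spinor} shows that $\ket{\phi_\alpha}$ is Gaussian.

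The product of two even-or-definite-parity pure Gaussian states on disjoint mode sets is pure Gaussian. One justification is that the covariance matrices are direct sums, the full covariance squares to $-\id$, and Lemma~\ref{lem:pure-gaussian-parity} applies. Alternatively, the Pfaffian factorizes, because odd moments of the definite-parity factors vanish. Hence $\ket{\phi_\alpha}\otimes\ket\sigma$ is Gaussian, and so is $\ket{\psi_\alpha}=U_O^\dagger(\ket{\phi_\alpha}\otimes\ket\sigma)$. Every physical decomposition is therefore admissible for the Gaussian roof, which gives $\EF^{\fG}\le\EF^{\mathrm{phys}}$. Combined with Eq.~\eqref{eq:physical-vs-gaussian-inequality}, this yields equality.

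The main care point is the product step when $\ket{\phi_\alpha}$ is odd. The occupation-basis Kronecker product and the fermionic product can differ by regrouping signs. I would avoid this by working intrinsically: define the branch as a state on the CAR algebra, note that moments factor across the disjoint Majorana sets (with odd ones vanishing), and check Wick's rule via the direct-sum Pfaffian identity.
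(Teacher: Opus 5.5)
Your proposal is correct and follows the paper's proof essentially step for step. You reduce to the canonical normal-mode frame, use Lemma~\ref{lem:support} to write each branch as $\ket{\varphi}\otimes\ket{\sigma}$, use definite total parity to force definite parity of $\ket{\varphi}$, and apply Lemma~\ref{lem:pure-spinor}; your direct-sum covariance argument with Lemma~\ref{lem:pure-gaussian-parity} (cross terms vanish because $\ket{\sigma}$ has definite parity) is a valid justification of the product step, which the paper only asserts.
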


\begin{proof}
To prove automatic Gaussianity of the ensemble branches, conjugate by the
Gaussian unitary implementing \(O^\transpose\).  This preserves parity
and Gaussianity, so it suffices to consider the
canonical state
\begin{equation}
  \rho(D)
  =
  \rho_{\mathrm{mix}}\otimes\ketbra{\chi},
  \qquad
  \rho_{\mathrm{mix}}
  =
  \bigotimes_{j=1}^{k}
  \frac12(\id+\nu_jQ_j),
\end{equation}
where \(\ket{\chi}\) is the pure occupation state of the remaining
normal modes.  Since \(|\nu_j|<1\) for \(j\le k\),
\(\rho_{\mathrm{mix}}\) is full rank on the Fock space
\(\mathcal H_{\mathrm{mix}}\) of the \(k\) mixed modes.  Hence
\begin{equation}
  \supp\rho(D)
  =
  \mathcal H_{\mathrm{mix}}\otimes\C\ket{\chi}.
\end{equation}

By Lemma~\ref{lem:support}, every branch in a pure-state decomposition
of \(\rho(D)\) must therefore have the form
\begin{equation}
  \ket{\varphi}\otimes\ket{\chi}.
\end{equation}
Because \(\ket{\chi}\) has definite parity, definiteness of the total
parity forces \(\ket{\varphi}\) to have definite parity on the \(k\)
mixed modes.  For \(k\le3\), Lemma~\ref{lem:pure-spinor} implies that
\(\ket{\varphi}\) is Gaussian.  Its product with \(\ket{\chi}\) is
Gaussian, and conjugating back by the Gaussian unitary shows that the
original branch is Gaussian as well.

Thus the physical and Gaussian convex roofs have exactly the same
feasible ensembles, which proves \eqref{eq:low-rank-equality}.
\end{proof}

\begin{remark}[Sharpness of automatic Gaussianity]
The condition \(k\le3\) is optimal for the statement that every physical
ensemble branch is Gaussian.  The four-mode state vector
\begin{equation}\label{eq:ghz-four}
  \ket{\psi_{\mathrm{GHZ}}}
  :=
  \frac{\ket{0,0,0,0}+\ket{1,1,1,1}}{\sqrt2}
\end{equation}
has even parity, but all of its off-diagonal two-point Majorana correlations
vanish.  Its covariance matrix is therefore zero, and the pure-state
condition \(M^2=-\id\) shows that it is not Gaussian.

If a Gaussian state has \(k\ge4\) mixed normal modes, then in the
canonical basis its support contains
\begin{equation}
  \ket{\psi_{\mathrm{GHZ}}}
  \otimes
  \ket{0}^{\otimes(k-4)}
  \otimes
  \ket{\chi}.
\end{equation}
Because the Gaussian state is strictly positive on its support, this
vector can be assigned a sufficiently small positive weight in a
physical pure-state decomposition, with the positive remainder
decomposed into parity-definite eigenvectors.  Thus non-Gaussian
physical branches become available when \(k\ge4\).  This makes the
physical feasible set strictly larger, although it does not by itself
imply that the value of the physical convex roof is strictly smaller.
\end{remark}

\section{Comparison with fermionic logarithmic negativity}
\label{sec:negativity-comparison}

Shapourian and Ryu~\cite[Eq.~(88)]{ShapourianRyu2019} evaluated the
fermionic logarithmic negativity \cite{PhD,VidalNegativity,PlenioNegativity} 
of the same state $\tau$ and obtained
$\log\sqrt2$.  We record the two-mode formula in our conventions and
compare this known result with the exact regularized formation value.
This comparison is independent of the proof of
Theorem~\ref{thm:many-copy-nonadditivity}.

Denote the fermionic partial transpose on Alice by $T_A^{\mathrm f}$ and define
\begin{equation}
  \mathcal E_N^{\mathrm f}(\rho):=
  \log\left\|\rho^{T_A^{\mathrm f}}\right\|_1 .
\end{equation}
For a parity-preserving two-mode state written as in \eqref{eq:two-mode-block-state},
the fermionic partial transpose is, up to the local particle-hole unitary appearing
in the occupation-basis form of Ref.~\cite{ShapourianRyu2019}, represented in the
same block-ordered basis by
\begin{equation}
  \widetilde\rho^{T_A^{\mathrm f}}
  =
  \begin{pmatrix}
    a&-\ii w&0&0\\
    -\ii w^*&d&0&0\\
    0&0&b&-\ii z\\
    0&0&-\ii z^*&c
  \end{pmatrix} .
\end{equation}
The unitary does not change the trace norm.  Therefore
\begin{equation}\label{eq:two-mode-fermionic-negativity-norm}
  \left\|\rho^{T_A^{\mathrm f}}\right\|_1
  =
  \sqrt{(a+d)^2+4|w|^2}
  +
  \sqrt{(b+c)^2+4|z|^2}.
\end{equation}
Equivalently, in the notation of Appendix~\ref{sec:two-mode-formula},
\begin{equation}\label{eq:two-mode-fermionic-log-negativity}
  \mathcal E_N^{\mathrm f}(\rho)
  =
  \log\!\left(
  \sqrt{p_e^2+4|w|^2}
  +
  \sqrt{p_o^2+4|z|^2}
  \right).
\end{equation}
Indeed, a block $B$ with diagonal entries $x,y\ge0$ and off-diagonal
entries $-\ii u,-\ii u^*$ satisfies
\[
 \|B\|_1^2=\Tr(B^\dagger B)+2|\det B|
 =x^2+y^2+2|u|^2+2(xy+|u|^2)
 =(x+y)^2+4|u|^2.
\]

For the state $\tau$ in \eqref{eq:tau-definition}, one has
\begin{equation}
  p_e=p_o=\frac12,
  \qquad
  |z|=|w|=\frac14.
\end{equation}
Thus \eqref{eq:two-mode-fermionic-log-negativity} gives
\begin{equation}\label{eq:tau-fermionic-log-negativity}
  \mathcal E_N^{\mathrm f}(\tau)
  =\log\sqrt2
  =\frac12\log2.
\end{equation}
The fermionic logarithmic negativity is additive under tensor products in this
convention~\cite{ShapourianRyu2019}, hence
\begin{equation}
  \mathcal E_N^{\mathrm f}(\tau^{\otimes m})
  =\frac m2\log2.
\end{equation}
Combining this with Corollary~\ref{cor:regularized-tau} yields
\begin{equation}
  \lim_{m\to\infty}\frac1m \EF^{\mathrm{phys}}(\tau^{\otimes m})
  =\lim_{m\to\infty}\frac1m \EF^{\fG}(\tau^{\otimes m})
  =\mathcal E_N^{\mathrm f}(\tau).
\end{equation}
Thus, for this particular Gaussian state, the one-copy formation value is twice
the fermionic logarithmic negativity, while the regularized fermionic Gaussian
formation value coincides with it.  This equality is special to the example and
is not claimed as a general identity between $\EF^{\fG}$ and fermionic logarithmic
negativity.

\section{Additivity of the physical-extension squashed quantity}
\label{app:squashed-entanglement}

For comparison, the extension-based quantity
\begin{equation}\label{eq:physical-squashed-definition}
 E_{\mathrm{sq}}^{\mathrm{phys}}(A:B)_\rho
 :=\frac12\inf_{\substack{\omega_{A,B,E}\,\mathrm{physical}\\
                         \omega_{A,B}=\rho}}
 I(A:B\mid E)_\omega
\end{equation}
is additive under fermionic 
products.  The infimum allows any finite
number of extension modes and imposes no Gaussian restriction.
All entropies in
$I(A:B\mid E)=S(A,E)+S(B,E)-S(E)-S(A,B,E)$ are entropies of fermionic
restrictions to the indicated mode sets.
The proof is the conditional-mutual-information argument of
Ref.~\cite{ChristandlWinter2004}, using strong subadditivity for CAR
systems; see also the discussion in Ref.~\cite[Sec.~8]{Szalay2021}.
The argument below keeps track of the physicality of each extension.

Let $\omega_{A_1,A_2,B_1,B_2,E}$ be a physical extension of
$\rho_{A_1,B_1}\otimes\sigma_{A_2,B_2}$.
The chain rule and nonnegativity of conditional mutual information give
\begin{align}
 I(A_1,A_2:B_1,B_2\mid E)_\omega
 &=I(A_1:B_1\mid E)_\omega
   +I(A_1:B_2\mid E,B_1)_\omega\nonumber\\
 &\quad+I(A_2:B_1\mid E,A_1)_\omega
   +I(A_2:B_2\mid E,A_1,B_1)_\omega\nonumber\\
 &\geq I(A_1:B_1\mid E)_\omega
       +I(A_2:B_2\mid E,A_1,B_1)_\omega\nonumber\\
 &\geq2E_{\mathrm{sq}}^{\mathrm{phys}}(\rho)
       +2E_{\mathrm{sq}}^{\mathrm{phys}}(\sigma).
 \label{eq:physical-squashed-lower-bound}
\end{align}
In the last line, $\omega_{A_1,B_1,E}$ is an admissible extension of
$\rho$, and the entire state, with $E,A_1,B_1$ as extension system, is an
admissible extension of $\sigma$.  Fermionic restrictions of physical
states are physical, so both applications respect
Eq.~\eqref{eq:physical-squashed-definition}.
Taking the infimum gives the lower bound for additivity.

For the reverse bound, choose physical extensions of $\rho$ and $\sigma$
with conditional mutual informations arbitrarily close to their
respective infima.  Their fermionic product is a physical extension of
$\rho\otimes\sigma$, and entropy additivity makes its conditional
mutual information the sum of the two original values.  Hence
\begin{equation}\label{eq:physical-squashed-additivity}
 E_{\mathrm{sq}}^{\mathrm{phys}}(\rho\otimes\sigma)
 =E_{\mathrm{sq}}^{\mathrm{phys}}(\rho)
  +E_{\mathrm{sq}}^{\mathrm{phys}}(\sigma).
\end{equation}
This comparison is independent of the formation and communication
proofs in the main text.

\section{Proof of Theorem \ref{thm:quantum-capacity}}
\label{Proof}

In this section, we 
prove 
Theorem \ref{thm:quantum-capacity}.

\begin{proof}
This proof will require some preparation. 
For what follows, write $r_j:=\gamma_{2j-1}$ for the $m$ retained Majoranas, and let
$\mathcal A_m$ be the complex algebra spanned by their even monomials,
including the identity.  Equation~\eqref{eq:channel-monomial-action}
and physicality show that every output belongs to $\mathcal A_m$ and
that every element of this algebra is fixed by $\Phi_m^*$.
The even complex Clifford algebra has the representation
\begin{equation}\label{eq:retained-even-algebra}
 \mathcal A_m\cong
 \begin{cases}
  \C,&m=1,\\
  M_{2^q}(\C),&m=2q+1,\quad q\geq1,\\
  M_{2^{q-1}}(\C)\oplus M_{2^{q-1}}(\C),
      &m=2q,\quad q\geq1.
 \end{cases}
\end{equation}
Here $M_d(\C)$ denotes the algebra of $d\times d$ matrices.
For odd $m$, the $2q$ operators $\ii r_jr_m$, $j<m$, are
anticommuting Hermitian unitaries generating $M_{2^q}(\C)$.
For even $m$, the central Hermitian unitary
$\ii^q r_1\cdots r_{2q}$ splits the even algebra into two sectors;
in each, $\ii r_jr_{2q-1}$, $1\leq j\leq2q-2$, generate
$M_{2^{q-1}}(\C)$.  Independence of the Majorana monomials gives
the displayed algebra dimensions and excludes additional relations.

In its representation on the output Hilbert space, this algebra acts as
\begin{equation}\label{eq:logical-algebra-multiplicity}
 \bigoplus_s\bigl(M_{d_s}(\C)\otimes\id_{g_s}\bigr),
 \qquad d_s\leq2^{q_m},
\end{equation}
with multiplicities $g_s$.  Hence, including an arbitrary reference,
every encoded channel output has the form
\begin{equation}\label{eq:reference-logical-output}
 \omega_{R,B}=\bigoplus_s p_s\,\omega_{R,L_s}
                  \otimes\frac{\id_{g_s}}{g_s},
 \qquad \dim L_s=d_s.
\end{equation}
The reference is encoded at fixed parity, so the fermionic extension
acts as the identity on that logical reference.  The assertion follows
also by expanding each reference matrix coefficient in physical input
operators and applying Eq.~\eqref{eq:channel-monomial-action}.

Every pure vector on $R L_s$ has Schmidt rank at most $d_s$.
Decomposing the states in Eq.~\eqref{eq:reference-logical-output}
and the multiplicity factors therefore shows that $\omega_{R,B}$ has
Schmidt number at most $2^{q_m}$, meaning that it admits a 
pure-state
decomposition with this Schmidt-rank bound.  A receiver's local
decoding cannot increase 
Schmidt number. By the Cauchy--Schwarz inequality, every normalized vector
$\ket{\psi}$ of Schmidt rank at most $d$ satisfies
\[
 \bigl|\braket{\Omega_D}{\psi}\bigr|^2
 \leq \min\{1,d/D\}.
\]
Averaging over a pure-state decomposition gives the same bound
for states of Schmidt number at most $d$. Applying this bound to
the decoded state proves Eq.~\eqref{eq:quantum-fidelity-converse}.
Exact transmission therefore requires $D\leq2^{q_m}$.

For achievability, first let $m=2q+1$.  With $P$ the input parity,
take the simultaneous $+1$ eigenspace of the commuting observables
\begin{equation}\label{eq:quantum-code-stabilizers}
 P,\qquad H_j:=\ii\gamma_{4j-2}\gamma_{4j},\quad 1\leq j\leq q.
\end{equation}
Their $q+1$ independent constraints give a code subspace
\begin{equation}\label{eq:quantum-code-projector}
 \Pi_{\mathrm{code}}=
   \frac{\id+P}{2}\prod_{j=1}^q\frac{\id+H_j}{2},
 \qquad \rank\Pi_{\mathrm{code}}=2^q.
\end{equation}
Independence follows because every nonidentity product of these
stabilizers is a nonidentity Majorana monomial and has zero trace.
Every $H_j$ contains only erased Majoranas and commutes with
$\mathcal A_m$, as does $P$.  Thus the code is invariant under
$\mathcal A_m$.  Its dimension equals the irreducible representation
dimension $2^q$, so the restricted algebra is the full logical matrix
algebra.  All its observables are fixed by the adjoint channel.
Encoding a logical state in this subspace and tracing out the output
multiplicity subsystem consequently recover the state exactly,
including its correlations with any reference.

This decoder is compatible with parity superselection.  Since output
parity commutes with $\mathcal A_m$, it acts only on the multiplicity
subsystem.  Choose a multiplicity basis of parity eigenvectors and embed
the recovered logical register in an even-parity subspace of $q+1$
output modes.  Each decoding Kraus operator, obtained by contracting
with one multiplicity basis vector, then has definite parity.  Such
operators admit an even dilation by recording their parity in a local
environment, as in Appendix~\ref{app:operational-rates}.  The encoder
likewise maps an even-parity logical register into the even code
subspace.  The construction makes no Gaussian restriction on these
encoding and decoding operations.

For even $m=2q$, use the odd-block code on the first $m-1$ modes,
append a vacuum mode, and discard its output.  This transmits
$q-1=q_m$ logical qubits exactly.  The trivial one-dimensional code
also covers $m=1$. Since $q_m/m\to1/2$, these exact codes achieve the rate in
Eq.~\eqref{eq:quantum-capacity}. Conversely, if
$D_m\geq2^{m(1/2+\delta)}$ for some $\delta>0$, then
Eq.~\eqref{eq:quantum-fidelity-converse} gives
$F_{\mathrm e}\leq2^{-m\delta}$. This proves the strong converse.
\end{proof}

For three uses, the retained bilinears
$X_L=\ii\gamma_1\gamma_3$ and $Z_L=\ii\gamma_3\gamma_5$
are parity-even logical Pauli operators: they square to the identity,
anticommute, and are fixed together with their product.  Thus three
uses transmit one logical qubit perfectly, whereas one or two uses
transmit none.  At even blocklength $m=2q$, the two central sectors in
Eq.~\eqref{eq:retained-even-algebra} also explain the classical optimum:
one classical sector label accompanies a register of $q-1$ qubits.
The asymptotic capacities follow from this retained algebra and the
fermionic composition, rather than ordinary qubit tensor powers.

\bibliography{fermionic_gaussian}

@article{ChitambarGour2019,
  author  = {Chitambar, E. and Gour, G.},
  title   = {Quantum resource theories},
  journal = {Rev. Mod. Phys.},
  volume  = {91},
  number  = {2},
  pages   = {025001},
  year    = {2019},
  doi     = {10.1103/RevModPhys.91.025001}
}

@article{EisertScheelPlenio2002,
  author  = {Eisert, J. and Scheel, S. and Plenio, M. B.},
  title   = {Distilling {Gaussian} states with {Gaussian} operations
             is impossible},
  journal = {Phys. Rev. Lett.},
  volume  = {89},
  pages   = {137903},
  year    = {2002},
  doi     = {10.1103/PhysRevLett.89.137903}
}

@article{Fiurasek2002,
  author  = {Fiur{\'a}{\v{s}}ek, J.},
  title   = {{Gaussian} transformations and distillation of entangled
             {Gaussian} states},
  journal = {Phys. Rev. Lett.},
  volume  = {89},
  pages   = {137904},
  year    = {2002},
  doi     = {10.1103/PhysRevLett.89.137904}
}

@article{GiedkeCirac2002,
  author  = {Giedke, G. and Cirac, J. I.},
  title   = {Characterization of {Gaussian} operations and distillation
             of {Gaussian} states},
  journal = {Phys. Rev. A},
  volume  = {66},
  pages   = {032316},
  year    = {2002},
  doi     = {10.1103/PhysRevA.66.032316}
}

@misc{LiuEtAlPurification2025,
  author        = {Liu, Zhenhuan and Du, Zhenyu and Eisert, Jens
                   and Cai, Zhenyu and Liu, Zi-Wen},
  title         = {No universal purification in quantum mechanics},
  year          = {2025},
  eprint        = {2509.21111},
  archivePrefix = {arXiv} 
}

@article{PlenioVirmani2007,
  author  = {Plenio, M. B. and Virmani, S.},
  title   = {An introduction to entanglement measures},
  journal = {Quant. Inf. Comp.},
  volume  = {7},
  number  = {1--2},
  pages   = {1--51},
  year    = {2007},
  doi     = {10.26421/QIC7.1-2-1}
}

@article{Araki1970,
  author = {Araki, Huzihiro},
  title = {On Quasifree States of {CAR} and {Bogoliubov} Automorphisms},
  journal = {Publications of the Research Institute for Mathematical Sciences},
  volume = {6},
  number = {3},
  pages = {385--442},
  year = {1971},
  doi = {10.2977/prims/1195193913}
}

@article{Bravyi2005FLO,
  author = {Bravyi, Sergey},
  title = {Lagrangian representation for fermionic linear optics},
  journal = {Quant. Inf. Comp.},
  volume = {5},
  number = {3},
  pages = {216--238},
  year = {2005},
  doi = {10.26421/QIC5.3-3},
  eprint = {quant-ph/0404180},
  archivePrefix = {arXiv}
}

@article{Szalay2021,
  author = {Szalay, Szil{\'a}rd and Zimbor{\'a}s, Zolt{\'a}n and M{\'a}t{\'e}, Mih{\'a}ly and Barcza, Gergely and Schilling, Christian and Legeza, {\"O}rs},
  title = {Fermionic systems for quantum information people},
  journal = {J. Phys. A},
  volume = {54},
  number = {39},
  pages = {393001},
  year = {2021},
  doi = {10.1088/1751-8121/ac0646},
  eprint = {2006.03087},
  archivePrefix = {arXiv},
  optprimaryClass = {quant-ph}
}

@article{BoteroReznik2003,
  author = {Botero, Alonso and Reznik, Benni},
  title = {Modewise entanglement of {Gaussian} states},
  journal = {Phys. Rev. A},
  volume = {67},
  pages = {052311},
  year = {2003},
  doi = {10.1103/PhysRevA.67.052311},
  eprint = {quant-ph/0209026},
  archivePrefix = {arXiv}
}

@article{BoteroReznik2004,
  author = {Botero, Alonso and Reznik, Benni},
  title = {{BCS}-like modewise entanglement of fermion {Gaussian} states},
  journal = {Phys. Lett. A},
  volume = {331},
  number = {1--2},
  pages = {39--44},
  year = {2004},
  doi = {10.1016/j.physleta.2004.08.037},
  eprint = {quant-ph/0404176},
  archivePrefix = {arXiv}
}

@article{SpeeSchwaigerGiedkeKraus2018,
  author = {Spee, Cornelia and Schwaiger, Katharina and Giedke, G{\'e}za and Kraus, Barbara},
  title = {Mode entanglement of {Gaussian} fermionic states},
  journal = {Phys. Rev. A},
  volume = {97},
  pages = {042325},
  year = {2018},
  doi = {10.1103/PhysRevA.97.042325},
  eprint = {1712.07560},
  archivePrefix = {arXiv},
  optprimaryClass = {quant-ph}
}

@article{VerstraeteCirac2003,
  author = {Verstraete, Frank and Cirac, J. Ignacio},
  title = {Quantum nonlocality in the presence of superselection rules and data hiding protocols},
  journal = {Phys. Rev. Lett.},
  volume = {91},
  pages = {010404},
  year = {2003},
  doi = {10.1103/PhysRevLett.91.010404},
  eprint = {quant-ph/0302039},
  archivePrefix = {arXiv}
}

@article{SchuchVerstraeteCirac2004Resources,
  author = {Schuch, Norbert and Verstraete, Frank and Cirac, J. Ignacio},
  title = {Nonlocal Resources in the Presence of Superselection Rules},
  journal = {Phys. Rev. Lett.},
  volume = {92},
  pages = {087904},
  year = {2004},
  doi = {10.1103/PhysRevLett.92.087904},
  eprint = {quant-ph/0310124},
  archivePrefix = {arXiv}
}

@article{WisemanVaccaro2003,
  author = {Wiseman, H. M. and Vaccaro, J. A.},
  title = {Entanglement of indistinguishable particles shared between two parties},
  journal = {Phys. Rev. Lett.},
  volume = {91},
  pages = {097902},
  year = {2003},
  doi = {10.1103/PhysRevLett.91.097902},
  eprint = {quant-ph/0210002},
  archivePrefix = {arXiv}
}

@article{FriisLeeBruschi2013,
  author = {Friis, Nicolai and Lee, Antony R. and Bruschi, David Edward},
  title = {Fermionic-mode entanglement in quantum information},
  journal = {Phys. Rev. A},
  volume = {87},
  pages = {022338},
  year = {2013},
  doi = {10.1103/PhysRevA.87.022338},
  eprint = {1211.7217},
  archivePrefix = {arXiv},
  optprimaryClass = {quant-ph}
}

@article{PhysRevA.76.022311,
  author = {Ba{\~n}uls, Mari-Carmen and Cirac, J. Ignacio and Wolf, Michael M.},
  title = {Entanglement in fermionic systems},
  journal = {Phys. Rev. A},
  volume = {76},
  pages = {022311},
  year = {2007},
  doi = {10.1103/PhysRevA.76.022311},
  eprint = {0705.1103},
  archivePrefix = {arXiv},
  optprimaryClass = {quant-ph}
}

@article{Bennett1996MixedState,
  author = {Bennett, Charles H. and DiVincenzo, David P. and Smolin, John A. and Wootters, William K.},
  title = {Mixed-state entanglement and quantum error correction},
  journal = {Phys. Rev. A},
  volume = {54},
  pages = {3824--3851},
  year = {1996},
  doi = {10.1103/PhysRevA.54.3824}
}

@article{HaydenHorodeckiTerhal2001,
  author = {Hayden, Patrick M. and Horodecki, Micha{\l} and Terhal, Barbara M.},
  title = {The asymptotic entanglement cost of preparing a quantum state},
  journal = {J. Phys. A},
  volume = {34},
  number = {35},
  pages = {6891--6898},
  year = {2001},
  doi = {10.1088/0305-4470/34/35/314},
  eprint = {quant-ph/0008134},
  archivePrefix = {arXiv}
}

@article{Shor2004,
  author = {Shor, Peter W.},
  title = {Equivalence of additivity questions in quantum information theory},
  journal = {Comm. Math. Phys.},
  volume = {246},
  pages = {453--472},
  year = {2004},
  doi = {10.1007/s00220-003-0981-7},
  eprint = {quant-ph/0305035},
  archivePrefix = {arXiv}
}

@article{Hastings2009,
  author = {Hastings, M. B.},
  title = {Superadditivity of communication capacity using entangled inputs},
  journal = {Nature Physics},
  volume = {5},
  pages = {255--257},
  year = {2009},
  doi = {10.1038/nphys1224},
  eprint = {0809.3972},
  archivePrefix = {arXiv}
}

@article{Horodecki2009,
  author = {Horodecki, Ryszard and Horodecki, Pawe{\l} and Horodecki, Micha{\l} and Horodecki, Karol},
  title = {Quantum entanglement},
  journal = {Rev. Mod. Phys.},
  volume = {81},
  pages = {865--942},
  year = {2009},
  doi = {10.1103/RevModPhys.81.865},
  eprint = {quant-ph/0702225},
  archivePrefix = {arXiv}
}

@article{WolfGiedkeKruegerWernerCirac2004,
  author = {Wolf, M. M. and Giedke, G. and Kr{\"u}ger, O. and Werner, R. F. and Cirac, J. I.},
  title = {Gaussian entanglement of formation},
  journal = {Phys. Rev. A},
  volume = {69},
  pages = {052320},
  year = {2004},
  doi = {10.1103/PhysRevA.69.052320},
  eprint = {quant-ph/0306177},
  archivePrefix = {arXiv}
}

@article{GreplovaGiedke2018,
  author = {Greplov{\'a}, Eli{\v{s}}ka and Giedke, G{\'e}za},
  title = {Degradability of fermionic {Gaussian} channels},
  journal = {Phys. Rev. Lett.},
  volume = {121},
  pages = {200501},
  year = {2018},
  doi = {10.1103/PhysRevLett.121.200501},
  eprint = {1604.01954},
  archivePrefix = {arXiv},
  optprimaryClass = {quant-ph}
}

@article{SerafiniEisertWolf2005,
  author = {Serafini, Alessio and Eisert, Jens and Wolf, Michael M.},
  title = {Multiplicativity of maximal output purities of {Gaussian} channels under {Gaussian} inputs},
  journal = {Phys. Rev. A},
  volume = {71},
  pages = {012320},
  year = {2005},
  doi = {10.1103/PhysRevA.71.012320},
  eprint = {quant-ph/0406065},
  archivePrefix = {arXiv}
}

@article{NegariSalek2025,
  author = {Negari, Amir-Reza and Salek, Farzin},
  title = {Extendibility of Fermionic {Gaussian} States},
  journal = {Phys. Rev. Lett.},
  volume = {137},
  number = {8},
  pages = {080201},
  year = {2026},
  doi = {10.1103/65dd-448y},
  eprint = {2508.18532},
  archivePrefix = {arXiv},
  optprimaryClass = {quant-ph}
}

@article{ShapourianRyu2019,
  author = {Shapourian, Hassan and Ryu, Shinsei},
  title = {Entanglement negativity of fermions: Monotonicity, separability criterion, and classification of few-mode states},
  journal = {Phys. Rev. A},
  volume = {99},
  pages = {022310},
  year = {2019},
  doi = {10.1103/PhysRevA.99.022310},
  eprint = {1804.08637},
  archivePrefix = {arXiv},
  optprimaryClass = {quant-ph}
}

@article{Wootters1998,
  author = {Wootters, William K.},
  title = {Entanglement of formation of an arbitrary state of two qubits},
  journal = {Phys. Rev. Lett.},
  volume = {80},
  pages = {2245--2248},
  year = {1998},
  doi = {10.1103/PhysRevLett.80.2245},
  eprint = {quant-ph/9709029},
  archivePrefix = {arXiv}
}

@article{MatsumotoShimonoWinter2004,
  author = {Matsumoto, Keiji and Shimono, Toshiyuki and Winter, Andreas},
  title = {Remarks on Additivity of the {Holevo} Channel Capacity and of the Entanglement of Formation},
  journal = {Comm. Math. Phys.},
  volume = {246},
  number = {3},
  pages = {427--442},
  year = {2004},
  doi = {10.1007/s00220-003-0919-0},
  eprint = {quant-ph/0206148},
  archivePrefix = {arXiv}
}

@article{ChristandlWinter2004,
  author = {Christandl, Matthias and Winter, Andreas},
  title = {{``Squashed entanglement''}---An Additive Entanglement Measure},
  journal = {J. Math. Phys.},
  volume = {45},
  number = {3},
  pages = {829--840},
  year = {2004},
  doi = {10.1063/1.1643788},
  eprint = {quant-ph/0308088},
  archivePrefix = {arXiv}
}

@article{SchuchVerstraeteCirac2004,
  author = {Schuch, Norbert and Verstraete, Frank and Cirac, J. Ignacio},
  title = {Quantum entanglement theory in the presence of superselection rules},
  journal = {Phys. Rev. A},
  volume = {70},
  number = {4},
  pages = {042310},
  year = {2004},
  doi = {10.1103/PhysRevA.70.042310},
  eprint = {quant-ph/0404079},
  archivePrefix = {arXiv}
}

@Article{VidalNegativity,
  Author                   = {G. Vidal and R. F. Werner},
  Journal                  = {Phys. Rev. A},
  DOI={10.1103/PhysRevA.65.032314},
    title ={Computable measure of entanglement},
  Year                     = {2002},
  Pages                    = {032314},
  Volume                   = {65}
}

@phdthesis{PhD,
  Author                   = {J. Eisert},
    title ={Entanglement in quantum information theory},
  Year                     = {2001},
  note                    = {{PhD thesis, University of Potsdam}}
}

@Article{PlenioNegativity,
  Author                   = {M. B. Plenio},
  Journal                  = {Phys. Rev. Lett.},
  Year                     = {2005},
  DOI={10.1103/PhysRevLett.95.090503}, 
     title ={Logarithmic negativity: A full entanglement monotone that is not convex},
  Pages                    = {090503},
  Volume                   = {95}
}

@article{Lami2016monogamy,
   author  = {Lami, Ludovico and Hirche, Christoph and Adesso, Gerardo
              and Winter, Andreas},
   title   = {{Schur} Complement Inequalities for Covariance Matrices and
              Monogamy of Quantum Correlations},
   journal = {Phys. Rev. Lett.},
   volume  = {117},
   pages   = {220502},
   year    = {2016},
   doi     = {10.1103/PhysRevLett.117.220502},
   eprint  = {1607.05285},
   archivePrefix = {arXiv},
   optprimaryClass = {quant-ph}
 }

@article{Lami2017logdet,
   author  = {Lami, Ludovico and Hirche, Christoph and Adesso, Gerardo
              and Winter, Andreas},
   title   = {From Log-Determinant Inequalities to {G}aussian
              Entanglement via Recoverability Theory},
   journal = {IEEE Transactions on Information Theory},
   volume  = {63},
   number  = {11},
   pages   = {7553--7568},
   year    = {2017},
   doi     = {10.1109/TIT.2017.2737546},
   eprint  = {1703.06149},
   archivePrefix = {arXiv},
   optprimaryClass = {quant-ph}
 }

@article{TerhalDiVincenzo2002,
  author = {Terhal, Barbara M. and DiVincenzo, David P.},
  title = {Classical Simulation of Noninteracting-Fermion Quantum Circuits},
  journal = {Phys. Rev. A},
  volume = {65},
  pages = {032325},
  year = {2002},
  doi = {10.1103/PhysRevA.65.032325},
  eprint = {quant-ph/0108010},
  archivePrefix = {arXiv}
}

@article{GluzaKrumnowFriesdorfGogolinEisert2016,
  author = {Gluza, Marek and Krumnow, Christian and Friesdorf, Marcin and
            Gogolin, Christian and Eisert, Jens},
  title = {{Equilibration via Gaussification in fermionic lattice systems}},
  journal = {Phys. Rev. Lett.},
  volume = {117},
  pages = {190602},
  year = {2016},
  doi = {10.1103/PhysRevLett.117.190602},
  eprint = {1601.00671},
  archivePrefix = {arXiv},
  optprimaryClass = {cond-mat.quant-gas}
}

@article{BravyiGosset2017,
  author = {Bravyi, Sergey and Gosset, David},
  title = {Complexity of Quantum Impurity Problems},
  journal = {Comm. Math. Phys.},
  volume = {356},
  number = {2},
  pages = {451--500},
  year = {2017},
  doi = {10.1007/s00220-017-2976-9},
  eprint = {1609.00735},
  archivePrefix = {arXiv},
  optprimaryClass = {quant-ph}
}

@article{Dariano2014,
  author = {D'Ariano, Giacomo Mauro and Manessi, Franco and Perinotti, Paolo and Tosini, Alessandro},
  title = {Fermionic computation is non-local tomographic and violates monogamy of entanglement},
  journal = {EPL},
  volume = {107},
  number = {2},
  pages = {20009},
  year = {2014},
  doi = {10.1209/0295-5075/107/20009},
  eprint = {1307.7902},
  archivePrefix = {arXiv},
  optprimaryClass = {quant-ph}
}

@article{Dariano2014Feynman,
  author = {D'Ariano, Giacomo Mauro and Manessi, Franco and Perinotti, Paolo and Tosini, Alessandro},
  title = {The {Feynman} problem and fermionic entanglement: Fermionic theory versus qubit theory},
  journal = {Int. J. Mod. Phys. A},
  volume = {29},
  pages = {1430025},
  year = {2014},
  doi = {10.1142/S0217751X14300257},
  eprint = {1403.2674},
  archivePrefix = {arXiv},
  optprimaryClass = {quant-ph}
}

@misc{Bravyi2005Capacity,
  author = {Bravyi, Sergey},
  title = {Classical capacity of fermionic product channels},
  year = {2005},
  eprint = {quant-ph/0507282},
  archivePrefix = {arXiv}
}

@article{Debarba2020,
  author = {Debarba, Tiago and Iemini, Fernando and Giedke, G{\'e}za and Friis, Nicolai},
  title = {Teleporting quantum information encoded in fermionic modes},
  journal = {Phys. Rev. A},
  volume = {101},
  number = {5},
  pages = {052326},
  year = {2020},
  doi = {10.1103/PhysRevA.101.052326},
  eprint = {2002.08201},
  archivePrefix = {arXiv},
  optprimaryClass = {quant-ph}
}

@article{AudenaertPlenio2005,
  author = {Audenaert, Koenraad M. R. and Plenio, Martin B.},
  title = {Entanglement on mixed stabilizer states: Normal forms and reduction procedures},
  journal = {New J. Phys.},
  volume = {7},
  number = {1},
  pages = {170},
  year = {2005},
  doi = {10.1088/1367-2630/7/1/170},
  eprint = {quant-ph/0505036},
  archivePrefix = {arXiv}
}

@article{DeMeloCwiklinskiTerhal2013,
  author = {de Melo, Fernando and {\'C}wikli{\'n}ski, Piotr and Terhal, Barbara M.},
  title = {The power of noisy fermionic quantum computation},
  journal = {New J. Phys.},
  volume = {15},
  number = {1},
  pages = {013015},
  year = {2013},
  doi = {10.1088/1367-2630/15/1/013015},
  eprint = {1208.5334},
  archivePrefix = {arXiv},
  optprimaryClass = {quant-ph}
}

@article{EisertEislerZimboras2018,
  author = {Eisert, Jens and Eisler, Viktor and Zimbor{\'a}s, Zolt{\'a}n},
  title = {Entanglement negativity bounds for fermionic {Gaussian} states},
  journal = {Phys. Rev. B},
  volume = {97},
  pages = {165123},
  year = {2018},
  doi = {10.1103/PhysRevB.97.165123},
  eprint = {1611.08007},
  archivePrefix = {arXiv},
  optprimaryClass = {quant-ph}
}

@article{EislerZimboras2015,
  author = {Eisler, Viktor and Zimbor{\'a}s, Zolt{\'a}n},
  title = {On the partial transpose of fermionic {Gaussian} states},
  journal = {New J. Phys.},
  volume = {17},
  number = {5},
  pages = {053048},
  year = {2015},
  doi = {10.1088/1367-2630/17/5/053048},
  eprint = {1502.01369},
  archivePrefix = {arXiv},
  optprimaryClass = {cond-mat.stat-mech}
}

@article{BravyiGossetKoenigTemme2019,
  author = {Bravyi, Sergey and Gosset, David and Koenig, Robert and Temme, Kristan},
  title = {Approximation algorithms for quantum many-body problems},
  journal = {J. Math. Phys.},
  volume = {60},
  number = {3},
  pages = {032203},
  year = {2019},
  doi = {10.1063/1.5085428},
  eprint = {1808.01734},
  archivePrefix = {arXiv},
  optprimaryClass = {quant-ph}
}

@article{Dasenbrook2016,
  author = {Dasenbrook, David and Bowles, Joseph and Brask, Jonatan Bohr and
            Hofer, Patrick P. and Flindt, Christian and Brunner, Nicolas},
  title = {Single-electron entanglement and nonlocality},
  journal = {New J. Phys.},
  volume = {18},
  pages = {043036},
  year = {2016},
  doi = {10.1088/1367-2630/18/4/043036},
  eprint = {1511.04450},
  archivePrefix = {arXiv},
  optprimaryClass = {quant-ph}
}

@article{Clauser1969,
  author = {Clauser, John F. and Horne, Michael A. and Shimony, Abner and Holt, Richard A.},
  title = {Proposed Experiment to Test Local Hidden-Variable Theories},
  journal = {Phys. Rev. Lett.},
  volume = {23},
  pages = {880--884},
  year = {1969},
  doi = {10.1103/PhysRevLett.23.880}
}

@misc{NegariEtAl2026Approximation,
  author = {Negari, Amir-Reza and Salek, Farzin and Zimbor{\'a}s, Zolt{\'a}n and Harrow, Aram and Hayden, Patrick and Eisert, Jens},
  title = {Approximation theorems for fermionic {Gaussian} states},
  year = {2026},
  eprint = {2610.01860},
  archivePrefix = {arXiv},
  primaryClass = {quant-ph},
  url = {https://arxiv.org/abs/2610.01860v1}
}

@misc{KhanianNegari2026Purification,
  author = {Baghali Khanian, Zahra and Negari, Amir-Reza},
  title = {Entanglement of purification for {Werner} states: canonical purification and nonadditivity},
  year = {2026},
  eprint = {2610.03630},
  archivePrefix = {arXiv},
  primaryClass = {quant-ph},
  url = {https://arxiv.org/abs/2610.03630v1}
}

@misc{NegariKhanian2026RenyiPurification,
  author = {Negari, Amir-Reza and Baghali Khanian, Zahra},
  title = {{R\'enyi} Entanglement of Purification Is Non-additive},
  year = {2026},
  eprint = {2608.28897},
  archivePrefix = {arXiv},
  primaryClass = {quant-ph},
  url = {https://arxiv.org/abs/2608.28897v1}
}

@misc{FarajiKhanian2026Renyi2Purification,
  author = {Faraji, Shokoufe and Baghali Khanian, Zahra},
  title = {Additivity Results for the {R\'enyi}-2 Entanglement of Purification},
  year = {2026},
  eprint = {2605.15439},
  archivePrefix = {arXiv},
  primaryClass = {quant-ph},
  url = {https://arxiv.org/abs/2605.15439v1}
}

@misc{kalarde2026fermionsfundamentallynonlocalbosons,
      title={Fermions are fundamentally more nonlocal than Bosons}, 
      author={Fatemeh Moradi Kalarde and Sadra Boreiri and Xiangling Xu and Lucas Tendick and Salman Beigi and Paolo Perinotti and Tommaso Guaita and Marc-Olivier Renou},
      year={2026},
      eprint={2606.12363},
      archivePrefix={arXiv},
      primaryClass={quant-ph},
      url={https://arxiv.org/abs/2606.12363}, 
}
\end{document}